\newcommand{\para}[1]{\subparagraph*{#1}}
\newcommand{\modulo}{\operatorname{mod}}
\newcommand{\ceil}[1]{\left\lceil{#1}\right\rceil}
\newcommand{\floor}[1]{\left\lfloor{#1}\right\rfloor}
\newcommand{\cc}{\textsc{Congested Clique }}
\newcommand{\eps}{\varepsilon}
\renewcommand{\S}{\mathcal{S}}
\newcommand{\B}{\mathcal{B}}
\newcommand{\T}{\mathcal{C}}
\newcommand{\K}{\mathcal{K}}
\newcommand{\SA}{\mathsf{SA}}
\newcommand{\LCP}{\mathsf{LCP}}
\newcommand{\RMQ}{\mathsf{RMQ}}
\newcommand{\rank}{\mathsf{rank}}
\newcommand{\PM}{\mathsf{PM}}
\newcommand{\N}{\mathbb{N}}
\newtheorem{problem}[definition]{Problem}
\newtheorem{fact}[definition]{Fact}
\crefname{problem}{Problem}{Problems}
\crefname{algorithm}{Algorithm}{Algorithms}
\newcommand{\sub}{\subseteq}
\definecolor{niceblue}{rgb}{.392,.584,.929}
\author{Shay Golan}{Reichman University and University of Haifa, Israel \and \url{https://sites.google.com/view/shaygolan}}{golansh1@biu.ac.il}{https://orcid.org/0000-0001-8357-2802}{supported by Israel Science Foundation grant 810/21.}
\author{Matan Kraus}{Bar Ilan Univesity, Israel}{matan3@gmail.com}{https://orcid.org/0000-0002-2989-1113}{supported by the ISF grant no. 1926/19, by the BSF grant 2018364, and by the ERC grant MPM under the EU's Horizon 2020 Research and Innovation Programme (grant no. 683064).}
\authorrunning{S. Golan and M. Kraus}
\keywords{String Sorting, Pattern Matching, Suffix Array,Congested Clique, Sorting}
\begin{document}

    \title{String Problems in the Congested Clique Model}

\maketitle
	\begin{abstract}
    In this paper we present algorithms for several string problems in the \cc model.
    In the \cc model, $n$ nodes (computers) are used to solve some problem.
    The input to the problem is distributed among the nodes, and the communication between the nodes is conducted in \emph{rounds}.
    In each round, every node is allowed to send an $O(\log n)$-bit message to every other node in the network.

    We consider three fundamental string problems in the \cc model.
    First, we present an $O(1)$ rounds algorithm for string sorting that supports strings of arbitrary length.
    Second, we present an $O(1)$ rounds combinatorial pattern matching algorithm.
    Finally, we present an $O(\log\log n)$ rounds algorithm for the computation of the suffix array and the corresponding Longest Common Prefix array of a given string.

	\end{abstract}

\clearpage
\setcounter{page}{1}

	\section{Introduction }\label{sec:intro}

    In the \cc model~\cite{LPPP03,Lenzen13,PT11}  $n$ nodes (computers) are used to solve some problem.
    The input to the problem is spread among the nodes, and the communication among the nodes is done in \emph{rounds}.
	In each round, every node is allowed to send one message to every other node in the network.
	Typically, the size of every message is $O(\log n)$ bits, and messages to different nodes can be different.
	Usually, the input of every node is assumed to be of size $O(n)$ words, and so, can be sent to other nodes in one round, and one can hope for $O(1)$-round algorithms.
	In this model, the local computation time is ignored and the efficiency of an algorithm is measured by the number of \emph{communication rounds} made by the algorithm.

    One of the fundamental tasks in the \cc model is sorting of elements.
    In one of the seminal results for this model, Lenzen~\cite{Lenzen13} shows a sorting algorithm that run in $O(1)$ rounds.
    Lenzen's algorithm supports keys of size $O(\log n)$ bits.
    We show how to generalize Lenzen's sorting algorithm to support keys of size $O(n^{1-\eps})$ (for some constant $\eps$).
    Using this sorting algorithm we introduce efficient \cc algorithms for several string problems.

	\para{String sorting (\cref{sec:sorting_strings}).}
	The first algorithm is for the string sorting problem \cite{BS97,KR08,Bingmann18}.
	This is a special case of the large objects sorting problem, where the order defined on the objects is the lexicographical order.
    We introduce an $O(1)$ rounds algorithm for this specific order, even if there are strings of length $\omega(n)$.

	\para{Pattern matching (\cref{sec:pm}).}
	The second algorithm we present is an $O(1)$ rounds algorithm for pattern matching, which uses the string sorting algorithm.
	In the pattern matching problem the input is two strings, a pattern $P$ and a text $T$, and the goal is to find all the occurrences of $P$ in $T$.
	Algorithms for this problem were designed since the 1970's \cite{MP70,KMP77,KR87,Weiner73,BM77}.
	In the very related model of Massively Parallel Computing (MPC) (see discussion below), Hajiaghayi et al.~\cite{HSSS21} introduce a pattern matching algorithm that is based on FFT convolutions.
	Their algorithm can be adjusted to an $O(1)$ rounds algorithm in the \cc model.
	However, our algorithm has the advantage of using only combinatorial operations.

	\para{Suffix Array construction and the corresponding LCP array (\cref{sec:SA_and_LCP})}
	The last algorithm we present is an algorithm that constructs the suffix array~\cite{MM93,PST07} ($\SA$) of a given string, together with the corresponding longest common prefix ($\LCP$) array~\cite{KLAAP01,MM93}.
	The suffix array of a string $S$, denoted by $\SA_S$, is a sorted array of all of the suffixes of $S$.
	The $\LCP_S$ array stores for every two lexicographic successive suffixes the length of their longest common prefix.
	It was proved~\cite{MM93,KLAAP01} that the combination of $\SA_S$ with $\LCP_S$ is a powerful tool, that can simulate a bottom-up suffix tree traversal and is useful for solving problems like finding the longest common substring of two strings.
	Our algorithm takes $O(\log\log n)$ rounds.

\para{The input model.}
Most of the problems considered so far in the \cc model are graph problems.
For such problems where the input is a graph, it is very natural to consider partitioning of the input among $n$ nodes.
Each node in the network receives all the information on the neighborhood of one vertex of the input graph.
However, when the input is a set of objects or strings, like in our problems, it is less clear how the input is spread among the $n$ nodes of the \cc network.
We tried to minimize our assumption on the input to get as general algorithms as possible.
Inspired by the standard RAM model, we assume that the input of any problem can be considered as a long array that contains the input (just like the internal memory of the computer).
In the \cc model with $n$ nodes, we assume that the same input is now distributed among the local memories of the nodes (see~\cref{fig:input_model} for the case where the input objects are strings).
So, to get as an input a sequence of objects with a total size of $O(n^2)$ words, we consider their representation in a long array, one after the other, and then partition this array into $n$ pieces, each of length $O(n)$.
The assumption that the input of every node is of length $O(n)$, is consistent with previous problems considered in the \cc model~\cite{AKPR16,Lenzen13}.
A useful assumption we assume for the sake of simplicity is that when the size of every object is bounded by $O(n)$ words, any object is stored only in one node.
This assumption can be guaranteed within an overhead of $O(1)$ rounds.

\begin{figure}[t]
	\begin{center}
		\includegraphics[width=0.9\textwidth]{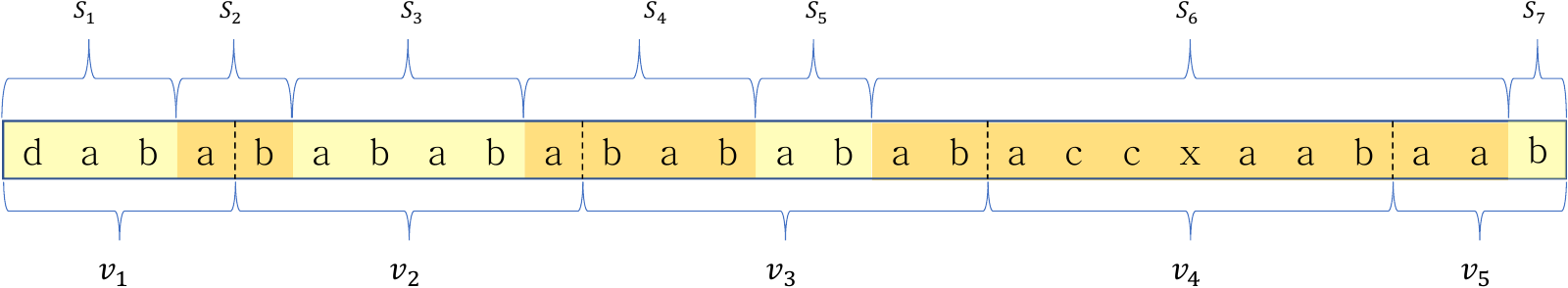}
	\end{center}
	\caption{An example of a sequence of strings, the input is partitioned between nodes $v_1,\dots,v_n$.}
	\label{fig:input_model}
\end{figure}

	\para{Relation between \cc and  MPC.}
	The Massively Parallel Computing (MPC) model~\cite{BKS13,ANOY14} is a very popular model that is useful for the analyzing of the more practical model of MapReduce~\cite{DG04}, from a theoretical point of view.
	In this model, a problem with input of size $O(N)$ words, is solved by $M$ machines, each with memory of size $S$ words such that $M\cdot S=\Theta(N)$.
	The MPC model is synchronous, and in every round each machine sends and receives information, such that the data received by each machine fits its memory size.
	We point out that, as described by Behnezhad et al.~\cite[{Theorem 3.1}]{BDH18}, every \cc algorithm with $\Theta(n^2)$ size input, that uses $O(n)$ space in every node, can be simulated in the MPC model, with $N=n^2$ and $S=\Theta(M)=\Theta(\sqrt N)$.
	Moreover, it is straightforward that every MPC algorithm that works with $S=\Theta(M)$ in $r$ rounds, can be simulated in an MPC instance with $S=\omega(M)$ (but still $M\cdot S=\Theta(N)$) in $r$ rounds since every machine can simulate several machines of the original algorithm.
	As a result, most of the algorithms we introduce in the \cc model implies also algorithms with the same round complexity in the MPC model for $S=\Omega(M)$.
	The only exception is the sorting algorithm for the case of $\eps=0$, which uses $\omega(n)$ memory in each machine (see \cref{sec:sorting_eps_0}).
	We note that the regime of $S=\Omega(M)$ is the most common regime for algorithms in the MPC model, see for example \cite{HSSS21,GGM22,Nowicki21}.

	\subsection{Related Work}

	\para{String Sorting.}
	The string sorting problem was studied in the PRAM model by Hagerup~\cite{Hagerup94}, where he introduces an optimal $O(\log n/\log\log n)$ time algorithm on a CRCW PRAM.
	The problem was also studied in the external I/O emory model by Arge et al.~\cite{AFGV97}.
	Recently, a more practical research was done by Bingman~\cite{Bingmann18}.

	\para{Pattern Matching.}
	In parallel models, on the 1980's, Galil~\cite{Galil85} and Vishkin~\cite{Vishkin85} introduced pattern matching algorithms in the CRCW and CREW PRAM models.
	Later, Breslauer and Galil~\cite{BG90} improved the complexity for CRCW from $O(\log n)$ to $O(\log\log n)$ rounds and show that this round complexity is optimal.

	\para{Suffix Array.}
	In the world of parallel and distributed computing, the problem of $\SA$ construction was studied in several models, both in theory~\cite{KSB06} and in practice~\cite{KS07,KKP15}.
	The most related line of work  is the results of K{\"{a}}rkk{\"{a}}inen et al.~\cite{KSB06} and the improvement for the Bulk Synchronous Parallel (BSP) model by Pace and Tiskin~\cite{PT13}.
	K{\"{a}}rkk{\"{a}}inen et al.~\cite{KSB06} introduce a linear time algorithm that works in several models of parallel computing, and requires $O(\log^2 n)$ synchronization steps in the BSP model (for the case of a polynomial number of processors).
	Their result uses a recursive process with a parameter that was used as a fixed value in all levels of recursion.
	Pace and Tiskin~\cite{PT13} show that one can enlarge the value of the parameter with the levels, what they called \emph{accelerated sampling}, such that the total work does not change asymptotically, but the depth of the recursion, and hence the number of synchronization steps, becomes $O(\log\log n)$.
	Our $\SA$ construction algorithm follows the same idea, but uses some different implementation details which fit the \cc model, and exploits our large-objects sorting algorithm.

\subsection{Our Contribution}
    Our results are summarized in the following theorems:

    	\begin{theorem}[String Sorting]\label{thm:string_sorting}
		There is an algorithm that given a sequence of strings $\S=(S_1,S_2,\dots,S_k)$, computes $\rank_{\S}(S_j)$ for every string $S_j\in \S$, and stores this rank in the node that receives $S_j[1]$.
		The running time of the algorithm is $O(1)$ rounds of the \cc model.
	\end{theorem}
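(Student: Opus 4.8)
The plan is to reduce, in $\Oh(1)$ rounds, the sorting of arbitrary-length strings to the sorting of $\Oh(\log n)$-bit keys (handled by Lenzen's algorithm) by way of a constant number of \emph{block-renaming} steps, each of which uses the generalized large-object sort for keys of size $\Oh(n^{1-\eps})$ words. As setup I would fix, for every input string $S_j$, its \emph{home} node $h(j)$, namely the node that received $S_j[1]$; since the input layout is known this correspondence is built in $\Oh(1)$ rounds. I would then rename the characters: sort the characters occurring in $\S$ (an instance of large-object sorting of total volume $\Oh(n^2)$) and replace each one by the number of strictly smaller characters, so that from now on every string is over an alphabet of size $\Oh(n^2)$, i.e.\ with $\Oh(\log n)$-bit symbols, and the total length of all strings is $N=\Oh(n^2)$.

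The engine is the block-renaming step. Fix $\ell=\Theta(n^{1-\eps})$ and, given the current strings over an $\Oh(\log n)$-bit alphabet, cut every string $S_j$ into $\ceil{|S_j|/\ell}$ consecutive blocks of $\ell$ symbols (the last possibly shorter), each tagged with its (string index, block index). There are at most $\Oh(n^2)$ blocks, of combined size $\Oh(n^2)$ words and individual size $\Oh(n^{1-\eps})$ words. I would sort all blocks lexicographically with the generalized sort — under the convention that a proper prefix precedes its extensions, which matches how the strings themselves compare — and assign each block the number of strictly smaller distinct blocks as its \emph{name}; then equal blocks get equal names, the names respect the order of blocks, and distinct strings stay distinct. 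The name is read off directly if the sort returns (as the present statement requires at the top level) the number of strictly smaller keys, or else is computed by a prefix sum over ``differs from predecessor'' flags obtained from $\Oh(\log n)$-bit fingerprints. Routing the names back replaces each $S_j$ by a string of length $\ceil{|S_j|/\ell}$ over the $\Oh(\log n)$-bit alphabet of names, which is in lexicographic order iff $S_j$ is. This step costs $\Oh(1)$ rounds: one call to the sort, one Lenzen routing of $\Oh(n^2)$ words, and $\Oh(1)$ local bookkeeping rounds.

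Because each block-renaming step divides every string's length by $\ell=n^{\Omega(1)}$ — after $i$ steps a string of original length $L$ has length $\ceil{L/\ell^i}$ — after $t^{*}=\Oh(1/(1-\eps))=\Oh(1)$ steps every string has length $1$. Sorting these length-$1$ strings is a single $\Oh(\log n)$-bit sort and returns, for each, the number of strictly smaller ones; by the order- and distinctness-preservation of every renaming, this number is exactly $\rank_{\S}(S_j)$ for the original string $S_j$ it came from. I would then route $\rank_{\S}(S_j)$ to $h(j)$. The total is $\Oh(t^{*})=\Oh(1)$ rounds, after checking that at each of the $\Oh(1)$ levels the number of blocks and the total volume remain $\Oh(n^2)$, so that every sort and every routing stays within the per-round $\Oh(n)$-word-per-node budget.

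The step I expect to be the crux is the block renaming under the $\Oh(\log n)$-bit-per-link bandwidth: a block may have $\Oh(n^{1-\eps})$ words and cannot be shipped to another node, so one must argue that equality and order of consecutive sorted blocks are decided using only $\Oh(\log n)$-bit messages — cleanest if the generalized sort is set up to return ranks, otherwise via fingerprints — and, relatedly, that although many short strings can create up to $\Oh(n^2)$ one-symbol blocks, the block count and the total data never exceed $\Oh(n^2)$ at any level. A secondary point to handle carefully is the order-preservation of renaming when some strings are prefixes of others; this is made transparent by appending to every string a sentinel symbol smaller than all others before the first cut, or equivalently by fixing the ``prefix is smaller'' convention consistently for blocks and for the renamed strings.
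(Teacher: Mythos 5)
Your proposal is correct and is essentially the paper's proof: repeat a constant number of block-renaming rounds (cut into blocks of polynomial length, sort the blocks with the large-object sorter of \cref{thm:sorting_esp_g0}, replace each block by its rank, which preserves lexicographic order including the prefix case exactly as in \cref{lem:renaming_works}), then sort the resulting length-one strings and route each rank to the node holding $S_j[1]$; the paper simply fixes block length $n^{1/3}$ and seven iterations where you use $n^{1-\eps}$ and $O(1)$ iterations. The ``crux'' you flag is not an obstacle: the sorter returns ranks and ships whole blocks via Lenzen routing, and the only detail the paper adds is an $O(1)$-round consolidation so that a block straddling two nodes is first moved entirely to the node where it begins, as the sorter assumes each object resides at a single node.
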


	\begin{theorem}[Pattern Matching]\label{thm:pm}
		There is an algorithm that given two strings $P$ and $T$, computes for every $i\in[0..|T|-|P|+1]$ whether $T[i+1..i+|P|]=P$ in $O(1)$ rounds of the \cc model.
	\end{theorem}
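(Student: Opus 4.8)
The plan is to reduce pattern matching to the string sorting primitive of \cref{thm:string_sorting}, using the standard observation that $T[i+1\dd i+|P|]=P$ if and only if $P$ is a prefix of the suffix $T[i+1\dd|T|]$, which in turn can be detected by comparing ranks in a sorted list of carefully chosen strings. Concretely, first handle the trivial case $|P|>|T|$ (output all-false) and assume $|P|\le|T|$. Form the sequence $\S$ consisting of the single string $P$ together with, for each candidate position $i\in[0\dd|T|-|P|+1]$, the length-$|P|$ factor $T_i:=T[i+1\dd i+|P|]$ (padding with a sentinel is unnecessary since all these strings have the same length $|P|$). Apply \cref{thm:string_sorting} to $\S$; then $T[i+1\dd i+|P|]=P$ exactly when $\rank_\S(T_i)=\rank_\S(P)$ among equal-rank ties — or, to avoid tie-handling subtleties, it suffices to check that $T_i$ and $P$ receive the same rank value, since \cref{thm:string_sorting} returns $\rank_\S$ and equal strings get equal rank. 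The whole procedure is then $O(1)$ rounds: $O(1)$ rounds to build $\S$ in the distributed array layout, $O(1)$ rounds for the sort, and $O(1)$ rounds to compare and report.

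The first real step is constructing $\S$ in the input model of the paper: the text $T$ is laid out across the nodes as a long array split into length-$O(n)$ blocks, and we must materialize the $\Theta(|T|)$ overlapping factors $T_i$, each of length $|P|$, as a new distributed sequence. If $|P|=O(n)$ this is delicate because the factors overlap heavily and the naive representation has total size $|P|\cdot\Theta(|T|)$, which can be $\omega(n^2)$; so I would \emph{not} materialize the factors explicitly. Instead, I expect to exploit periodicity: by a standard argument (cf.\ the ``standard trick'' of encoding each length-$|P|$ factor by its starting position plus the relevant boundary information), comparing $P$ against every aligned factor of $T$ can be done by first locating occurrences in each length-$\Theta(n)$ block of $T$ independently and then stitching across block boundaries. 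Within a single block of length $O(n)$ that is stored at one node, if $|P|=O(n)$ the node can solve pattern matching locally for free (local computation is not counted), and the only occurrences not found this way are those straddling two consecutive blocks; there are $O(|P|/n)=O(1)$ such ``seam'' regions per boundary when $|P|=O(n)$, and each seam region has length $O(n)$, so it can be gathered to a single node in $O(1)$ rounds and solved locally.

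The genuinely hard case — and the one where I expect to actually invoke \cref{thm:string_sorting} rather than just local computation — is $|P|=\omega(n)$. Here $P$ itself spans $\omega(1)$ nodes and cannot be gathered anywhere, and a seam region straddling a block boundary has length $|P|=\omega(n)$, too large to collect. The plan for this regime is: (1) partition $P$ into $\lceil|P|/n\rceil$ consecutive \emph{chunks} of length $\Theta(n)$ and likewise view $T$ as chunks; (2) for each of the $\Theta(|P|/n)$ chunk-offsets, use \cref{thm:string_sorting} on the sequence of all length-$\Theta(n)$ windows of $T$ at that offset together with the corresponding chunk of $P$, which identifies, for every position $i$, whether the $\ell$-th chunk of $P$ matches $T$ starting at $i+(\ell-1)\cdot\Theta(n)$ — this is $O(1)$ rounds per chunk but there are $\omega(1)$ chunks, so a single sort over \emph{all} chunk-comparisons batched together is needed to stay in $O(1)$ rounds; (3) combine the per-chunk match bits: position $i$ is an occurrence iff all chunk-comparisons along the diagonal through $i$ succeed, which is an AND over $\Theta(|P|/n)$ bits that can be aggregated in $O(1)$ rounds by a standard prefix/scan over the $n$ nodes (the total number of (position, chunk) pairs is $\Theta(|T|\cdot|P|/n^2)$; when this exceeds $n^2$ a further observation is needed — namely that the number of \emph{distinct} occurrence positions surviving even the first chunk test is $O(|T|/|P|\cdot(|P|/n)\cdot n)=O(|T|)$ by the periodicity of matches of a length-$\Theta(n)$ string, so the surviving pairs fit). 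The main obstacle is exactly this bookkeeping in the $|P|=\omega(n)$ regime: ensuring that the batched sorting instance and the diagonal AND-aggregation both have total size $O(n^2)$ and depth $O(1)$, for which periodicity of pattern occurrences (the fact that occurrences of a length-$m$ string in a window form $O(1)$ arithmetic progressions, or more precisely that two occurrences closer than $m/2$ force a period) is the essential combinatorial input, together with the generalized large-key sorting of \cref{thm:string_sorting} to push the per-chunk comparisons through in one shot.
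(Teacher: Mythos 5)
Your short-pattern regime ($|P|=O(n)$: solve locally per node plus gather the $O(n)$-length seam regions) is fine and is essentially the paper's \cref{alg:shortpm}. The gap is in the regime $|P|=\omega(n)$, which is the actual crux. Your plan is to compare every chunk of $P$ (length $\Theta(n)$) against the window of $T$ on the diagonal through every candidate position $i$, batched into one sorting call. But the number of (position, chunk) pairs is $\Theta\left((|T|-|P|)\cdot |P|/n\right)$, which for, say, $|P|=n^{1.5}$, $|T|=\Theta(n^2)$ is $\Theta(n^{2.5})$; the windows you would have to feed to the sorter have total length $\Theta\bigl((|T|-|P|)\cdot|P|\bigr)$, which can be $\omega(n^2)$ and so cannot even be materialized in the network, let alone sorted by \cref{thm:string_sorting} (whose input must have total size $O(n^2)$). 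Even the bit-matrix of per-chunk match results exceeds the network's total memory in this range. You flag this yourself as ``the main obstacle'' and appeal to periodicity to claim only $O(|T|)$ surviving positions, but that bound is vacuous ($|T|$ is a trivial upper bound on positions) and does not bound the number of pairs: for highly periodic inputs (e.g.\ $T=a^{|T|}$) every chunk matches at every position, so the surviving pairs are still $\Theta(|T|\cdot|P|/n)$ unless you compress them via their arithmetic-progression structure, and you give no mechanism for doing the batched sort or the diagonal AND within $O(n^2)$ total data and $O(1)$ rounds. So the proposal is not a proof in the long-pattern case; the missing idea is precisely how to avoid comparing $P$ against $T$ chunk-by-chunk at every alignment.

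The paper closes exactly this hole by a different decomposition. It searches only for the length-$n$ prefix $B=P[1..n]$ and suffix $E=P[|P|-n+1..|P|]$, in both $P$ and $T$, using the short-pattern routine; by \cref{lem:BG_progression} (occurrences closer than the length form an arithmetic progression), each node's occurrence list compresses to $O(1)$ words (\cref{lem:standard_trick}), so $\PM(B,P),\PM(B,T),\PM(E,P),\PM(E,T)$ can be broadcast to everyone. Positions of $P$ and $T$ not covered by any occurrence of $B$ or $E$ decompose into $O(n)$ maximal segments of total length $O(n^2)$; these are sorted once with \cref{thm:string_sorting}, and an alignment $i$ is verified by checking that the occurrence sets of $B$ and $E$ match under shift $i$ and that corresponding uncovered segments have equal ranks (\cref{lem:verify_occurrence,lem:cor_blocks_in_s}). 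The role that periodicity plays in your sketch is played there by the compression of the $B$/$E$ occurrence lists, and the single sorting call is kept within the $O(n^2)$ budget because only the uncovered regions, not all diagonal windows, are ever compared.
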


	\begin{theorem}[Suffix Array and $\LCP$]\label{thm:SA_LCP}
		There is an algorithm that given a string $S$, computes $\SA_S$ and $\LCP_S$ in $O(\log\log n)$ rounds of the \cc model.
	\end{theorem}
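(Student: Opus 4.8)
The plan is to follow the difference‑cover (``skew'') recursion for suffix‑array construction, in its recursive linear‑work form, refined by Pace and Tiskin's accelerated sampling so that the recursion has depth $\Oh(\log\log n)$, and to implement every level in the \cc model by means of \cref{thm:string_sorting} together with $\Oh(1)$-round primitives such as counting, prefix sums, merging, and batched $\RMQ$.

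One recursion level runs as follows. We hold a string $S_\ell$ of length $N_\ell$ over a polynomially bounded alphabet (which we may assume after a preliminary sort of the characters), with $S_0=S$ and $N_0=|S|=\Oh(n^2)$. Fix a parameter $t_\ell$ and a difference cover $D_\ell$ modulo $t_\ell$ — so $|D_\ell|=\Oh(\sqrt{t_\ell})$ and for all residues $r,s$ there is $0\le\delta<t_\ell$ with $r+\delta$ and $s+\delta$ both congruent to elements of $D_\ell$ — and let the sample be $B_\ell=\{\,i:(i\bmod t_\ell)\in D_\ell\,\}$, of size $\Oh(N_\ell/\sqrt{t_\ell})$. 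First I would give every $i\in B_\ell$ its rank among the length-$t_\ell$ blocks $\{\,S_\ell[i\dd i+t_\ell-1]:i\in B_\ell\,\}$, equal blocks receiving equal ranks, by one application of \cref{thm:string_sorting}; concatenating, residue by residue, the rank sequences read off by walking the sample positions of a residue in steps of $t_\ell$ (with the usual sentinel padding) yields a string $S_{\ell+1}$ of length $|B_\ell|$ whose suffixes, in lexicographic order, appear in the same order as the sample suffixes of $S_\ell$. Recursing on $S_{\ell+1}$ returns $\SA_{S_{\ell+1}}$ (and $\LCP_{S_{\ell+1}}$, see below), from which a renumbering gives the rank of every sample suffix of $S_\ell$. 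It remains to order the non‑sample suffixes and to merge them with the sample ones: by the difference‑cover property, comparing any two suffixes reduces to comparing a context of length $<t_\ell$ followed by the ranks of at most $|D_\ell|$ nearby sample suffixes, so packaging this data into order‑preserving keys of length $\Oh(t_\ell)$ and running \cref{thm:string_sorting} once more produces $\SA_{S_\ell}$. As every key has length $\Oh(t_\ell)$ and we keep $t_\ell=\Oh(n^{1-\eps})$, a level costs $\Oh(1)$ rounds.

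The $\LCP$ array is carried along. An $\LCP$ value of $k$ in $S_{\ell+1}$ means the corresponding sample suffixes of $S_\ell$ agree on $k$ whole blocks and disagree in the $(k+1)$-st, so the true $\LCP_{S_\ell}$ value is $k\,t_\ell$ plus the length of the longest common prefix of those two length-$t_\ell$ blocks, which a batched $\RMQ$ recovers from the array of longest common prefixes of block‑sorted‑consecutive blocks built during the naming step; non‑sample suffixes and the merge are handled identically via the $\delta$-shifted contexts, and the base case computes $\LCP$ directly. For the round bound, $t_\ell$ is taken as large as two constraints permit: $t_\ell=\Oh(n^{1-\eps})$, so that the sorts are $\Oh(1)$ rounds on length-$\Oh(t_\ell)$ keys; and the total data a level writes down — the $\Oh(N_\ell\sqrt{t_\ell})$ explicit blocks together with a comparable amount of comparison keys — staying $\Oh(n^2)$. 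This gives a recurrence of the shape $N_{\ell+1}=\Oh\!\big(N_\ell^{\,1+\gamma}/n^{\,2\gamma}\big)$ for a constant $\gamma>0$, whose fixed point is exactly $N=n^2$; hence the recursion must spend its first level(s) shrinking $N_\ell$ by only a constant factor (near $n^2$ the second constraint forces $t_\ell=\Oh(1)$), which already pulls $\log_n N_1$ a distance $\Theta(1/\log n)$ below $2$, after which each level multiplies that distance by $1+\gamma$, so $N_\ell=\Oh(n)$ after $\Oh(\log\log n)$ levels — at which point one node holds the entire string and computes its suffix array and $\LCP$ array locally.

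The step I expect to be the real obstacle is the single‑level ``reduce and merge'' construction under all of the \cc budgets simultaneously: expressing ``compare, and measure the overlap of, two suffixes'' by order‑preserving keys of length $\Oh(t_\ell)$ — which for the non‑sample suffixes amounts to collapsing the difference cover's residue‑dependent shifts into one fixed comparison, and for $\LCP$ to the batched $\RMQ$ over block overlaps — realizing the sample/non‑sample merge as a single invocation of \cref{thm:string_sorting}, and never materializing more than $\Oh(n^2)$ words. It is exactly this last requirement that forces the cautious first level(s) and the specific sampling schedule, rather than simply using the largest admissible $t_\ell$ throughout.
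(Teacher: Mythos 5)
Your overall route is the paper's route: accelerated difference-cover sampling in the style of Pace and Tiskin, one $\Oh(1)$-round sort of the length-$t$ sample blocks per level, recursion on the string of block ranks, a final sort of per-position keys of size $\Oh(t)$ to merge sample and non-sample suffixes, the $\LCP$ array carried through the recursion and finished with a batched $\RMQ$, and the same exponent-gap-grows-geometrically analysis giving $\Oh(\log\log n)$ levels (the paper takes $t=\min\{n^{\eps},n^{1/3}\}$ with $\eps_1=\Theta(1/\log n)$ and $\eps_{i+1}=1.4\,\eps_i$, which matches your $1+\gamma$ schedule; note the binding per-level budget is $N_\ell\cdot t_\ell=\Oh(n^2)$ coming from the merge keys, not $\Oh(N_\ell\sqrt{t_\ell})$, though your conclusions are unaffected).

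The step you flag as the obstacle is, as you state it, a real gap, and it is precisely where the paper does something different. There is no evident fixed length-$\Oh(t)$ string per suffix whose \emph{lexicographic} order equals the suffix order: the tie-breaking offset supplied by the difference cover depends on the residues of \emph{both} suffixes being compared, so if you concatenate the context $S[i..i+t-1]$ with the array of sample ranks at offsets $0,\dots,t-1$ (with $-1$ at non-sample offsets), two suffixes with equal contexts but different residues first differ at a position where one side holds $-1$ and the other a rank, and lexicographic comparison would then order them by residue class rather than by suffix order; hence \cref{thm:string_sorting} cannot be invoked for the merge. The paper never collapses the comparison into a lexicographic key: it builds exactly these size-$\Oh(t)$ representative objects (context $S_i$ plus rank array $A_i$) but sorts them with the comparison-based large-object sorter of \cref{thm:sorting_esp_g0} (hence the cap $t\le n^{1/3}$), under the pairwise comparator ``compare the contexts; if equal, compare $A_a[k]$ and $A_b[k]$ at any offset $k$ where both entries are defined,'' and \cref{lem:rep_objects} is exactly the statement that this comparator is well defined, all admissible $k$ giving the same outcome (with \cref{lem:difference_cover} guaranteeing one exists). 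With that substitution your level implementation goes through, and your $\LCP$ plan (lift $\LCP_{S'}$ by \cref{lem:LCP_and_order_sync}, refine to exact sample values, then resolve the remaining pairs by batched $\RMQ$ over the refined sample array via \cref{lem:RMQ} and \cref{lem:queries_routing}) coincides with the paper's, up to your using block-to-block $\RMQ$s where the paper simply fetches the $\Oh(t)$ relevant characters by batched queries.
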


	As described above, all our results are based on the algorithm for sorting large objects.
    The problem is defined as follows
	\begin{problem}[{Large Object Sorting}]\label{prob_sorting}
		Assume that a \cc network of $n$ nodes gets a sequence $\B=(B_1,B_2,\dots,B_k)$ of objects, each of size $O(n^{1-\eps})$ words for $\eps\ge 0$, where the total size of $\B$'s objects is $O(n^2)$ words.
		For every object $B_j\in \B$, the node that gets $B_j$ needs to learn  $\rank_{\B}(B_j)$.
	\end{problem}

The algorithms for \cref{prob_sorting} are presented in \cref{thm:sorting_esp_g0,thm:sorting_esp_0}, which we prove in \cref{sec:sorting_not_too_large_objects,sec:sorting_eps_0}, respectively.

	\begin{theorem}\label{thm:sorting_esp_g0}
		There is an algorithm that solves~\cref{prob_sorting} for any constant $\eps>0$ in $O(1)$ rounds of the \cc model.
	\end{theorem}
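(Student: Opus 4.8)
The plan is to upgrade Lenzen's $\Oh(1)$-round sorting algorithm~\cite{Lenzen13} from $\Oh(\log n)$-bit keys to keys of $\Oh(n^{1-\eps})$ words by performing one round of splitter-based bucketing (``sample sort'') and then handing what remains to the $\eps=0$ algorithm. The whole argument rests on a single consequence of $\eps>0$: any set of $\Oh(n^{\eps})$ of the input objects takes up only $\Oh(n)$ words in total, so it can be made known to \emph{every} node in $\Oh(1)$ rounds -- first route those words so each node holds $\Oh(1)$ of them (Lenzen's routing: every node is the source of $\Oh(n)$ and the destination of $\Oh(1)$ words), then let every node forward its $\Oh(1)$ words to all nodes. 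A preliminary normalization: by the input-model assumption each $B_j$ sits in one node, and by appending the index $j$ to $B_j$ and reading a fresh symbol \$ (smaller than all real symbols) past the end of every object we may assume that all $k$ objects are pairwise distinct and of the same logical length $\Oh(n^{1-\eps})$; from here on the only costly operations are data movements, since comparing two objects that both reside in one node is free local work.

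\textbf{The bucketing round.}
Sample $\Theta(n^{\eps})$ objects uniformly at random -- they occupy $\Oh(n)$ words -- and broadcast all of them to all nodes in $\Oh(1)$ rounds as above; each node sorts the samples locally and picks every $\Theta(\log n)$-th one, in sorted order, as a splitter, so all nodes agree on the same $b=\Theta(n^{\eps}/\log n)$ splitters. Each node then determines locally, for every object it holds, which of the $b+1$ buckets delimited by the splitters it belongs to. By the usual oversampling analysis each bucket gets $\Oh(k/b)$ objects with high probability; partition the $n$ nodes into blocks $V_1,\dots,V_{b+1}$ with $|V_i|$ proportional to the number of words in bucket $i$, and route every object into some node of its block so that every node holds $\Oh(n)$ words. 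Each object moves as a unit from one source to one destination, and every source (originally) and every destination (by the choice of block sizes) holds $\Oh(n)$ words, so this is a routing of $\Oh(n^2)$ words with $\Oh(n)$ words per node and hence takes $\Oh(1)$ rounds by Lenzen's routing.

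\textbf{Finishing via the $\eps=0$ case.}
Each bucket is now an independent, strictly smaller instance of \cref{prob_sorting} living on its block: objects of $\Oh(n^{1-\eps})$ words on $n_i=|V_i|$ nodes with $\Oh(n)$ words of memory per node. Once the normalization has arranged that every block is smaller than the whole clique, this is an instance of the $\eps=0$ problem with superlinear memory per node, which is exactly what the companion result \cref{thm:sorting_esp_0} handles in $\Oh(1)$ rounds (the variant flagged in \cref{sec:intro}). Running it inside all blocks in parallel -- the blocks are vertex-disjoint, so the runs do not interfere -- yields, for each object, its rank within its bucket. Then $\rank_{\B}(B_j)$ equals that rank plus the number of objects in the buckets preceding $B_j$'s, which a prefix sum over the $b+1$ bucket sizes delivers in $\Oh(1)$ rounds; finally each rank is routed back to the node that originally held $B_j$, whose id travelled with the object -- $\Oh(1)$ rounds of the \cc model in total.

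\textbf{The main obstacle.}
The crux is the splitter step -- the one place where $\eps>0$ is used, since broadcasting the samples must cost $\Oh(n)$ words, which limits the number of buckets to $\Oh(n^{\eps})$ -- and this is precisely why a single bucketing round cannot shrink the residual instances enough to recurse (a second round would need to broadcast an $n^{1-\eps}$-word object inside a block of fewer than $n^{1-\eps}$ nodes, which is no longer an $\Oh(1)$-round operation), so the residual instances must be discharged by the separate, superlinear-memory $\eps=0$ algorithm. What remains is routine but has to be checked carefully: that every communication load stays $\Oh(n)$ per node so Lenzen's routing is applicable; that the bucket sizes are balanced \emph{simultaneously} over all $n^{\Oh(1)}$ buckets, by Chernoff plus a union bound, with a detect-and-restart turning the algorithm Las Vegas; that the general variable-length multiset truly reduces to the clean distinct-equal-length case -- in particular that groups of equal objects are collapsed (via fingerprints) so that no block receives an instance as large as the original one, and that blocks are sized by total number of words rather than by object count; and that reading \$ past the ends of objects together with the appended indices reproduces $\rank_{\B}$ without disturbing the lexicographic order.
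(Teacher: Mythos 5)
There is a genuine gap at the finishing step. Your entire $\Oh(1)$-round bound hinges on the claim that each residual bucket -- objects of $\Oh(n^{1-\eps})$ words living on a block with superlinear per-node memory -- can be sorted in $\Oh(1)$ rounds by invoking \cref{thm:sorting_esp_0}. But \cref{thm:sorting_esp_0} gives an $\Oh(\log n)$-round algorithm (an AKS sorting-network simulation), not an $\Oh(1)$-round one, and the paper additionally proves an $\Omega(\log n/\log\log n)$ lower bound for comparison-based algorithms in the $\eps=0$ regime. The ``variant flagged in the introduction'' that you cite does not exist in the form you need: the remark there only says that the $\eps=0$ algorithm, when translated to MPC, uses $\omega(n)$ memory per machine -- it does not assert any $\Oh(1)$-round algorithm for $\Theta(n)$-word keys under superlinear memory, and no such algorithm appears in the paper. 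As written, your construction therefore yields $\Oh(\log n)$ rounds overall, not $\Oh(1)$. A secondary issue is that sampling objects uniformly controls the number of \emph{objects} per bucket, not the number of \emph{words}; with variable-length objects a bucket can be word-heavy, and re-sizing blocks ``by total number of words'' after the fact does not by itself restore the balance guarantee your high-probability analysis is about.

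The dead end you reach in your ``main obstacle'' paragraph is instructive, because the paper escapes it exactly where you concluded recursion is impossible. Instead of $\Theta(n^{\eps}/\log n)$ random splitters in one shot, the paper marks candidates deterministically by total-size spacing (one candidate per $n^{1-\eps/2}$ words per node, hence only $\Oh(n^{\eps/2})$ per node), gathers and recursively sorts them on $\lceil|U|/n^{\eps/2}\rceil$ nodes, and selects only $n^{\eps/2}$ delimiters per level; since these delimiters total $\Oh(n^{\eps/2})\cdot\Oh(n^{1-\eps})=\Oh(n^{1-\eps/2})$ words, they can be broadcast to the whole clique at every level within $\Oh(1)$ rounds (\cref{thm:gen_routing}), so the group size shrinks by a factor $n^{\eps/2}$ per level and the recursion has constant depth $2/\eps$. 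In short, the fix is not to discharge the residual instances to the $\eps=0$ algorithm but to recurse with fewer, size-spaced delimiters so that the broadcast stays affordable at every level; rebalancing of the redistributed pieces is then handled by \cref{lem:dist_info} exactly as in the paper.
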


	\begin{theorem}\label{thm:sorting_esp_0}
		There is an algorithm that solves~\cref{prob_sorting} for $\eps=0$ in $O(\log n)$ rounds.
		Moreover, any comparison-based \cc algorithm that solves~\cref{prob_sorting} for $\eps=0$ requires $\Omega(\log n/\log\log n)$ rounds.
	\end{theorem}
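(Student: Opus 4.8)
\emph{The $\Oh(\log n)$-round algorithm.} The plan is to run a bounded-depth comparison sorting network on the $k$ objects and to simulate one layer of it per $\Oh(1)$ rounds. Pad $\B$ with dummy objects that are larger than every real object so that the number of objects is a power of two lying between $n$ and $\Oh(n^2)$, and fix a comparison sorting network of depth $d=\Oh(\log k)=\Oh(\log n)$ — for instance the AKS network (for $\Oh(\log n)$ rounds one cannot use bitonic sort, and I am happy to pay AKS's large constant) — in which the comparators of each layer form a partial matching on the wires. Maintain the invariant that the current content of each wire is stored as one block at one node, tagged with the index of the object currently occupying that wire, and that the $k$ wires are distributed so that every node holds $\Oh(n)$ words in total; this is possible since $\sum_{B_j\in\B}|B_j|=\Oh(n^2)$, and a size-balanced placement can be produced as an $\Oh(1)$-round preprocessing. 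To run a layer, distribute its comparators evenly among the nodes, each node responsible for a $\Theta(1/n)$-fraction of them, and have each responsible node fetch the two objects of each of its comparators. Because the comparators of a layer are vertex-disjoint, each wire is fetched exactly once, so in this step every node sends $\Oh(n)$ words and receives $\Oh(n)$ words, and it is therefore realised in $\Oh(1)$ rounds by standard \cc routing (each source scatters the words of its blocks over the whole clique, which relays them to the destinations, exactly as in Lenzen's algorithm). Each responsible node then compares its two objects by local computation and sends the smaller to the owner of the lower output wire and the larger to the owner of the higher output wire, which is again an $\Oh(n)$-words-per-node routing, hence $\Oh(1)$ rounds. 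After layer $d$, the node owning wire $r$ holds the object of rank $r$ together with its original index and forwards the pair (original index, $r$) to the node that holds that object. Each of the $d=\Oh(\log n)$ layers costs $\Oh(1)$ rounds, so the algorithm runs in $\Oh(\log n)$ rounds.

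\emph{The $\Omega(\log n/\log\log n)$ lower bound.} The plan is to turn a fast comparison-based \cc algorithm into a fast parallel comparison tree and invoke the classical lower bound for comparison sorting in rounds. Consider the instances with $k=\Theta(n)$ objects, each of $\Theta(n)$ words, stored one per node, whose linear order the adversary reveals only through comparisons. Since the algorithm is comparison-based, a comparison of $B_a$ and $B_b$ is evaluated only by a node that currently holds the full content of both; and since each node receives $\Oh(n)$ words per round while each object has $\Theta(n)$ words, every node ingests only $\Oh(1)$ new objects per round and hence holds only $\Oh(t)$ distinct objects after $t$ rounds. Consequently, in round $t$ the whole network can perform at most $n\cdot\Oh(t)=\Oh(nt)$ new comparisons, so an $r$-round \cc algorithm is in particular a parallel comparison tree that sorts $n$ elements using at most $p=\Oh(nr)$ comparisons per round (the tree sees all performed comparisons and may compute the transitive closure, so it is only more powerful than what any single node knows). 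The classical rounds-versus-comparisons lower bound for sorting then gives $r=\Omega(\log n/\log(p/n+2))=\Omega(\log n/\log r)$, and solving $r\log r=\Omega(\log n)$ yields $r=\Omega(\log n/\log\log n)$; in particular no comparison-based \cc algorithm sorts in $\Oh(1)$ rounds.

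\emph{Main obstacle, and a remark.} The delicate point in the algorithm is the load analysis of a layer: one must argue that dispatching the comparators and returning the outputs is an instance with only $\Oh(n)$ words per node, which rests on the vertex-disjointness of a network layer together with the size-balanced placement of the $k$ wires. The delicate point — and the real obstacle — in the lower bound is the claim that a node holds only $\Oh(t)$ distinct objects after $t$ rounds: a node may forward, copy and re-route object content, so one has to account for content transfers globally (each delivery of an object consumes $\Theta(n)$ of the $\Theta(n^2)$ words a round can carry) to conclude that only $\Oh(1)$ objects reach any node per round, and one must also make precise the claim that whatever ordering information a node has derived is a consequence of comparisons actually performed by some node in an earlier round, so that the reduction to the parallel comparison tree model is legitimate. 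Finally we note a $\log\log n$ gap: the $\Oh(\log n)$-round algorithm above is itself comparison-based, so it does not meet the $\Omega(\log n/\log\log n)$ bound, and closing this gap is left open.
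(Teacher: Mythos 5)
Your overall plan (simulate one level of an AKS network per $O(1)$ rounds) is the same as the paper's, but your load analysis has a genuine gap. You distribute the comparators of a level evenly \emph{by count} ($\Theta(1/n)$-fraction per node) and claim that, because the comparators of a level are vertex-disjoint, every node receives only $O(n)$ words when it fetches its comparators' inputs. Vertex-disjointness only bounds the \emph{total} traffic of a level by $O(n^2)$ words; it says nothing about the load of an individual responsible node. For $\eps=0$ a single object may have $\Theta(n)$ words, so a node that is responsible for even $\Theta(n)$ comparators whose inputs happen to be large objects would have to receive $\omega(n)$ (up to $\Theta(n^2)$) words, which cannot be routed in $O(1)$ rounds. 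The same problem recurs in your "send the outputs to the wire owners" step: a size-balanced wire-to-node placement computed before the level no longer bounds the owners' loads after swaps, since the objects occupying the wires (and hence their sizes) change. The paper closes exactly this hole: each responsible node first receives only the \emph{sizes and indices} of its comparators' inputs (an $O(n)$-word instance), then spawns $\lceil S_{v_i}/n\rceil$ auxiliary nodes --- $O(n)$ in total because $\sum_i S_{v_i}=O(n^2)$ --- partitions its comparators among them with \cref{lem:dist_info} so that each auxiliary node receives $O(n)$ words, and simulates the auxiliary nodes via \cref{lem:aux_ndes}; moreover, it never re-homes objects to wire owners but only swaps metadata indices, so the final rank is read off from the index. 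Your argument needs either this auxiliary-node device or a size-aware (not count-even) assignment of comparators recomputed at every level, together with an explanation of how that assignment is computed and disseminated in $O(1)$ rounds; as written, the claim "every node receives $O(n)$ words" is false.

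\textbf{Lower bound.} Your argument is a valid but genuinely different route from the paper's. You bound the number of \emph{new} comparisons per round ($O(1)$ new objects per node per round, hence $O(nr)$ comparisons per round network-wide) and invoke the parallel comparison-tree round lower bound $\Omega(\log n/\log(1+p/n))$ (Alon--Azar/Bollob\'as--Thomason style), getting $r\log r=\Omega(\log n)$. The paper instead simulates the algorithm so that each node maintains its received keys in sorted order by binary insertion, giving a comparison-based sorting procedure with only $O(n\,r\log r)$ comparisons in total, and contradicts the elementary $\Omega(n\log n)$ total-comparison bound. Both yield $\Omega(\log n/\log\log n)$; yours leans on a heavier classical theorem about sorting in rounds, while the paper's stays with the information-theoretic counting bound, and both share (and should state) the same modeling assumption that a key must be received in full ($\Omega(n)$ words) before it can participate in a comparison.
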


	\section{Preliminaries}\label{sec:preliminaries}

	For $i,j\in\mathbb{Z}$ we denote $[i..j]=\{i,i+1,i+2,\dots,j\}$. For a set $S\subseteq \mathbb Z$ and a scalar $\alpha\in\mathbb Z$ we denote $S+\alpha=\{a+\alpha\mid a\in S\}$.
	For a set $\K$ of elements with a total order, and an element $b\in \K$ we denote $\rank_\K(b)=|\{a\in \K\mid a<b\}|$ (or simply $\rank(b)$ when $\K$ is clear from the context).
	We clarify that for a multi-set $M$, we consider the rank of an element $b\in M$  to be the number of \emph{distinct} elements smaller than $b$ in $M$.
	For a set of objects $\K$ we denote $\|\K\| = \sum_{B\in\K}|B|$ as the total size (in words of space) of the objects in $\K$.

	\para{Strings.}
	A string $S=S[1]S[2]\dots S[n]$ over an alphabet $\Sigma$ is a sequence of characters from $\Sigma$.
	In this paper we assume $|\Sigma|=[1..\mathsf{poly}(n)]$ and therefore each character takes $O(\log n)$ bits.
	The length of $S$ is denoted by $|S|=n$.
	For $1\le i<j\le n$ the string $S[i..j]=S[i]S[i+1]\dots S[j]$ is called a substring of $S$. if $i=1$ then $S[i..j]$ is called a prefix of $S$ and if $j=n$ then $S[i..j]$ is a suffix of $S$ and is also denoted as $S[i..]$.
	The following lemma from \cite{BG14} is useful for our pattern matching algorithm in \cref{sec:pm}.
	\begin{lemma}[{\cite[{Lemma 3.1}]{BG14}}]\label{lem:BG_progression}
		Let $u$ and $v$ be two strings such that $v$ contains at least three occurrences	of $u$. Let $t_1 < t_2 < \cdots < t_h$ be the locations of all occurrences $u$ in $v$ and assume that  $t_{i+2}-t_i\le |u|$, for $i=[1..h-2]$ and $h \ge 3$.
		Then, this sequence forms an arithmetic progression with difference $d=t_{i+1}-t_i$, for $i=[1..h-1]$ (that is equal to the period	length of $u$).
	\end{lemma}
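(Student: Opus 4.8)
The plan is to prove that every gap $g_i := t_{i+1}-t_i$ equals the shortest period $d$ of $u$. This immediately gives the conclusion, since then $t_{i+1}=t_i+d$ for all $i\in[1..h-1]$, so $t_1<\cdots<t_h$ is an arithmetic progression whose common difference is the period of $u$.

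First I would record two elementary observations. The hypothesis $t_{i+2}-t_i\le|u|$ reads $g_i+g_{i+1}\le|u|$ for $i\in[1..h-2]$; in particular every gap satisfies $g_i\le|u|$ (for the last one, $g_{h-1}\le g_{h-2}+g_{h-1}=t_h-t_{h-2}\le|u|$, which is legal because $h\ge3$). Since two occurrences of $u$ at distance $g\le|u|$ must agree on an overlapping window of length $|u|-g$, each $g_i$ is a period of $u$, so $g_i\ge d$. Combining, $2d\le g_i+g_{i+1}\le|u|$, so $u$ genuinely has a period of length at most $|u|/2$ --- the only regime where such a claim can hold.

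Next I would show $d\mid g_i$ for every $i$. Fix $i$ and pair $g_i$ with an adjacent gap that is available: $g_{i+1}$ if $i\le h-2$, and $g_{h-2}$ if $i=h-1$. In either case one of the hypotheses bounds $g_i$ plus this neighbour by $|u|$, and the neighbour is $\ge d$, hence $g_i+d\le|u|$. Applying the periodicity lemma of Fine and Wilf to the string $u$ with its two periods $g_i$ and $d$ (whose sum is at most $|u|$) shows that $\gcd(g_i,d)$ is a period of $u$; being at most $d$ yet at least the shortest period $d$, it equals $d$, so $d\mid g_i$.

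It remains to rule out $g_i\ge 2d$. Assume $g_i=md$ with $m\ge2$ and consider the factor $x=v[t_i..t_{i+1}+|u|-1]$ of $v$. Its length-$|u|$ prefix and its length-$|u|$ suffix both equal $u$, and $g_i\le|u|$, so $x$ has period $g_i=md$; its prefix $u$ has period $d$ and length $|u|\ge g_i+d\ge md$, and unfolding the period $md$ onto this period-$d$ prefix forces $x$ itself to have period $d$. Consequently the length-$|u|$ factor of $x$ starting $d$ positions in equals the prefix of $x$, i.e. $u$ occurs in $v$ at position $t_i+d$ (which indeed fits inside $v$, being at most $t_{i+1}$), and $t_i<t_i+d<t_{i+1}$. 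This is an occurrence of $u$ strictly between two consecutive occurrences, contradicting the assumption that $t_1<\cdots<t_h$ are \emph{all} occurrences of $u$ in $v$. Hence $g_i=d$ for every $i$, as desired. The main obstacle is this last step: one must verify carefully that the period-$d$ structure really propagates across both copies of $u$, so that $t_i+d$ is a true occurrence in $v$ (and not merely a reflection of $u$'s own periodicity), and also dispose of the boundary gap $i=h-1$ and the degenerate case $g_i=|u|$, where the overlap is empty and the corresponding period is trivial.
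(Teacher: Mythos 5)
Your proposal is correct, but note that the paper itself does not prove this statement at all: it is imported verbatim as \cite[Lemma~3.1]{BG14} and used as a black box, so there is no in-paper proof to compare against. Your self-contained argument is sound: the hypothesis $t_{i+2}-t_i\le|u|$ gives $g_i+g_{i+1}\le|u|$, every gap is a period of $u$ (overlap of length $|u|-g_i$), pairing each gap with an adjacent one yields $g_i+d\le|u|$, and the intermediate-occurrence contradiction (the factor $x=v[t_i..t_{i+1}+|u|-1]$ has period $g_i$ and a period-$d$ prefix of length $|u|\ge g_i+d$, hence period $d$, hence an occurrence of $u$ at $t_i+d$ strictly between $t_i$ and $t_{i+1}$) correctly forces $g_i=d$. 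The step you flagged as the main obstacle does go through: for $j\le|u|-d$ the prefix copy of $u$ certifies $x[j]=x[j+d]$, for $j\ge g_i+1$ the suffix copy does, and since $g_i\le|u|-d$ these two ranges cover all of $[1..|x|-d]$; the boundary gap $i=h-1$ is handled by pairing with $g_{h-2}$, and $g_i=|u|$ in fact cannot occur once $g_i+d\le|u|$. One small remark: the Fine--Wilf divisibility step is dispensable --- the propagation argument above needs only $g_i+d\le|u|$ and $d<g_i$, not $d\mid g_i$, so you could state the contradiction directly for any gap exceeding $d$ and shorten the proof.
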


 Here is the definition of the longest common prefix of two strings. It is useful for the definition of the lexicographical order and also for the LCP array.
	\begin{definition}
		For two strings $S_1,S_2\in\Sigma^*$, we denote $\LCP(S_1,S_2)=$
		$\max(\{\ell\mid S_1[1..\ell]=S_2[1..\ell]\}\cup\{0\})$ to be the length of the longest common prefix of $S_1$ and $S_2$.
	\end{definition}

	We provide here the formal definition of \emph{lexicographic order} between strings.
	\begin{definition}[{Lexicographic order}]\label{def:lex_order}
		For two strings $S_1,S_2\in\Sigma^*$ we have $S_1\preceq S_2$ if one of the following holds:
		\begin{enumerate}
			\item If $\ell=\LCP(S_1,S_2)<\min\{|S_1|,|S_2|\}$ and $S_1[\ell+1]<S_2[\ell+1]$
			\item $S_1$ is a prefix of $S_2$, i.e. $|S_1|\le |S_2|$ and $S_1=S_2[1..|S_1|]$.
		\end{enumerate}
		We denote the case where $S_1\preceq S_2$ and $S_1\ne S_2$ as $S_1 \prec S_2$.
	\end{definition}

	\para{Routing.}

    In the Congested Clique model, a routing problem involves delivering messages from a set of source nodes to a set of destination nodes, where each node may need to send and receive multiple messages.
    A well-known result by Lenzen~\cite{Lenzen13} shows that if each node is the source and destination of at most $O(n)$ messages, then all messages can be delivered in $O(1)$ rounds.
    The following lemma is useful for routing in the \cc model.
		\begin{lemma}[{\cite[{Lemma 9}]{CFG20}}]\label{thm:gen_routing}
		Any routing instance, in which every node $v$ is the target of up to $O(n)$ messages, and $v$ locally computes the messages it desires to send from at most $O(n \log n)$ bits, can be performed in O(1) rounds.
	\end{lemma}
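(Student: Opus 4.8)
The plan is to reduce the statement to Lenzen's routing theorem~\cite{Lenzen13}, which delivers everything in $O(1)$ rounds whenever every node is the source of $O(n)$ messages \emph{and} the target of $O(n)$ messages. The target condition is exactly our hypothesis, so the whole task is to establish the source condition and to check that no extra rounds are spent deciding what to send. For the latter: Lenzen's protocol assumes each node already holds the explicit list of its (content, destination) pairs, whereas here node $v$ only possesses $O(n\log n)$ bits from which that list is a (commonly known) function; since local computation is free in the \cc model, $v$ can expand its local state into that list with no communication, after which we invoke Lenzen verbatim. Concretely: (i) each node materializes its outgoing messages from its $O(n)$-word local state; (ii) run Lenzen's routing on the resulting instance; (iii) each node reports the messages addressed to it.

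It remains to argue the source-side bound, and this is the step I expect to be the main obstacle in a careful write-up. The clean case is when the list of (content, destination) pairs at $v$ has length $O(n)$: then $v$ is the source of $O(n)$ messages and we are done. The concern is an instance where $v$ wishes to deliver the same word to very many destinations, so that the pair-list at $v$ could a priori be longer than $\Theta(n)$. Here I would combine two observations. First, a global counting bound: every node receives only $O(n)$ messages and there are $n$ nodes, so the total number of pairs over the whole network is $O(n^2)$; in particular $\sum_v \Delta_v = O(n^2)$, where $\Delta_v$ denotes the number of pairs at $v$. Second, the $O(n\log n)$-bit hypothesis: a node whose messages-to-send are determined by $O(n\log n)$ bits cannot in fact hold $\omega(n)$ distinct $\Theta(\log n)$-bit pairs, so $\Delta_v = O(n)$ for every $v$. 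Together these return us to the plain Lenzen setting.

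If the intended reading of the lemma does allow a genuinely overloaded source (some $\Delta_v = \omega(n)$ admitting a short description, e.g.\ ``broadcast this word to everyone''), the fallback plan is to rebalance before routing: disseminate $v$'s $O(n)$-word local state to $\Theta(\ceil{\Delta_v/n})$ relay nodes — using the $n$ destination nodes themselves as an intermediate layer so that no node forwards more than $O(n)$ words during the dissemination — and let each relay handle one block of $n$ of $v$'s deliveries. Since $\sum_v \Delta_v = O(n^2)$, only $O(n)$ relay assignments are created in total, so this too is an $O(n)$-in/$O(n)$-out instance and finishes in $O(1)$ rounds. I would reach for this only if the counting argument of the previous paragraph turned out to be insufficient for the precise statement being proved.
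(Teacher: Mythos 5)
The paper never proves this lemma at all---it is imported verbatim from \cite[Lemma~9]{CFG20}---so your argument has to stand on its own, and its main line does not. The step asserting that the $O(n\log n)$-bit hypothesis forces every source to hold only $O(n)$ (content, destination) pairs is false: a short description bounds the information content of the outgoing messages, not their number, since the expansion map need not be length-preserving. From $O(\log n)$ bits a node can deterministically generate $n$ pairwise distinct pairs (``send $x+i$ to node $i$''), and from $O(n\log n)$ bits up to $\Theta(n^2)$ of them. This is not a pathological reading but the intended one: the remark following the lemma, and every invocation of it in the paper---broadcasting the pattern $P$ to all nodes, announcing the $O(n^{2/3})$ words of delimiters to every node, sending each text character to a whole range of nodes---has a single node acting as the source of $\omega(n)$ messages that merely admit a short description. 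So the reduction to plain Lenzen routing with per-source load $O(n)$ does not go through, and the case you flag as a ``concern'' is in fact the entire content of the lemma.

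Your fallback paragraph is the right idea, and is essentially how such statements are proved: broadcast the counts $\Delta_v$ so that a relay assignment is commonly known; allocate $\ceil{\Delta_v/n}$ relay slots to source $v$, which is $O(n)$ slots in total because the target-side hypothesis gives $\sum_v \Delta_v = O(n^2)$, hence $O(1)$ slots per node; replicate $v$'s $O(n\log n)$-bit state onto its relays; let each relay regenerate $v$'s message list locally and transmit its designated block of at most $n$ messages; and conclude with Lenzen's scheme, which now applies since every node both sources and targets $O(n)$ messages. But as written you leave unverified exactly the step that needs care: $v$ cannot ship its state to $\omega(1)$ relays by itself (that is $\omega(n)$ words out of $v$), and the parenthetical ``use the destination nodes as an intermediate layer'' is where the accounting must be done---for instance, $v$ splits its state into $n$ pieces of $O(1)$ words, sends piece $i$ to node $i$, and node $i$ forwards, for every source $v$, its piece to all of $v$'s relays, which is $\sum_v O(\ceil{\Delta_v/n}) = O(n)$ words out of each intermediate node and $O(n)$ words into each relay, so Lenzen routing covers the dissemination as well. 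Because you present this route only as a contingency and never check these loads, the proposal as submitted has a genuine gap, even though its final paragraph contains the correct proof in embryo.
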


	\begin{remark*}
        A particularly useful case in which $v$ locally computes the messages it wishes to send from $O(n \log n)$ bits is when $v$ stores $O(n)$ messages, each intended for all nodes in some consecutive range of nodes in the network.

    We also provide a routing lemma that consider a symmetric case to that of \cref{thm:gen_routing}, which we prove in \cref{sec:queries_routing}.
        \begin{lemma}\label{lem:queries_routing}
        Assume that each node of the \cc stores $O(n)$ words of space, each of size $O(\log n)$.
		Each node has $O(n)$ queries, such that each query is a pair of a resolving node, and the content of the query which is encoded in $O(\log n)$ bits.
		Moreover, the query can be resolved by the resolving node, and the size of the result is $O(\log n)$ bits.
		Then, it is possible in $O(1)$ rounds of the \cc that each node get all the results of its queries.
	\end{lemma}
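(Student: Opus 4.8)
The plan is to reduce \cref{lem:queries_routing} to sorting and to \cref{thm:gen_routing}, by first gathering for every resolving node the set of queries addressed to it, and then treating resolvers differently according to how concentrated their load is. First I would have every node tag each of its queries with the identity of the originating node, producing $O(n)$ triples $(r,x,v)$ per node (resolver, content, origin), each encodable in $O(\log n)$ bits. Sorting all these triples lexicographically by the pair $(r,x)$ with Lenzen's sorting algorithm takes $O(1)$ rounds; afterwards every node holds $O(n)$ triples in sorted order, all triples with a fixed resolver $r$ occupy a contiguous block $I_r$ of consecutive nodes, and inside $I_r$ they are grouped by content. With a constant number of prefix-sum/aggregation rounds each resolving node learns the location and length of its block $I_r$, and each node learns its offset inside every block it intersects. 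Call $r$ \emph{light} if $|I_r|\le c_1$ for a suitable constant $c_1$, and \emph{heavy} otherwise.

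For a light resolver $r$, the number of triples addressed to $r$ is at most $c_1\cdot O(n)=O(n)$, so all of them can be delivered directly to $r$: every node is the source of $O(n)$ triples and $r$ is the destination of only $O(n)$ messages, so this is one $O(1)$-round Lenzen routing. Then $r$ resolves each received triple $(r,x,v)$ locally and sends the $O(\log n)$-bit answer back to $v$. Since every node $v$ is the origin of $O(n)$ queries in total (hence of $O(n)$ answers returned in this step), this return step is again $O(1)$ rounds by \cref{thm:gen_routing}/Lenzen routing.

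For heavy resolvers the obstacle is that $r$ may be addressed by $\omega(n)$ queries and cannot even receive them all, so instead I would disseminate $r$'s $O(n)$-word local data to the whole block $I_r$ and let the nodes of $I_r$ resolve their own triples. Concretely, $r$ spreads its $O(n)$ words evenly over the $|I_r|$ nodes of $I_r$ (each gets $O(n/|I_r|)$ words), and then each node of $I_r$ broadcasts its portion to all nodes of $I_r$. The key combinatorial observation is that, in the sorted order, the $O(n)$-triple chunk stored at a single node can intersect only $O(1)$ heavy blocks: a heavy block spans $\ge 2$ nodes, so any heavy block that a given node neither begins nor ends lies entirely outside that node (its middle nodes carry triples of that resolver only); hence at most one heavy block ends inside a node and at most one begins inside it. Therefore each node relays $O(n/|I_r|)\cdot|I_r|=O(n)$ words for $O(1)$ heavy resolvers and likewise receives $O(n)$ words in total, so both the spreading and the relaying are $O(1)$-round Lenzen routings. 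After this every node of $I_r$ holds all of $r$'s data, resolves locally each triple it stores for $r$, and sends the answer to that triple's origin; as in the light case, every origin receives $O(n)$ answers, so this final delivery is $O(1)$ rounds.

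The only non-routine point — and the step I expect to be the main obstacle — is precisely the heavy case: one must never send a super-linear number of queries to an overloaded resolver, and the resolution is to ship the resolver's \emph{small} local data instead of the queries, while bounding the induced relay load by exploiting the disjointness structure of the sorted blocks. Once that is in place, all steps consist of $O(1)$-round sorting/aggregation and $O(1)$-round applications of \cref{thm:gen_routing}, giving the claimed $O(1)$ round complexity.
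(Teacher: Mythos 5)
Your proposal is correct, but it takes a genuinely different route from the paper's proof. The paper also starts by sorting all $O(n^2)$ queries by resolving node, but then it handles overloaded resolvers uniformly with \emph{auxiliary nodes}: for each resolver $u_i$ with $s_i$ queries it creates $\lceil s_i/n\rceil$ virtual copies (this is legitimate by \cref{lem:aux_ndes}, since the total number of copies is $O(n)$), replicates $u_i$'s $O(n)$ words of data to each copy via \cref{thm:gen_routing}, routes the $j$th query for $u_i$ to copy $\lceil j/n\rceil$ so that every copy receives at most $n$ queries, resolves there, and returns answers by Lenzen routing --- no light/heavy case distinction is needed. You instead keep everything on the real nodes: after sorting, a heavy resolver's queries already occupy a consecutive block $I_r$, and rather than moving the queries you ship the resolver's $O(n)$-word state to the nodes of $I_r$ (even split followed by an intra-block exchange), bounding the relay load by the observation that a node's sorted chunk meets at most a constant number of heavy blocks. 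Both arguments rest on the same key idea --- when a resolver cannot receive all queries addressed to it, replicate its small local state and resolve remotely --- and both use the same implicit assumption that a copy of the resolver's data suffices to answer a query; your version avoids invoking \cref{lem:aux_ndes} at the price of the case analysis and the block-counting argument. One small inaccuracy in your justification: a heavy block with no endpoint inside a node need not lie entirely outside it --- it may contain the node's whole chunk --- but in that case it is the only heavy block meeting that node, so the $O(1)$ bound on heavy blocks per node, and hence your $O(n)$ send/receive bounds, still hold.
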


	\end{remark*}

    	\section{Sorting Large Objects}\label{sec:sorting_not_too_large_objects}
	In this section, we solve \cref{prob_sorting}, the large objects sorting problem, for the special case of $\eps=2/3$, in the \cc model, by presenting a deterministic sorting algorithm for objects of size $O(n^{1/3})$ words, that takes $O(1)$ rounds.
	Later, in \cref{sec:sorting_general_large_objects}, we generalize the algorithm for any $\eps>0$, which proves \cref{thm:sorting_esp_g0}.

    Our algorithm makes use of the following two lemmas.

	\begin{lemma}[{\cite[{Lemma 3}]{CDKL21}}]\label{lem:dist_info}
		Let $x_1,x_2,\dots, x_n$ be natural numbers, and let $X$, $x$ and $k$ be natural numbers such that $\sum_{i=1}^{n}  x_i = X$, $x_i \le x$ for all $i$.
		Then there is a partition of $[n]$ into $k$ sets $I_1, I_2,\dots, I_k$ such
		that for each $j$, the set $I_j$ consists of consecutive elements, and
		$\sum_{i\in I_j}x_i\le X/k+x$.
	\end{lemma}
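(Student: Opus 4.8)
The plan is to read the partition off from the prefix sums of the sequence by cutting the running total into $k$ roughly-equal-width intervals. Set $P_0=0$ and $P_i=\sum_{t=1}^{i}x_t$ for $i\in[1\dd n]$, so that $P_0\le P_1\le\dots\le P_n=X$. For each $j\in[1\dd k-1]$ let $b_j$ be the largest index $i\in[0\dd n]$ with $P_i\le jX/k$ (well defined, since $P_0=0\le jX/k$), and set $b_0=0$ and $b_k=n$. Monotonicity of the thresholds $j\mapsto jX/k$ gives $b_0\le b_1\le\dots\le b_k$, and since $jX/k<X=P_n$ for $j\le k-1$ (assuming $X>0$; the case $X=0$ is trivial, as then every $x_i=0$) we get $b_j<n$ for every $j\le k-1$, in particular $b_{k-1}<n=b_k$. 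Finally put $I_j=\{\,b_{j-1}+1,\dots,b_j\,\}$ for $j\in[1\dd k]$: each $I_j$ is a (possibly empty) block of consecutive integers, the $I_j$ are pairwise disjoint, and their union is $[1\dd n]$, so they form a partition of $[n]$ into $k$ consecutive parts (allowing empty parts, which is harmless for the intended application of distributing items among $k$ nodes).

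It then remains to bound the weight $\sum_{i\in I_j}x_i=P_{b_j}-P_{b_{j-1}}$ of each part. For the right endpoint, $P_{b_j}\le jX/k$: this is the defining property of $b_j$ when $j\le k-1$, and for $j=k$ it holds because $P_{b_k}=P_n=X=kX/k$. For the left endpoint, $P_{b_{j-1}}\ge (j-1)X/k-x$: for $j=1$ this just reads $0\ge -x$, and for $j\ge 2$ we use $b_{j-1}<n$ (so $b_{j-1}+1$ is a legitimate index) together with the maximality of $b_{j-1}$, which forces $P_{b_{j-1}+1}>(j-1)X/k$ and hence $P_{b_{j-1}}=P_{b_{j-1}+1}-x_{b_{j-1}+1}>(j-1)X/k-x$ by $x_i\le x$. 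Subtracting the two estimates gives $\sum_{i\in I_j}x_i\le jX/k-((j-1)X/k-x)=X/k+x$ for every $j$, which is exactly the claimed inequality (in fact strict whenever $j\ge 2$).

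The only step that takes any care — and it is minor — is handling the two extreme blocks $I_1$ and $I_k$ correctly, since there one of the two prefix-sum estimates degenerates into an equality, together with the bookkeeping fact that $b_{j-1}<n$ for every $j\in[1\dd k]$, which is what permits invoking the maximality of $b_{j-1}$ at the index $b_{j-1}+1$; this in turn is precisely the inequality $(k-1)X/k<X$. Everything else is routine prefix-sum manipulation, and the argument uses nothing about the $x_i$ beyond $0\le x_i\le x$, so it is insensitive to whether or not $0$ is counted among the natural numbers.
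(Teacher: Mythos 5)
Your argument is correct and complete: the prefix-sum/threshold construction with $b_j$ the largest index satisfying $P_{b_j}\le jX/k$, together with the maximality argument giving $P_{b_{j-1}}>(j-1)X/k-x$, yields exactly the claimed bound, and you handle the degenerate cases ($X=0$, empty blocks, the extreme blocks $I_1$ and $I_k$) properly. Note that the paper does not prove this lemma at all -- it imports it verbatim from \cite{CDKL21} -- so there is no internal proof to compare against; your argument is the standard one for such balanced consecutive partitions and would serve as a correct self-contained substitute for the citation.
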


\begin{lemma}[{\cite[{Lemma 1.2}]{JN18}}]\label{lem:aux_ndes}
	Let $A$ be a \cc algorithm which, except of  the  nodes $u_1,\dots , u_n$ corresponding  to the  input strings,  uses $O(n)$ auxiliary nodes $v_1, v_2, \dots $ such that  the auxiliary nodes do not have initially any knowledge of the input  strings  on  the  nodes $u_1,\dots, u_n$.  Then, each round of $A$ might be simulated in $O(1)$ rounds in the  standard \cc model, without auxiliary nodes.
\end{lemma}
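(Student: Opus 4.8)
The plan is a straightforward round-by-round simulation in which every real node impersonates its homonymous virtual node together with a constant number of auxiliary nodes. Let $c$ be the constant for which $A$ uses at most $cn$ auxiliary nodes $v_1,\dots,v_{cn}$. I would assign to real node $i$ the virtual node $u_i$ together with the block $v_{(i-1)c+1},\dots,v_{ic}$ of auxiliary nodes. This assignment costs nothing to set up: the state of $u_i$ already resides on real node $i$, and the initial state of every auxiliary node is fixed and independent of the input (this is exactly what the ``no initial knowledge'' hypothesis guarantees), so real node $i$ can write down the initial states of its $c$ auxiliary nodes with no communication.

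The invariant I would maintain is that, just before the real network simulates round $t$ of $A$, each real node stores the complete local state of its $c+1$ virtual nodes as of the beginning of that round. Given the invariant, one round of $A$ is simulated as follows. Each real node uses its stored states to compute, for each of its $c+1$ virtual nodes, the $O(n)$ messages that virtual node sends in round $t$ --- one $O(\log n)$-bit message per virtual destination, of which there are $(c+1)n = O(n)$ --- and labels each message with the real node owning the destination. This is a routing instance in which every real node is the target of $O(n)$ messages of $O(\log n)$ bits each, and in which the messages a node wishes to send are a function of its $O(n)$-word (hence $O(n\log n)$-bit) local state; by \cref{thm:gen_routing} (Lenzen's routing, \cite{Lenzen13}) it is resolved in $O(1)$ rounds. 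Afterwards each real node holds, for each virtual node it owns, all of that node's round-$t$ incoming messages, so it can locally run $A$'s transition function and obtain the beginning-of-round-$(t{+}1)$ states, restoring the invariant.

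Iterating over the rounds of $A$ gives the claim: each round of $A$ is replaced by $O(1)$ rounds of the standard \cc model with only the nodes $u_1,\dots,u_n$, and since $u_1,\dots,u_n$ are simulated in place, $A$'s output ends up exactly where $A$ would have placed it. I do not expect a genuine obstacle here; the only points to check carefully are (i) that each real node sends and receives only $O(n)$ messages per simulated round --- which holds because it owns $O(1)$ virtual nodes and each virtual node is involved in $O(n)$ messages per round --- and (ii) that the auxiliary-node assignment can be initialized without communication, which is precisely the role played by the assumption that the auxiliary nodes start with no knowledge of the input.
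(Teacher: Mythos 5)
Your simulation is correct and is essentially the standard argument behind this lemma (which the paper only cites from \cite{JN18} rather than reproving): each real node hosts its own virtual node plus $O(1)$ auxiliary nodes, initializes them for free since their starting states are input-independent, and each round becomes a routing instance in which every real node sends and receives $O(n)$ messages of $O(\log n)$ bits, resolvable in $O(1)$ rounds by Lenzen's routing. The only minor imprecision is the appeal to the $O(n\log n)$-bit condition of the generalized routing lemma, which is unnecessary here: since each real node explicitly knows the $O(n)$ messages it must send, plain Lenzen routing already suffices.
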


	Our algorithm is a generalization of Lenzen's~\cite{Lenzen13} sorting algorithm. In~\cite{Lenzen13}, each node is given $n$ keys of size $O(1)$ words of space (i.e. $O(\log n)$ bits) and the nodes need to learn the ranks of their keys in the total order of the union of all keys.
	Our algorithm uses similar methods but has another level of recursion to handle also objects of size $\omega(1)$  (yet $O(n^{1/3})$) words.

	In this section, we prove the following lemma.

	\begin{lemma}\label{lem:sorting_objects}
		Consider a variant of~\cref{prob_sorting} where each object is of size $O(n^{1/3})$ words.
		Moreover, every object is stored in one node.
		Then, there exists an algorithm that solves this variant in $O(1)$ rounds.
	\end{lemma}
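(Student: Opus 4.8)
The plan is to lift Lenzen's $O(1)$-round sample sort from $O(1)$-word keys to objects of $s=O(n^{1/3})$ words, the extra factor being absorbed by one additional level of recursion. I would start with each node locally sorting the objects it holds; this is free, since a node stores only $O(n)$ words. Sample sort then proceeds by choosing splitter objects, routing every object to the bucket it falls into, sorting each bucket, and prefix-summing bucket sizes to read off global ranks. The obstruction compared with the $O(1)$-word case is that a splitter is itself an object of $s$ words, so broadcasting $\Theta(n)$ splitters to all nodes would cost $\Theta(ns)=\Theta(n^{4/3})$ words per node. I would therefore build a two-level bucket hierarchy.

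Concretely: at the coarse level pick $\Theta(n^{2/3})$ splitter objects; their total size is $O(n^{2/3}\cdot s)=O(n)$ words, so every node can learn all of them in $O(1)$ rounds (\cref{thm:gen_routing}). Choosing them so that the induced $\Theta(n^{2/3})$ buckets are size-balanced is exactly what \cref{lem:dist_info} provides (taking $x=s$ and $k=n^{2/3}$), giving coarse buckets of $O(n^{4/3})$ words each; assign each coarse bucket to a group of $\Theta(n^{1/3})$ consecutive nodes (so $O(n)$ words per node), using \cref{lem:aux_ndes} to organize the groups, and route objects there with \cref{thm:gen_routing} after each node compares its objects against the coarse splitters locally. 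Inside a single group of $\Theta(n^{1/3})$ nodes, repeat one level down: pick $\Theta(n^{1/3})$ fine splitters (total $O(n^{2/3})$ words, broadcastable inside the group), split the group's $O(n^{4/3})$ words into $\Theta(n^{1/3})$ fine buckets of $O(n)$ words each (again via \cref{lem:dist_info}), and route. Each fine bucket now lives on one node, which sorts it and computes local ranks; a prefix sum over coarse-bucket sizes and then over fine-bucket sizes within each group yields each object's global rank, which is returned to the node that originally held it (\cref{lem:queries_routing}). The recursion has depth two, so the total cost is $O(1)$ rounds.

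The step I expect to be the real obstacle is the selection of the coarse (and fine) splitters. In the $O(1)$-word setting one gathers an evenly-spaced sample at a single node and sorts it, but here the straddling objects alone make any such sample $\Theta(ns)$ words, so the sample need not be smaller than the input and cannot be gathered at one node. I would instead select splitters by a shallow recursion: each node contributes coarsely-spaced candidate objects, those candidates---being fewer, hence of smaller total size than the whole input---are sorted by a recursive call to the same algorithm, and the bucket boundaries are extracted from the sorted candidates; the fact that $s=n^{1/3}$ is polynomially smaller than $n$ is what bounds the recursion depth by a constant. Making this simultaneously keep every intermediate splitter set cheap to broadcast at its level and keep every bucket within per-node memory (via \cref{lem:dist_info}) is the technical heart of the argument; the remaining components are routing (\cref{thm:gen_routing,lem:queries_routing}) and purely local work.
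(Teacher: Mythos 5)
Your overall strategy---per-node candidates spaced by cumulative size, a recursive sort of the candidate sample to extract delimiters, delimiter-based redistribution into groups, and constant recursion depth---is the same as the paper's (\cref{alg:n13,alg:n23}). But the concrete top level of your hierarchy does not go through, and it breaks exactly at the step you defer as ``the technical heart.'' You cannot invoke \cref{lem:dist_info} to choose the $\Theta(n^{2/3})$ coarse splitters: that lemma partitions a sequence whose order is already given, so applying it to the object sizes ``in sorted order'' presupposes the global sorted order you are trying to compute. Once the splitters instead come from a recursively sorted sample of per-node candidates with spacing $\Delta$ words (your own proposed fix), the bucket-size guarantee degrades from $X/k+x$ to roughly $n^2/k+n\Delta$, because every one of the $n$ nodes can contribute up to $\Delta$ words of non-candidate objects to a single bucket. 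With $k=n^{2/3}$ buckets you would need $\Delta=O(n^{1/3})$ to keep buckets at $O(n^{4/3})$ words, i.e., essentially every object becomes a candidate and the sample is as large as the input---precisely the obstruction you set out to avoid; with any useful spacing (say $\Delta=n^{2/3}$) a coarse bucket can reach $\Theta(n^{5/3})$ words and cannot be stored in a group of $\Theta(n^{1/3})$ nodes, whose total capacity is $O(n^{4/3})$ words.

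The paper sidesteps this by using branching factor $n^{1/3}$ at every level: only $n^{1/3}$ delimiters are extracted from the recursively sorted candidates (spacing $n^{2/3}$ per node), so each bucket has total size $O(|U|\cdot n^{2/3})$ (\cref{lem:unionpieceadv}) and is assigned to a \emph{group} of $\floor{|U|/n^{1/3}}$ nodes, which is then sorted recursively; \cref{lem:dist_info} enters only afterwards, used by a group leader that has received the actual loads $\|K_{i,j}\|$ to balance the incoming pieces among the nodes of its group. This costs one more (still constant) level of recursion than your two-level scheme; your fine level, with spacing $n^{2/3}$ inside a group of $n^{1/3}$ nodes, is exactly \cref{alg:n13} and its analysis (\cref{lem:unionpiecebasic}) is sound. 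Two smaller points: the rank computation must respect the paper's convention that equal objects receive equal ranks (the number of \emph{distinct} smaller elements), which a plain prefix sum over bucket sizes does not do---the paper handles duplicates crossing node boundaries by having each node send its largest object to its successor before counting distinct objects; and returning the ranks to the original nodes needs only Lenzen's routing, not \cref{lem:queries_routing}.
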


	The main part of the algorithm is sorting the objects of the network by redistributing the objects among the nodes such that for any two objects $B< B'$ that the algorithm sends to nodes $v_i$ and $v_j$, respectively, we have $i\le j$.

	The algorithm stores with each object $B$ the original node of $B$ and the index of $B$ in the original node.
	The algorithm uses the order of original nodes and indices to break ties.

	To sort large objects of size $O(n^{1/3})$ words, the algorithm uses two building blocks.
	First, we show how to sort the objects of a set of $n^{1/3}$ nodes\footnote{We assume that $n^{1/3}$ is an integer. Otherwise, we add $O(n)$ auxiliary nodes such that the total number of nodes $n'$ holds $n'^{1/3}\in \mathbb{N}$. By~\cref{lem:aux_ndes} the round complexity is increased only by a constant factor.}.
	This algorithm is the base of the second building block, which is a recursive algorithm that sorts the objects of a set of $\omega(n^{1/3})$ nodes.

	\subsection{Sorting at most \texorpdfstring{$n^{1/3}$}{n\textasciicircum 1/3} Nodes with Objects of Size \texorpdfstring{$O(n^{1/3})$}{O(n\textasciicircum 1/3)}}

	In this section we present \cref{alg:n13}, that sorts all the objects in a set of nodes $W\subset V$ of at most $n^{1/3}$ nodes, each object of size $O(n^{1/3})$ words, and each node stores $O(n)$ words.
	As in~\cite{Lenzen13}, each node marks some objects as \emph{candidates}.
	Then, $n^{1/3}$ of the candidates are chosen to be delimiters, and the objects are redistributed according to these \emph{delimiters}.
	The main part of the analysis is to prove that the redistribution works well, i.e. the delimiters divide the nodes into sets of almost evenly sizes and therefore each set can be sent to one node in $O(1)$ rounds.

	\begin{algorithm}[h!]
		\caption{Sorting objects of at most $n^{1/3}$ nodes}\label{alg:n13}
		\KwIn{Set $W$ of at most $n^{1/3}$ nodes, each node stores objects of size $O(n^{1/3})$ words each, a total of $O(n)$ words per node and every object is stored in one node.}
		Each node in $W$ locally sorts its objects\label{algn13:ln1}\;
		Each node in $W$ marks for every positive integer $i$ the smallest (due to the order of step~\ref{algn13:ln1}) object $B$ such that the total size of all the objects smaller than $B$ is at least $i\cdot n^{2/3}$. The marked objects are called candidates\label{algn13:ln2}\;
		Each node in $W$ announces the candidates to all other nodes in $W$\label{algn13:ln3}\;
		Let $\T$ be the union of the candidates. Each node in $W$ locally sorts $\T$ and selects every $\ceil{|\T|/|W|}$th object according to this order.
		We call such an object a delimiter\label{algn13:ln4}\;
		Each node $v_i\in W$ splits its original input into $|W|$ subsets, where the $j$th subset $K_{i,j}$ contains all objects that are larger than the $(j-1)$th delimiter (for $j = 1$ this condition does not apply) and smaller or equal to the $j$th delimiter (for $j = |W|$ this condition does not apply)\label{algn13:ln5}\;
		Each node $v_i\in W$ sends $K_{i,j}$ to the $j$th node of $W$\label{algn13:ln6}\;
		Each node $v_i$ in $W$ locally sorts the objects $v_i$ receives in~\ref{algn13:ln6}\label{algn13:ln7}\;
	\end{algorithm}

	\para{Correctness.}The correctness of~\cref{alg:n13} derives from steps~\ref{algn13:ln4} to~\ref{algn13:ln6}. As in \cite[{Lemma 4.2}]{Lenzen13} due to the partitioning by delimiters, all the objects in $K_{i,j}$ are larger than the objects in $K_{i',j'}$ for all $v_i,v_{i'} \in W$ and $j' < j$.

	\para{Complexity.}We now show that~\cref{alg:n13} runs in $O(1)$ rounds.
	Notice that communication only happens in steps~\ref{algn13:ln3} and~\ref{algn13:ln6}.
	In both steps, each node sends $O(n)$ words.
	We will show that each node also receives $O(n)$ words, and therefore we can use Lenzen's routing scheme.

	For step~\ref{algn13:ln3}, notice that $|W|\le n^{1/3}$ nodes, there are $O(n^{1/3})$ candidates per node, and the size of any candidate is $O(n^{1/3})$ words.
	Therefore each node receives $O(n^{1/3})\cdot O(n^{1/3})\cdot O(n^{1/3})=O(n)$ words of space.

	It is left to prove that in step~\ref{algn13:ln6} each node receives $O(n)$ words.
	A similar argument was also proved in \cite[{Lemma 4.3}]{Lenzen13}, and we show here that partitioning the objects using step~\ref{algn13:ln2} is an efficient interpretation of Lenzen's sorting algorithm in the case of large objects.

	\begin{lemma}\label{lem:unionpiecebasic}
		When executing~\cref{alg:n13}, for each $j\in [1..|W|]$, it holds that 	\\$\left\|\bigcup_{i=1}^{|W|}K_{i,j}\right\|=O(n)$.
	\end{lemma}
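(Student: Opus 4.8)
The plan is to bound $\|\bigcup_{i=1}^{|W|} K_{i,j}\|$ by analyzing the gap between consecutive delimiters, measured in total object size. The key observation is that the candidates marked in step~\ref{algn13:ln2} form, within each node, a set of "size milestones": between any two consecutive candidates of a single node, the total size of the objects strictly between them is less than $n^{2/3}$ (plus possibly one extra object of size $O(n^{1/3})$ at the boundary), since a new candidate is marked each time the cumulative size crosses a multiple of $n^{2/3}$. Summed over all $|W| \le n^{1/3}$ nodes, the total size of objects lying strictly between two values that are consecutive candidates \emph{somewhere} is $O(n^{1/3} \cdot n^{2/3}) = O(n)$; but we need a statement about consecutive \emph{delimiters}, not consecutive candidates.

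First I would set up the counting at the level of $\T$, the sorted union of all candidates. A delimiter is every $\ceil{|\T|/|W|}$th element of the sorted $\T$, so between two consecutive delimiters there are at most $\ceil{|\T|/|W|} = O(|\T|/n^{1/3})$ candidates. I would bound $|\T|$: each node contributes $O(n^{2/3}/n^{2/3}) $... more carefully, a node storing $O(n)$ words contributes at most $O(n / n^{2/3}) = O(n^{1/3})$ candidates, so $|\T| = O(n^{1/3} \cdot n^{1/3}) = O(n^{2/3})$, hence at most $O(n^{2/3}/n^{1/3}) = O(n^{1/3})$ candidates lie between two consecutive delimiters. Then I would argue that within any single node $v_i$, the objects falling into $K_{i,j}$ (i.e., between delimiter $j-1$ and delimiter $j$) have total size $O(n^{2/3})$: this is because any maximal run of objects of $v_i$ with total size $\ge n^{2/3}$ must contain at least one candidate of $v_i$, and at most $O(n^{1/3})$ candidates of $v_i$ can lie in this range (as a subset of the $O(n^{1/3})$ candidates between the delimiters). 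Combined with the "$+x$" slack for one oversized object, this gives $\|K_{i,j}\| = O(n^{2/3})$. Summing over the $|W| \le n^{1/3}$ nodes yields $\|\bigcup_i K_{i,j}\| = O(n)$.

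An alternative, cleaner route is to invoke \cref{lem:dist_info}: view the sorted sequence of all objects in $W$ (ordered by object value, ties broken by original node and index) as the sequence $x_1,\dots$ of object sizes, with $X = O(n)$ total, $x_i \le x = O(n^{1/3})$, and $k = |W|$. The lemma gives a partition into consecutive blocks each of size $\le X/k + x = O(n^{2/3})$; the candidate-marking in step~\ref{algn13:ln2} is exactly engineered so that the delimiters refine (or are refined by) such a partition, and the point is that each $\bigcup_i K_{i,j}$ is contained in a union of $O(1)$ such blocks. I expect the main obstacle to be the bookkeeping that connects the \emph{locally} computed candidates (each node only sees its own cumulative sizes) to a \emph{global} near-even partition — in particular, showing that selecting every $\ceil{|\T|/|W|}$th candidate as a delimiter does not allow a single $K_{i,j}$ to accumulate more than $O(n^{2/3})$ words from one node, which requires carefully tracking how many of a fixed node's candidates can be "skipped over" between two consecutive delimiters and handling the boundary objects that straddle a delimiter. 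This is the same subtlety as in \cite[{Lemma 4.3}]{Lenzen13}, adapted to objects whose sizes vary rather than being uniform.
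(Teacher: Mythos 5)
Your overall skeleton matches the paper's argument (count the candidates that can fall between two consecutive delimiters, and bound the size between consecutive candidates of a single node by $O(n^{2/3})$), but the step where you localize this to a single node is wrong, and it is exactly the step the whole bound hinges on. You claim that $\|K_{i,j}\|=O(n^{2/3})$ for every node $v_i$, deducing it from ``at most $O(n^{1/3})$ candidates of $v_i$ lie between the two delimiters.'' That premise only yields $\|K_{i,j}\|\le (d_i+1)\cdot O(n^{2/3})=O(n)$ per node, where $d_i$ is the number of $v_i$'s candidates inside $K_{i,j}$; it does not yield $O(n^{2/3})$, and the $O(n^{2/3})$ claim is in fact false. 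For a concrete counterexample, let $v_1$ hold $\Theta(n)$ words of objects that are all smaller than every object of every other node. Then $v_1$ has $\Theta(n^{1/3})$ candidates, all of which are the smallest elements of $\T$, and since a delimiter is taken only every $\ceil{|\T|/|W|}=\Theta(n^{1/3})$ candidates, the first delimiter can lie at or beyond $v_1$'s last candidate, so $K_{1,1}$ contains essentially all of $v_1$'s input, i.e.\ $\Theta(n)$ words. Summing a (false) uniform per-node bound of $O(n^{2/3})$ over $|W|\le n^{1/3}$ nodes is therefore not a valid route; the correct aggregation, which is what the paper does, is to bound the \emph{sum} $\sum_{i=1}^{|W|} d_i$ by the total number of candidates between the two delimiters, $\ceil{|\T|/|W|}=O(n^{1/3})$, and then compute $\sum_i\|K_{i,j}\|\le O(n^{2/3})\sum_i(d_i+1)=O(n^{2/3})\cdot O(n^{1/3}+|W|)=O(n)$. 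In other words, a single node may contribute up to $\Theta(n)$ words to one $K_{\cdot,j}$, but then the other nodes necessarily contribute few candidates (hence few words), and only the global count saves the bound.

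Your alternative route via \cref{lem:dist_info} also does not go through as stated: that lemma produces a near-even partition of a sequence into consecutive blocks, but the delimiters are every $\ceil{|\T|/|W|}$th \emph{candidate}, chosen from locally generated size milestones, and there is no justification for the claim that each $\bigcup_i K_{i,j}$ is contained in $O(1)$ blocks of such a partition (this is essentially the statement to be proved, not an input to it). The paper uses \cref{lem:dist_info} elsewhere (for distributing load inside a group $W_j$ in the recursive algorithm), not for this lemma. So the proposal needs the global candidate-counting repair above to be a proof.
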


 	\begin{proof}
        Let $d_i$ be the number of candidates in $K_{i,j}$.
	First notice that due to the choice of the delimiters, $\bigcup_{i=1}^{|W|}K_{i,j}$ contains at most $\ceil{|\T|/|W|} = |W|\cdot O(n^{1/3})/|W| = O(n^{1/3})$
		candidates and therefore $\sum_{i=1}^{|W|}d_i=\ceil{|\T|/n^{1/3}}=O(n^{1/3})$.

		Since by step~\ref{algn13:ln2} the total size of objects between two consecutive candidates in one node is $O(n^{2/3})$ words, we have that $\|K_{i,j}\| \le (d_i+1)\cdot O(n^{2/3})$.

		Therefore,
            \[\mathop{\left\|{\bigcup}_{i=1}^{|W|}K_{i,j}\right\|} = \sum_{i=1}^{|W|}\|K_{i,j}\| = O(n^{2/3})\cdot \sum_{i=1}^{|W|}	(d_i+1)
		O(n^{2/3})\cdot O(n^{1/3}+|W|) = O(n).\]
	\end{proof}

	\subsection{Sorting more than \texorpdfstring{$n^{1/3}$}{n\textasciicircum 1/3} Nodes with Objects of Size \texorpdfstring{$O(n^{1/3})$}{O(n\textasciicircum 1/3)}}

	The following algorithm sorts all the objects in $U\subseteq V$ for $|U|>n^{1/3}$ nodes, where each object is of size $O(n^{1/3})$ words, and each node stores $O(n)$ words.
	In particular, this algorithm sorts all the objects in $n$ nodes.

	In this recursive algorithm, each node marks some objects as candidates, such that all the candidates are fit into $O({|U|/n^{1/3}})$ nodes.
	The candidates are sorted recursively, and $n^{1/3}$ of the candidates are chosen to be delimiters.
	Then, the objects are redistributed according to these delimiters, and each set of size $O(|U|/n^{1/3})$ nodes is sorted recursively.

	\begin{algorithm}[h!]
		\caption{Sorting objects of more than $n^{1/3}$ nodes}\label{alg:n23}
		\KwIn{Set $U$ with  $|U|>n^{1/3}$ nodes, each node stores objects of size $O(n^{1/3})$ words each, a total of $O(n)$ words per node and every object is stored in one node.}

		Each node in $U$ locally sorts its objects\label{algn23:ln1}\;

		Each node in $U$ marks for every positive integer $i$ the smallest (due to the order of step~\ref{algn13:ln1}) object $B$ such that the total size of all the objects smaller than $B$ is at least $i\cdot n^{2/3}$. The marked objects are called candidates\label{algn23:ln2}\;

		All the candidates are distributed among the first $\ceil{|U|/n^{1/3}}$ nodes (see details below)\label{algn23:ln3}\;

		Using~\cref{alg:n13} (if $\ceil{|U|/n^{1/3}}\leq n^{1/3}$) or~\cref{alg:n23} (otherwise), the first $\ceil{|U|/n^{1/3}}$ nodes sort all the candidates\label{algn23:ln4}\;

		Let $\T$ be the union of the sorted candidates in the first $\ceil{|U|/n^{1/3}}$ nodes.
		Every $\ceil{|\T|/n^{1/3}}$th object according to this order is selected to be a delimiter (see details below).
		The delimiters are announced to all the nodes in $U$\label{algn23:ln5}\;

		Each node $v_i\in U$ splits its original input into $n^{1/3}$ subsets,
		where the $j$th subset $K_{i,j}$ contains all objects that are larger than the $(j-1)$th delimiter
		(for $j = 1$ this condition does not apply) and smaller or equal to the $j$th delimiter (for $j = n^{1/3}$ this condition does not apply)\label{algn23:ln6}\;
		The nodes of $U$ are partitioned into $n^{1/3}$ disjoint sets $W_1,W_2,\dots,W_{n^{1/3}}$, each of size $\floor{|U|/n^{1/3}}$ nodes.
		Each node $v_i\in U$ sends $K_{i,j}$ to $W_j$ (see details below)\label{algn23:ln7}\;
		Using~\cref{alg:n13} (if $\floor{|U|/n^{1/3}}\leq n^{1/3}$) or~\cref{alg:n23} (otherwise), each set $W_j$ sorts all the objects received in $W_j$\label{algn23:ln8}\;
	\end{algorithm}

	\para{Correctness.} The correctness of~\cref{alg:n23} stems from steps~\ref{algn23:ln5} to~\ref{algn23:ln8} and follows analogously to the correctness of~\cref{alg:n13}.

	\para{Complexity.} We will focus on the steps in~\cref{alg:n23} where communication is made and show that each step takes $O(1)$ rounds.

	In step~\ref{algn23:ln3}, we need to further explain some algorithmic details.
	Each node sends $O(n^{1/3})$ objects of size $O(n^{1/3})$ words, so at most $O(n^{2/3})$ words per node.
	The candidates of node $v_i$ are sent to node $v_j$ for $j=\ceil{\frac{i}{n^{1/3}}}$, therefore each node receives at most $n^{1/3}\cdot O(n^{2/3})=O(n)$ words.
	By Lenzen's routing scheme
	this is done in $O(1)$ rounds.
	In step~\ref{algn23:ln4}, notice that since the number of nodes is at most $n$, the depth of the recursion is $O(1)$, therefore the recursion does not increase the round complexity asymptotically.

	In step~\ref{algn23:ln5}, the delimiters should be recognized.
	Each node $v$ in the first $\ceil{|U|/n^{1/3}}$ nodes broadcasts the number of objects that $v$ receives in step~\ref{algn23:ln4}.
	Therefore, each node $v$ computes for every object $B$ whether the rank of $B$ among the candidates is a multiple of $\ceil{|\T|/n^{1/3}}$ and if so, selects $B$ to be a delimiter.
	There are $O(n^{1/3})$ delimiters of size $O(n^{1/3})$ each, which is in total $O(n^{2/3})$ words. Therefore, the delimiters are announced to all the nodes in $O(1)$ rounds using \cref{thm:gen_routing}.

	In step~\ref{algn23:ln7} we first show that the total number of words that each set $W_j$ receives, is $O(|U|\cdot n^{2/3})$ words.

    \begin{lemma}\label{lem:unionpieceadv}
        When executing~\cref{alg:n23}, for each $j\in [1..n^{1/3}]$, it holds that
		\\$ \left\| \bigcup_{i=1}^{|U|}K_{i,j}\right\|=O(|U|\cdot n^{2/3})$.
	\end{lemma}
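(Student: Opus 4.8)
The plan is to adapt the proof of \cref{lem:unionpiecebasic} almost verbatim, adjusting the bookkeeping to the fact that we now have $|U|>n^{1/3}$ nodes and that step~\ref{algn23:ln5} selects a delimiter every $\ceil{|\T|/n^{1/3}}$-th candidate (rather than every $\ceil{|\T|/|W|}$-th, as in \cref{alg:n13}). Throughout, let $d_i$ denote the number of candidates that fall inside $K_{i,j}$.

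First I would bound $|\T|$, the total number of candidates. Each node of $U$ stores $O(n)$ words, and by step~\ref{algn23:ln2} any two consecutive candidates inside a node enclose at least $n^{2/3}$ words, so each node marks $O(n/n^{2/3})=O(n^{1/3})$ candidates; summing over the $|U|$ nodes gives $|\T|=O(|U|\cdot n^{1/3})$ (this was already noted in the complexity analysis of step~\ref{algn23:ln3}). Since $\bigcup_{i=1}^{|U|}K_{i,j}$ is exactly the set of objects lying between the $(j-1)$-th and $j$-th delimiters, and consecutive delimiters are $\ceil{|\T|/n^{1/3}}$ candidates apart in the sorted order of $\T$, this union contains $O(|\T|/n^{1/3})=O(|U|)$ candidates, and therefore $\sum_{i=1}^{|U|}d_i=O(|U|)$.

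Then I would argue, exactly as in \cref{lem:unionpiecebasic}, that $K_{i,j}$ is a contiguous block of the sorted objects of $v_i$ containing $d_i$ candidates, hence it is covered by at most $d_i+1$ consecutive-candidate stretches of $v_i$ (the stretch before its first candidate, the $d_i-1$ stretches between candidates, and the stretch after its last candidate, with the usual conventions for $j=1$ and $j=n^{1/3}$), each of total size $O(n^{2/3})$ by step~\ref{algn23:ln2}. This gives $\|K_{i,j}\|\le (d_i+1)\cdot O(n^{2/3})$, so summing over the $|U|$ nodes,
\[
\left\| \bigcup_{i=1}^{|U|} K_{i,j} \right\| = \sum_{i=1}^{|U|}\|K_{i,j}\| \le O(n^{2/3})\cdot\sum_{i=1}^{|U|}(d_i+1) = O(n^{2/3})\cdot O(|U|) = O(|U|\cdot n^{2/3}),
\]
where the final step uses $\sum_{i=1}^{|U|}(d_i+1)=\sum_{i=1}^{|U|}d_i+|U|=O(|U|)$. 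I do not expect a genuine obstacle here; the only points needing a little care are the candidate count $|\T|=O(|U|\cdot n^{1/3})$ and the conventions at the two extreme delimiters (ensuring the boundary stretches of $K_{i,1}$ and $K_{i,n^{1/3}}$ are still $O(n^{2/3})$, which uses that a single object has size $O(n^{1/3})$), both handled precisely as in \cref{lem:unionpiecebasic}.
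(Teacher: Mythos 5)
Your proposal is correct and follows essentially the same route as the paper's proof: bound the number of candidates falling in $\bigcup_i K_{i,j}$ by $O(|\T|/n^{1/3})=O(|U|)$ via the delimiter spacing, bound each $\|K_{i,j}\|$ by $(d_i+1)\cdot O(n^{2/3})$ using step~\ref{algn23:ln2}, and sum over the $|U|$ nodes. The extra care you note about $|\T|=O(|U|\cdot n^{1/3})$ and the boundary stretches is consistent with (and slightly more explicit than) the paper's argument.
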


		\begin{proof}
		The proof is similar to the proof of~\cref{lem:unionpiecebasic}.
		Let $d_i$ be the number of candidates in $K_{i,j}$.
		Due to the choice of the delimiters, $\bigcup_{i=1}^{|U|}K_{i,j}$ contains $\ceil{|\T|/n^{1/3}}=|U|\cdot O(n^{1/3})/n^{1/3}= O(|U|)$ candidates and therefore $\sum_{i=1}^{|U|}d_i=\ceil{|\T|/n^{1/3}}=O(|U|)$.
		Since by step~\ref{algn23:ln2} the total size of objects between two consecutive candidates is $O(n^{2/3})$ words, we have that $\|K_{i,j}\| \le (d_i+1)\cdot O(n^{2/3})$.
		The number of words that are sent to set $W_j$ is at most
	\begin{align*}
 \left\|\bigcup_{i=1}^{|U|}{K_{i,j}}\right\|&=\sum_{i=1}^{|U|}{\|K_{i,j}\|}= O(n^{2/3})\cdot \sum_{i=1}^{|U|} (d_i+1)
  \\&=O(n^{2/3})O(|U|+|U|)=O(n+n^{2/3}|U|)=O(|U|\cdot n^{2/3}).
  \end{align*}
	\end{proof}

	Now, the algorithm selects for every set $W_j$ a leader $v_{W_j}$.
	Each node $v_i$ sends $\|K_{i,j}\|$ to $v_{W_j}$.
	The leader $v_{W_j}$ computes and sends to each node $v_i\in U$ a node $w\in W_j$ such that $v_i$ should send $K_{i,j}$ to $w$.
	By \cref{lem:dist_info}, there is a computation such that for each $w\in W_j$, the number of words that $w$ receives is at most $O(n)$ words,
	by setting in the lemma $x_i\coloneqq\|K_{i,j}\|$, $X\coloneqq|U|\cdot O(n^{2/3})$, $x\coloneqq O(n)$ and $k\coloneqq|W_j|=\floor{|U|/n^{1/3}}$.
	On the other hand, each node sends at most $O(n)$ words.
	By Lenzen's routing scheme this is done in $O(1)$ rounds.

	In step~\ref{algn23:ln8}, similar to step~\ref{algn23:ln4}, the depth of the recursion is $O(1)$.

	We are ready to prove~\cref{lem:sorting_objects}.
	\begin{proof}[{Proof of~\cref{lem:sorting_objects}}]
		First, we apply~\cref{alg:n23} with $U=V$.
		Hence, all the objects are ordered in a non-decreasing lexicographical order among all the nodes of the network.

		Next, we show how to compute for each object $B$, $\rank(B)$.
		Each node $v_i$ for $1\leq i< n$ sends to node $v_{i+1}$ the largest object of $v_i$ (by the lexicographical order), denoted $B^\ell_i$.
		Then, each node $v_i$ computes and broadcasts the number of distinct objects $v_i$ holds that are different from $B^\ell_{i-1}$ (for $i=1$, node $v_i$ just broadcasts the number of distinct objects $v_i$ holds), ignoring the tiebreakers of the original node and original index (notice that the number of distinct objects might be 0).
		Now, each node $v_i$ computes the rank of all the objects $v_i$ holds.

		Lastly, for every object $B$, $\rank(B)$ is sent to the original node of $B$, using the information of the original node that $B$ stores.
		By Lenzen's routing scheme
		this is done in $O(1)$ rounds.
	\end{proof}

\subsection{Sorting Objects of Size \texorpdfstring{$O(n^{1-\eps})$}{O(n\textasciicircum (1-ε))}}\label{sec:sorting_general_large_objects}
	In this section we explain how to sort objects of size $O(n^{1-\eps})$ for a constant $\eps>0$.
	The algorithm is a straightforward generalization of \cref{alg:n13} to general ${1-\eps}$ instead of $1/3$ (i.e. $\eps = 2/3$).

	First, in \cref{alg:n1e} we show an algorithm that sorts $n^{\eps/2}$ nodes.
	\begin{algorithm}[h!]
		\caption{Sorting objects of at most $n^{\eps/2}$ nodes}\label{alg:n1e}
		\KwIn{Set $W$ of at most $n^{\eps/2}$ nodes, each node stores objects of size $O(n^{1-\eps})$ words each, a total of $O(n)$ words per node and every object is stored in one node.}
		Each node in $W$ locally sorts its objects\label{algn1e:ln1}\;
		Each node in $W$ marks for every positive integer $i$ the smallest (due to the order of step~\ref{algn13:ln1}) object $B$ such that the total size of all the objects smaller than $B$ is at least $i\cdot n^{1-\eps/2}$. The marked objects are called candidates\label{algn1e:ln2}\;
		Each node in $W$ announces the candidates to all other nodes in $W$\label{algn1e:ln3}\;
		Let $\T$ be the union of the candidates. Each node in $W$ locally sorts $\T$ and selects every $\ceil{|\T|/|W|}$th object according to this order.
		We call such an object a \emph{delimiter}\label{algn1e:ln4}\;
		Each node $v_i\in W$ splits its original input into $|W|$ subsets, where the $j$th subset $K_{i,j}$ contains all objects that are larger than the $(j-1)$th delimiter (for $j = 1$ this condition does not apply) and smaller or equal to the $j$th delimiter (for $j = |W|-1$ this condition does not apply)\label{algn1e:ln5}\;
		Each node $v_i\in W$ sends $K_{i,j}$ to the $j$th node in $W$\label{algn1e:ln6}\;
		Each node $v_i$ in $W$ locally sorts the objects $v_i$ received in~\ref{algn1e:ln6}\label{algn1e:ln7}\;
	\end{algorithm}

	The correctness and complexity follows analogously with the correctness and complexity of \cref{alg:n13}.

	\para{Correctness.}The correctness of~\cref{alg:n13} derives from steps~\ref{algn13:ln4} to~\ref{algn13:ln6}.
	As in \cite[{Lemma 4.2}]{Lenzen13} due to the partitioning by delimiters, all the objects in $K_{i,j}$ are larger than the objects in $K_{i',j'}$ for all $v_i,v_i' \in W$ and $j' < j$.

	\para{Complexity.}We now show that~\cref{alg:n1e} runs in $O(1)$ rounds.
	Notice that communication only happens in steps~\ref{algn1e:ln3},  \ref{algn1e:ln6}.
	In both steps, each node sends $O(n)$ words.
	We will show that each node also receives $O(n)$ words, and therefore we can use Lenzen's routing scheme.

	For step~\ref{algn1e:ln3}, notice that there are at most $n^{\eps/2}$ nodes, $O(n^{\eps/2})$ candidates per node, and $O(n^{1/3})$ size per object.
	Therefore each node receives $O(n^{\eps/2})\cdot O(n^{\eps/2})\cdot O(n^{1-\eps})=O(n)$ words of space.

	It is left to prove that in step~\ref{algn1e:ln6} each node receives $O(n)$ words.

	\begin{lemma}\label{lem:unionpiecebasic_eps}
		When executing~\cref{alg:n1e}, for each $j\in [1..|W|]$, it holds that
		\[\left
        \|\bigcup_{i=1}^{|W|}K_{i,j}\right\|=O(n).\]
	\end{lemma}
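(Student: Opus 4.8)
The plan is to mimic the proof of \cref{lem:unionpiecebasic} verbatim, replacing the exponent $1/3$ by $\eps/2$ wherever the number of nodes $|W|$ is bounded, and replacing the "gap size" $n^{2/3}$ between consecutive candidates by $n^{1-\eps/2}$ (the threshold used in step~\ref{algn1e:ln2} of \cref{alg:n1e}). So fix $j\in[1..|W|]$, let $d_i$ denote the number of candidates contained in $K_{i,j}$, and argue in two steps: first bound $\sum_{i=1}^{|W|} d_i$, then bound each $\|K_{i,j}\|$ in terms of $d_i$.

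First I would bound the total number of candidates landing in the $j$th part. By the choice of delimiters in step~\ref{algn1e:ln4}, the set $\bigcup_{i=1}^{|W|}K_{i,j}$ lies between two consecutive delimiters, so it contains at most $\ceil{|\T|/|W|}$ candidates. Since each of the $|W|\le n^{\eps/2}$ nodes contributes $O(n^{\eps/2})$ candidates (each candidate accounts for $n^{1-\eps/2}$ words of a node whose total load is $O(n)$), we get $|\T|=O(n^{\eps})$, hence $\ceil{|\T|/|W|}=O(n^{\eps/2})$, and therefore $\sum_{i=1}^{|W|} d_i = O(n^{\eps/2})$.

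Next I would bound $\|K_{i,j}\|$. By the marking rule in step~\ref{algn1e:ln2}, the total size of the objects strictly between two consecutive candidates in a single node is $O(n^{1-\eps/2})$ words, so a block of $K_{i,j}$ containing $d_i$ candidates has $\|K_{i,j}\| \le (d_i+1)\cdot O(n^{1-\eps/2})$. Summing over $i$ and plugging in the bound from the previous step:
\[
\left\|\bigcup_{i=1}^{|W|}K_{i,j}\right\| = \sum_{i=1}^{|W|}\|K_{i,j}\| \le O(n^{1-\eps/2})\cdot\sum_{i=1}^{|W|}(d_i+1) = O(n^{1-\eps/2})\cdot O(n^{\eps/2}+|W|) = O(n),
\]
where the last equality uses $|W|\le n^{\eps/2}\le n^{\eps/2}$. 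This gives the claim.

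I do not expect a genuine obstacle here, since the argument is a direct rescaling of \cref{lem:unionpiecebasic}; the only point to be careful about is keeping the two exponents consistent — the "number of nodes" exponent $\eps/2$ and the "candidate-gap" exponent $1-\eps/2$ must multiply/add to give exactly $O(n)$ in both the candidate-count estimate of step~\ref{algn1e:ln3} and the block-size estimate above. (One should also note the harmless typo in the complexity paragraph of \cref{alg:n1e}, where "$O(n^{1/3})$ size per object" should read "$O(n^{1-\eps})$ size per object"; this does not affect the lemma.)
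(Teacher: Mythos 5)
Your proof is correct and follows essentially the same route as the paper's own proof: the same definition of $d_i$, the same bound of $\ceil{|\T|/|W|}=O(n^{\eps/2})$ on the candidates between consecutive delimiters, the same per-block estimate $\|K_{i,j}\|\le (d_i+1)\cdot O(n^{1-\eps/2})$, and the same final summation giving $O(n^{1-\eps/2})\cdot O(n^{\eps/2}+|W|)=O(n)$. One cosmetic point: deduce $\ceil{|\T|/|W|}=O(n^{\eps/2})$ directly from $|\T|\le |W|\cdot O(n^{\eps/2})$ (as the paper does), rather than from $|\T|=O(n^{\eps})$ alone, since the latter step would not be valid for small $|W|$.
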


	\begin{proof}
		Let $d_i$ be the number of candidates in $K_{i,j}$.
		First notice that due to the choice of the delimiters, $\bigcup_{i=1}^{|W|}K_{i,j}$ contains $\ceil{|\T|/|W|} = |W|\cdot O(n^{\eps/2})/|W| = O(n^{\eps/2})$ candidates and therefore $\sum_{i=1}^{|W|}d_i=\ceil{|\T|/|W|}=O(n^{\eps/2})$.
		Since by step~\ref{algn13:ln2} the total size of objects between two consecutive candidates is $O(n^{1-\eps/2})$ words, we have that $\|K_{i,j}\| \le (d_i+1)\cdot O(n^{1-\eps/2})$.

		Therefore,
		\[\mathop{\left\|\underset{v_i\in W}{\bigcup}K_{i,j}\right\|} = \sum_{v_i\in W}\|K_{i,j}\| = O(n^{1-\eps/2})\cdot \underset{v_i\in W}{\sum}(d_i+1)
		= O(n^{1-\eps/2})\cdot O(n^{\eps/2}+|W|)
		= O(n).\]
	\end{proof}

	Next, in \cref{alg:n2e} we show an algorithm for more than $n^{\eps/2}$ nodes.

	\begin{algorithm}[h!]
		\caption{Sorting objects of $\omega(n^{\eps/2})$ nodes}\label{alg:n2e}
		\KwIn{Set $U$ with  $|U|>n^{\eps/2}$ nodes, each node stores objects of size $O(n^{1-\eps})$ words each, a total of $O(n)$ words per node and every object is stored in one node.}

		Each node in $U$ locally sorts its objects\label{algn2e:ln1}\;

		Each node in $U$ marks for every positive integer $i$ the smallest (due to the order of step~\ref{algn13:ln1}) object $B$ such that the total size of all the objects smaller than $B$ is at least $i\cdot n^{1-\eps/2}$. The marked objects are called candidates\label{algn2e:ln2}\;

		All the candidates are distributed among the first $\ceil{|U|/n^{\eps/2}}$ nodes\label{algn2e:ln3}\;
		Using~\cref{alg:n1e} (if $|U|/n^{\eps/2}\leq n^{\eps/2}$) or~\cref{alg:n2e} (otherwise), the first $\ceil{|U|/n^{\eps/2}}$ nodes sort all the candidates\label{algn2e:ln4}\;

		Let $\T$ be the union of the sorted candidates in the first $\ceil{|U|/n^{\eps/2}}$ nodes.
		Every $\ceil{|\T|/(n^{\eps/2})}$th object according to this order is selected to be a delimiter.
		The delimiters are announced to all the nodes in $U$\label{algn2e:ln5}\;

		Each node $v_i\in U$ splits its original input into $n^{\eps/2}$ subsets,
		where the $j$th subset $K_{i,j}$ contains all objects that are larger than the $(j-1)$th delimiter
		(for $j = 1$ this condition does not apply) and smaller or equal to the $j$th delimiter (for $j = n^{\eps/2}-1$ this condition does not apply)\label{algn2e:ln6}\;
		The nodes of $U$ are partitioned into $n^{\eps/2}$ disjoint sets $\mathcal{W}$ of size $\floor{|U|/n^{\eps/2}}$ nodes in each set.
		Each node $v_i\in U$ sends $K_{i,j}$ to $W_j$, the $j$th set of $\mathcal{W}$ \label{algn2e:ln7}\;
		Using~\cref{alg:n1e} (if $|U|/n^{\eps/2}\leq n^{\eps/2}$) or~\cref{alg:n2e} (otherwise), each set $W_j\in \mathcal{W}$ sorts all the objects received in $W_j$\label{algn2e:ln8}\;
	\end{algorithm}

	Again, 	the correctness and complexity follows analogously with the correctness and complexity of \cref{alg:n23}.

	Notice that the running time of \cref{alg:n2e} is $T(|U|) = 2\cdot T(|U|/n^{\eps/2}) + O(1)$. Since $|U|\leq n$, there are at most $O(2^{2/\eps})$ rounds, which are $O(1)$ rounds for constant $\eps>0$.

	We now analyze the complexity of \cref{alg:n2e} for the sake of completeness.

	\para{Complexity.} We will focus on the steps in~\cref{alg:n2e} where communication is made and show that each step takes $O(1)$ rounds.

	In step~\ref{algn2e:ln3}, we need to further explain some algorithmic details.
	Each node sends $O(n^{1-\eps})$ objects of size $O(n^{\eps/2})$ words, so at most $O(n^{1-\eps/2})$ words per node.
	The candidates of node $v_i$ are sent to node $v_j$ for $j=\ceil{\frac{i}{n^{\eps/2}}}$, therefore each node receives at most $n^{\eps/2}\cdot O(n^{1-\eps/2})=O(n)$ words.
	By Lenzen's routing scheme
	this is done in $O(1)$ rounds.
	In step~\ref{algn2e:ln4}, notice that since the number of nodes is at most $n$, the depth of the recursion is $O(1)$, therefore the recursion does not increase the round complexity asymptotically.

	In step~\ref{algn2e:ln5}, the delimiters should be recognized.
	Each node $v$ in the first $\ceil{|U|/n^{\eps/2}}$ nodes broadcasts the number of objects that $v$ receives in step~\ref{algn2e:ln4}.
	Therefore, each node $v$ computes for every object $B$ whether the rank of $B$ among the candidates is a multiple of $\ceil{|\T|/n^{\eps/2}}$ and if so, selects $B$ to be a delimiter.
	There are $O(n^{\eps/2})$ delimiters of size $O(n^{1-\eps})$ each, which is in total $O(n^{1-\eps/2})$ words. Therefore, the delimiters are announced to all the nodes in $O(1)$ rounds using \cref{thm:gen_routing}.

	In step~\ref{algn2e:ln7} we first show that the total number of words that each set $W_j$ receives, is $O(|U|\cdot n^{1-\eps/2})$ words.

	\begin{lemma}\label{lem:unionpieceadv_e}
		When executing~\cref{alg:n2e}, for each $j\in [1..n^{\eps/2}]$, it holds that
		\[ \left\|\bigcup_{i=1}^{|U|}K_{i,j}\right\|=O(|U|\cdot n^{1-\eps/2}).\]
	\end{lemma}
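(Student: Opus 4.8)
The plan is to mirror, essentially verbatim, the proof of~\cref{lem:unionpieceadv} (the case $\eps=2/3$ of~\cref{alg:n23}), replacing the exponents $1/3$ and $2/3$ by $\eps/2$ and $1-\eps/2$ respectively, and checking that all the estimates still go through with these general exponents. So fix $j\in[1..n^{\eps/2}]$ and, for each node $v_i\in U$, let $d_i$ be the number of candidates contained in the block $K_{i,j}$ of $v_i$'s locally sorted objects. The goal is to control $\|\bigcup_i K_{i,j}\|=\sum_i\|K_{i,j}\|$ in two stages: first bound $\sum_i d_i$ using how delimiters are chosen, then bound each $\|K_{i,j}\|$ in terms of $d_i$ using how candidates are spaced.

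For the first stage, I would argue as follows. By step~\ref{algn2e:ln2} each node marks one candidate per $n^{1-\eps/2}$ words of objects and holds $O(n)$ words in total, hence produces $O(n^{\eps/2})$ candidates, so $|\T|=O(|U|\cdot n^{\eps/2})$. Since step~\ref{algn2e:ln5} selects every $\ceil{|\T|/n^{\eps/2}}$th element of the sorted sequence $\T$ as a delimiter, the number of candidates lying strictly between the $(j-1)$th and $j$th delimiters is $\ceil{|\T|/n^{\eps/2}}=|U|\cdot O(n^{\eps/2})/n^{\eps/2}=O(|U|)$; these are exactly the candidates counted across all the $K_{i,j}$, so $\sum_{i=1}^{|U|}d_i=O(|U|)$. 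For the second stage, the marking rule of step~\ref{algn2e:ln2} guarantees that between two consecutive candidates of a node the total object size is $O(n^{1-\eps/2})$ words, and since $K_{i,j}$ is a contiguous run of $v_i$'s sorted objects containing $d_i$ candidates it is covered by at most $d_i+1$ such inter-candidate gaps, giving $\|K_{i,j}\|\le(d_i+1)\cdot O(n^{1-\eps/2})$. Summing and using $\sum_i(d_i+1)=\sum_i d_i+|U|=O(|U|)$ gives
\[
\left\|\bigcup_{i=1}^{|U|}K_{i,j}\right\|=\sum_{i=1}^{|U|}\|K_{i,j}\|\le O(n^{1-\eps/2})\cdot O(|U|)=O(|U|\cdot n^{1-\eps/2}),
\]
where the stray additive $O(n)$ coming from the ``$+1$''s is absorbed because the hypothesis $|U|>n^{\eps/2}$ yields $n=n^{\eps/2}\cdot n^{1-\eps/2}=O(|U|\cdot n^{1-\eps/2})$.

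I do not expect a genuine obstacle here: the argument is structurally identical to~\cref{lem:unionpieceadv,lem:unionpiecebasic_eps}. The only thing to be careful about is the exponent bookkeeping — making sure that the candidate spacing $n^{1-\eps/2}$ and the per-node candidate count $n^{\eps/2}$ multiply back to the per-node budget $O(n)$, that the delimiter spacing $\ceil{|\T|/n^{\eps/2}}$ collapses to $O(|U|)$, and that the final $O(n)$ term is dominated by $O(|U|\cdot n^{1-\eps/2})$ in the regime $|U|>n^{\eps/2}$ — all of which hold for every constant $\eps>0$ exactly as they do for $\eps=2/3$.
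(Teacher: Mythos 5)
Your proposal is correct and follows essentially the same argument as the paper's proof: bound the number of candidates per block via the delimiter spacing $\ceil{|\T|/n^{\eps/2}}=O(|U|)$, bound each $\|K_{i,j}\|$ by $(d_i+1)\cdot O(n^{1-\eps/2})$ via the candidate spacing of step~\ref{algn2e:ln2}, and sum. Your explicit remark that the additive $O(n)$ from the ``$+1$'' terms is absorbed using $|U|>n^{\eps/2}$ is exactly the step the paper carries out (more visibly in \cref{lem:unionpieceadv}), so there is nothing missing.
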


	\begin{proof}
		Let $d_i$ be the number of candidates in $K_{i,j}$.
		Due to the choice of the delimiters, $\bigcup_{i=1}^{|U|}K_{i,j}$ contains $\ceil{|\T|/n^{\eps/2}}=|U|\cdot O(n^{\eps/2})/n^{\eps/2}= O(|U|)$ candidates and therefore $\sum_{i=1}^{|U|}d_i=\ceil{|\T|/n^{\eps/2}}=O(|U|)$.
		Since by step~\ref{algn2e:ln2} the total size of objects between two consecutive candidates is $O(n^{1-\eps/2})$ words, we have that $\|K_{i,j}\| \le (d_i+1)\cdot O(n^{1-\eps/2})$.
		The number of words that are sent to set $W_j$ is at most
		\[\left\|\bigcup_{i=1}^{|U|}{K_{i,j}}\right\|=\sum_{i=1}^{|U|}{\|K_{i,j}\|}= O(n^{1-\eps/2})\cdot \sum_{i=1}^{|U|} (d_i+1)=O(n^{1-\eps/2})O(|U|+|U|)=O(|U|\cdot n^{1-\eps/2}).\]
	\end{proof}

	Now, the algorithm selects for every set $W_j$ a leader $v_{W_j}$.
	Each node $v_i$ sends $\|K_{i,j}\|$ to $v_{W_j}$.
	The leader $v_{W_j}$ computes and sends to each node $v_i\in U$ a node $w\in W_j$ such that $v_i$ should send $K_{i,j}$ to $w$.
	By \cref{lem:dist_info}, there is a computation such that for each $w\in W_j$, the number of words that $w$ receives is at most $O(n)$ words.
	On the other hand, each node sends at most $O(n)$ words.
	By Lenzen's routing scheme this is done in $O(1)$ rounds.

	In step~\ref{algn2e:ln8}, similar to step~\ref{algn2e:ln4}, the depth of the recursion is $O(1)$.

 \begin{remark}[$\eps\in o(1)$ case]
 \label{remark:small_eps}
     For $\eps\in o(1)$, the depth of the recursion in steps \ref{algn2e:ln4} and \ref{algn2e:ln8}, is $2/\eps$.
     Moreover, the algorithm performs two calls to the recursive process.
     Therefore, the running time of this algorithm is $O(2^{2/\eps})$ steps.
 \end{remark}

	\section{Sorting Strings}\label{sec:sorting_strings}
	Although in \cref{sec:sorting_eps_0} we showed that it is impossible to sort general keys of size $\Theta(n)$ in $O(1)$ rounds, in this section we show that with some structural assumption on the keys and the order one can do much better.
	In particular, we show that if our keys are strings and we consider the lexicographic order, we can always sort them in $O(1)$ rounds, even if there are strings of length $\omega(n)$.

	We introduce a string sorting algorithm that uses the algorithm of~\cref{thm:sorting_esp_g0} as a black box (specifically, our algorithm uses the algorithm for the special case of $\eps=2/3$, which is proved in \cref{lem:sorting_objects}), and apply a technique called \emph{renaming} to reduce long strings into shorter strings.
	The reduction preserves the original lexicographic order of the strings.
	After applying the reduction several times, all the strings are reduced to strings of length $O(n^{1/3})$ which are sorted by an additional call to the algoirthm of \cref{thm:sorting_esp_g0}.

	\para{Renaming.}
	The idea behind the renaming technique is that to compare two long strings, one can partition the strings into blocks of the same length, and then compare pairs of blocks in corresponding positions in the two strings.
	The lexicographic order of the two original strings is determined by the lexicographic order of the first pair of blocks that are not the same.
	Such a comparison can be done by replacing each block with the block's rank among all blocks, transforming the original strings into new, shorter strings.

	As a first step, the algorithm splits each string into blocks of length $\ceil{n^{1/3}}$ characters\footnote{For the sake of simplicity, we assume from now on that $n^{1/3}$ is an integer and simply write $n^{1/3}$ instead of $\ceil{n^{1/3}}$. } as follows.
	A string $S$ of length $|S|=\ell$ is partitioned into $\ceil{ \frac \ell{n^{1/3}}}$ blocks, each of length $n^{1/3}$ characters except for the last block which is of length $\ell \modulo n^{1/3}$ characters (unless $\ell$ is a multiple of $n^{1/3}$, in which case the length of the last block is also $n^{1/3}$ characters).

	In the case that every string is stored in one node, such a partitioning can be done locally.
	In the more general case, each node $v$ broadcasts the number of strings starting in $v$ and the number of characters $v$ stores from $v$'s last string.
	This broadcast is done in $O(1)$ rounds by Lenzen's routing scheme and using this information
	each node $v$ computes the partitioning positions of the strings stored in $v$.
	Finally, a block that is spread among two (or more) nodes is transferred to be stored only in the node where the block begins.
	This transfer of block parts is executed in $O(1)$ rounds since each node is the source and destination of at most $n^{1/3}$ characters.

	The next step of the algorithm is to sort all the blocks, using the algorithm of \cref{thm:sorting_esp_g0}.
	The result of the sorting is the rank of every block.
	In particular, if the same block appears more than once, all the block's occurrences will have the same rank, and for every two different blocks, the order of their rank will match their lexicographic order.
	Thus, the algorithm will define new strings by replacing each block with its rank in the sorting.
	As a result, every string of length $\ell$ will be reduced to length $\ceil{\frac {\ell}{n^{1/3}}}$. Notice that the alphabet of the new strings is the set of ranks, which is a subset of $[1..n^2]$, and therefore each new character uses $O(\log n)$ bits.

	In the following lemma we prove that the new strings preserve the lexicographic order of the original strings.
	\begin{lemma}\label{lem:renaming_works}
		Let $A$ and $B$ be two strings, and let $A', B'$ be the resulting strings defined by replacing each block of $A$ and $B$ with the block's rank among all blocks, respectively.
		Then,  $A\preceq B$ if and only if $A'\preceq B'$.
	\end{lemma}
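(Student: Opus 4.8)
The plan is to prove both directions by relating the longest common prefix of $A$ and $B$ to the longest common prefix of $A'$ and $B'$, reasoning block by block. Recall that each of $A$ and $B$ is partitioned into blocks of length $n^{1/3}$, except possibly for a shorter final block; write $A = a_1 a_2 \cdots a_p$ and $B = b_1 b_2 \cdots b_q$ for these blocks, so that $A' = r(a_1) r(a_2) \cdots r(a_p)$ and $B' = r(b_1) \cdots r(b_q)$, where $r(\cdot)$ is the rank of the block among all blocks produced by the sorting. The first observation I would record is the key property of the renaming guaranteed by \cref{thm:sorting_esp_g0}: for any two blocks $x, y$ we have $r(x) = r(y)$ iff $x = y$, and $r(x) < r(y)$ iff $x \prec y$ \emph{when $x$ and $y$ have the same length}; some care is needed because the last block of a string may be shorter than $n^{1/3}$, so a short block and a long block are genuinely different objects with unrelated ranks, which is fine since we only ever compare blocks at the same position of $A$ and $B$, and blocks at the same position — unless one string has ended — have the same length $n^{1/3}$.

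For the main argument I would split on where the first disagreement between $A'$ and $B'$ occurs. Let $t = \LCP(A', B')$, i.e. $r(a_i) = r(b_i)$ for $i \le t$ and (if $t < \min\{p,q\}$) $r(a_{t+1}) \ne r(b_{t+1})$. By the key property, $r(a_i) = r(b_i)$ means $a_i = b_i$ as strings, so the prefixes $a_1 \cdots a_t$ and $b_1 \cdots b_t$ of $A$ and $B$ are literally equal. Now I would handle two cases. Case (i): $t = p \le q$, i.e. $A'$ is a prefix of $B'$. Then $A = a_1 \cdots a_p$ is a prefix of $B = b_1 \cdots b_q$ since $a_i = b_i$ for $i \le p$, so $A \preceq B$, and symmetrically $A' \preceq B'$ holds; the same reasoning with roles reversed gives the converse. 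Case (ii): $t < \min\{p, q\}$ and $r(a_{t+1}) \ne r(b_{t+1})$. Here both $a_{t+1}$ and $b_{t+1}$ are full-length blocks of length $n^{1/3}$ \emph{unless} one of them is the final block of its string; but if, say, $a_{t+1}$ is $A$'s last block and is strictly shorter, then since $a_i = b_i$ for $i\le t$, $A$'s length is a proper prefix-length of $B$, and $a_{t+1}$ is a prefix of $b_{t+1}$ (as $A[1..|A|]$ agrees with $B[1..|A|]$), so actually $A \prec B$; and correspondingly in $A', B'$ we'd need to recheck — this is the delicate subcase. When both $a_{t+1}, b_{t+1}$ have length $n^{1/3}$, the key property gives $r(a_{t+1}) < r(b_{t+1}) \iff a_{t+1} \prec b_{t+1}$, and since the two blocks are the same length and differ, $a_{t+1} \prec b_{t+1}$ means they first differ at some position $j \le n^{1/3}$ with $a_{t+1}[j] < b_{t+1}[j]$; concatenating with the common prefix $a_1\cdots a_t$, this is exactly the position where $A$ and $B$ first differ, with $A$'s character smaller. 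Hence $A \prec B \iff A' \prec B'$ in this subcase, and combined with equality ($A = B \iff A' = B'$, immediate from the key property applied to all blocks) we get $A \preceq B \iff A' \preceq B'$.

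The converse direction (from order of $A,B$ to order of $A',B'$) follows by the same case analysis run backwards, or more cleanly by observing that the relation "$A \preceq B$" is a total order and the map $A \mapsto A'$ is injective on each length class, so it suffices to have shown one implication together with $A = B \iff A' = B'$ and trichotomy. I expect the main obstacle to be the bookkeeping around the final, possibly-short block: one must argue carefully that the position at which $A$ and $B$ first differ (or the fact that one is a prefix of the other) is faithfully reflected in $A', B'$ even when the disagreement falls inside a truncated last block, using that the truncated block is itself a prefix of the corresponding full block on the other side. Everything else is a routine unwinding of \cref{def:lex_order} and the rank-preservation property of the sorting step.
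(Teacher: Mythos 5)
Your overall plan (block-by-block comparison via the LCP of $A'$ and $B'$) is sound, but the proposal has a genuine gap exactly at the point you yourself flag as delicate, and it is not closed. The problem starts with your "key property": you restrict rank-comparability to blocks of the same length and declare that a short block and a full block have "unrelated ranks." That is both an incorrect reading of the construction and fatal to the argument. The ranks come from sorting \emph{all} blocks with the string-sorting algorithm, so by \cref{def:lex_order} the ranks respect lexicographic order across blocks of different lengths as well (in particular, a proper prefix of a block gets a strictly smaller rank). If cross-length ranks really were unrelated, the lemma would simply be false: with block length $3$, take $A=\texttt{abcd}$ and $B=\texttt{abcef}$; the order of $A$ and $B$ is decided by comparing the length-$1$ block $\texttt{d}$ with the length-$2$ block $\texttt{ef}$, so the lemma needs $r(\texttt{d})<r(\texttt{ef})\iff \texttt{d}\prec\texttt{ef}$, i.e., precisely the cross-length consistency you disclaimed. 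Concretely, in your Case (ii) with $a_{t+1}$ a truncated final block, your intermediate claim that "$a_{t+1}$ is a prefix of $b_{t+1}$ (as $A[1..|A|]$ agrees with $B[1..|A|]$)" does not follow from $a_i=b_i$ for $i\le t$: the first disagreement may fall strictly inside the truncated block (e.g., $a_{t+1}=\texttt{z}$, $b_{t+1}=\texttt{ab}$, where in fact $B\prec A$), so the conclusion "$A\prec B$" there is unjustified, and the subsequent "we'd need to recheck" is never carried out. The closing paragraph repeats the same unproven assumption ("the truncated block is itself a prefix of the corresponding full block").

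The repair is short once the key property is stated correctly: for \emph{any} two distinct blocks $x,y$ (of possibly different lengths), $r(x)<r(y)\iff x\prec y$. Then in your Case (ii), either the two blocks at position $t+1$ first differ at some character position $j$ within both of them, in which case that same character decides both $A$ vs.\ $B$ and $a_{t+1}$ vs.\ $b_{t+1}$, hence $r(a_{t+1})$ vs.\ $r(b_{t+1})$; or one of them is a proper prefix of the other, in which case that string is a prefix of the other whole string and its block has the smaller rank. This is essentially how the paper argues: it splits on the two cases of \cref{def:lex_order} for $A\preceq B$, locates the block $\alpha=\lceil(\ell+1)/n^{1/3}\rceil$ containing the first mismatch (or the last, possibly short, block of $A$ when $A$ is a proper prefix of $B$), and uses exactly the fact that the rank of a lexicographically smaller block --- including a block that is a proper prefix of the corresponding block --- is smaller. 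Your reduction of the converse direction to injectivity plus trichotomy is fine and mirrors the paper's "by very similar arguments" step; only the cross-length rank property and the truncated-block subcase need to be fixed.
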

\begin{proof}
	We first prove that $A\preceq B \Rightarrow A'\preceq B'$.
	Recall that by \cref{def:lex_order} there are two options for $A\preceq B$ to be hold:
	\begin{enumerate}
		\item If $\ell=\LCP(A,B)<\min\{|A|,|B|\}$ and $A[\ell+1]<B[\ell+1]$
		\item $A$ is a prefix of $B$, i.e. $|A|\le |B|$ and $A=B[1..|A|]$.
	\end{enumerate}

	For any $j$ let $A_j$ and $B_j$ be the $j$th blocks of $A$ and $B$, respectively.

	For the first case, let $\alpha=\ceil{\frac {\ell+1}{n^{1/3}}}$.
	Notice that $A[\ell+1]$ and $B[\ell+1]$ are contained in $A_\alpha$ and $B_\alpha$, respectively.
	Thus, for all $j<\alpha$ we have $A_j=B_j$
	and for $j=\alpha$ we have $A_\alpha\ne B_\alpha$.
	Moreover, by definition $A_\alpha\preceq B_\alpha$.
	Hence, $A'[1..\alpha-1]=B '[1..\alpha-1]$ and $A'[\alpha]<B'[\alpha]$ which means that $A'\preceq B'$.

	For the second case, where $A$ is a prefix of $B$ there are three subcases:
	(1) If $A=B$ then all the blocks in $A$ will get exactly the same ranks as the corresponding blocks in $B$ and therefore $A'=B'$.
	(2) Otherwise, $|A|<|B|$.
	(2a) If $|A|$ is a multiple of $n^{1/3}$ then all the blocks of $A$ are exactly the same as the corresponding blocks of $B$, but $B$ has at least one additional block.
	Therefore, $A'$ is a prefix of $B'$ and $A'\preceq B'$.
	(2b) If $|A|$ is not a multiple of $n^{1/3}$ then the last block of $A$ is shorter than the corresponding block of $B$.
	Let $\alpha=\ceil{\frac{|A|}{n^{1/3}}}$ be the index of the last block of $A$, in particular the block  $A_\alpha$ is a prefix of the block $B_\alpha$, and therefore by definition the rank of $A_\alpha$ is smaller than the rank of $B_\alpha$.
	Thus, we have that $A'[1..\alpha-1]= B'[1..\alpha-1]$ and $A'[\alpha]<B'[\alpha]$ which means that $A'\preceq B'$.

	By very similar arguments, one can prove that $B\prec A \Rightarrow B'\prec A'$. Thus, the claim  $A'\preceq B' \Rightarrow A\preceq B$ follows.
\end{proof}

	\begin{algorithm}[t]		\caption{String sorting}\label{alg:n2}
		\KwIn{A general set of strings, whose total length is $O(n^2)$ characters.}
		\SetKwFor{RepTimes}{Repeat}{times:}{end}

		\RepTimes{7}{

			Each node $v$ broadcasts the number of strings starting in $v$ and the number of characters $v$ stores from $v$'s last string\label{algn2:ln1}\;

			Each node $v$ computes for each string $S$ in $v$, positions of $S$ that are a multiple of $n^{1/3}$. These positions define the borders of $S$'s blocks \label{algn2:ln2}\;

			Each node that stores the beginning of a block $b$ that ends in other nodes collects the rest of $b$ from the succeeding nodes\label{algn2:ln3}\;

			Sort all the blocks, using \cref{alg:n23}\label{algn2:ln4}\;

			Each node replaces each block with its rank in the sorting\label{algn2:ln5}\;

		}

		Sort all the new strings, using \cref{alg:n23}\label{algn2:ln7}\;

	\end{algorithm}

	We are now ready to prove \cref{thm:string_sorting} (see also \cref{alg:n2}).
	\begin{proof}[{Proof of \cref{thm:string_sorting}}]
		The algorithm repeats the renaming process  $7$ times.
		Since the original longest string is of length $O(n^2)$, after seven iterations, we get that the length of every string is $\ceil{O({\frac{n^2}{(n^{1/3})^7}})}=1$.
		Thus, at this time, each string is one block of length $1=O(n^{1/3})$ characters.
		In particular, each string is stored in one node.
		The algorithm uses one more time the algorithm of \cref{thm:sorting_esp_g0}
        (one could also use the sorting algorithm of \cite[{Corollary 4.6}]{Lenzen13}) to solve the problem.
		Notice that during the execution of the renaming process, each block is moved completely to the first node that holds characters from this block.
		In particular, at the final sorting, each string contains one block, that starts at the original node where the complete string starts. Therefore, the ranks found by the sorting will be stored in the required nodes.
	\end{proof}

	\section{Pattern Matching}\label{sec:pm}
	In this section we prove \cref{thm:pm} and introduce an $O(1)$-round \cc algorithm for the pattern matching problem.
	Recall that the input for this problem is a pattern string $P$, and a text string $T$.
	Moreover, we assume that $|P|+|T|=O(n^2)$, since for larger input one cannot hope for an $O(1)$ rounds algorithm, due to communication bottlenecks.
	The goal is to find all the offsets $i$ such that $P$ occurs in $T$ at offset $i$.
	Formally, for every two strings $X,Y$ let $\PM(X,Y)=\{i \mid Y[i+1..i+|X|]=X\}$ be the set of occurrences of $X$ in $Y$.
	Our goal is to compute $\PM(P,T)$ in a distributed manner.
	We introduce an algorithm that solves this problem in $O(1)$ rounds.
	Our algorithm distinguishes between the case where $|P|\le n$ and $|P|> n$.
	Therefore, as a first step, each node $v$ broadcasts the number of characters from $P$ in $v$ and then computes $|P|$.
	In~\cref{sec:shortpattern} we take care of the simple case where $|P|\le n$ and in \cref{sec:longpattern} we describe an algorithm for the case of $n< |P|\in O(n^2)$.

	\subsection{Short Pattern}\label{sec:shortpattern}
	If $|P|\le n$ then the algorithm first broadcasts $P$ to all the nodes using \cref{thm:gen_routing}.
	In addition, we want that every substring of $T$ of length $|P|$ will be stored completely in (at least) one node.
	For this, each node announces the number of characters from $T$ it holds.
	Then, each node that its last input character is the $i$th character of $T$, needs to get the values of $T[i+1],T[i+2],\dots, T[i+|P|-1]$ from the following nodes.
	So, each node that gets $T[j]$ sends it to all preceding nodes starting from the node that gets $T[j-|P|+1]$ (or the node that gets $T[1]$ if $j-|P|+1<1$).
	Notice that all these nodes form a range and that each node will get at most $|P|\le n$ messages.
	Thus, this routing can be done in $O(1)$ rounds using \cref{thm:gen_routing}.
	Now, every substring of $T$ of length $|P|$ is stored completely in one of the nodes, and all the nodes have $P$.
	Thus, every node locally finds all the occurrences of $P$ in its portion of $T$.
 To conclude we proved \cref{thm:pm} for the special case where $|P|\le n$ (see also \cref{alg:shortpm}).

 \begin{algorithm}[t]
	\caption{Pattern matching with short pattern}\label{alg:shortpm}
	\KwIn{Two strings $P$ and $T$, such that $|P|+|T|=O(n^2)$ and $|P|\le n$.}
	Each node $v$ broadcasts the number of characters in $v$\label{algspm:ln0}\;

	Each node that its last input character is the $i$th character of $T$, collects the values of $T[i+1],T[i+2],\dots, T[i+|P|-1]$ from the succeeding nodes\label{algspm:ln1}\;

	$P$ is broadcast to all the nodes\label{algspm:ln2}\;

	Each node locally finds all the occurrences of $P$ in its portion of $T$\label{algspm:ln3}\;

\end{algorithm}

 \begin{algorithm}[b]
	\caption{Pattern matching with $|P|>n$}\label{alg:pm}
	\KwIn{Two strings $P$ and $T$, such that $|P|+|T|=O(n^2)$ and $|P|> n$.}

	Let $B=P[1..n]$ and $E=P[|P|-n\dots |P|]$.
	The algorithm uses~\cref{alg:shortpm} to find all occurrences of $B$ and $E$ in $P$ and $T$\label{algpm:ln1}\;

	Each node encodes the offsets of all the occurrences of $B$ and $E$ it holds in $O(1)$ words of space, and announces these offsets to all other nodes\label{algpm:ln2}\;

	The strings of $\S$ are defined as in Equations \ref{eq:define_SP}, \ref{eq:define_ST} and \ref{eq:define_S} based on the offsets of $B$ and $E$ in $P$ and $T$.  Sort all strings of $\S$ using~\cref{alg:n2}\label{algpm:ln3}\;

	Each node that gets the rank of strings of $\S$ broadcasts this rank\label{algpm:ln4}\;

	Each node computes locally whether $T[i+1..i+|P|]=P$  by using the conditions of \cref{lem:verify_occurrence}, for every $i$ such that $T[i+1]$ stored in the node. Checking the third condition for indices where the first two conditions holds is done by comparing ranks of strings in $\S$ (see \cref{lem:cor_blocks_in_s})\label{algpm:ln5}\;

    \end{algorithm}

	\subsection{Long Pattern}\label{sec:longpattern}

	From now on we consider the case where $|P|> n$.
	To conclude that an offset $i$ of the text is an occurrence of $P$, i.e. that $T[i+1..i+|P|]=P$, we have to get for any $1\le j\le |P|$ evidence that $T[i+j]=P[j]$.
	We will use two types of such evidence.
	The first type is finding all the occurrences of the $n$-length prefix and suffix of $P$ in $P$ and $T$.
	The second type of evidence will be based on the string sorting algorithm of  \cref{thm:string_sorting}, to sort all the blocks between occurrences that were found in the first step, both in $P$ and $T$.
	At every occurrence of $P$ in $T$, all the occurrences of the prefix and suffix will match, and also the blocks between these occurrences will also be the same in the pattern and text (see also \cref{alg:pm}).

	\para{First step - searching for the prefix and suffix.}
	Let $B=P[1..n]$ and $E=P[|P|-n+1..|P|]$ be the prefix and suffix of $P$ of length $n$, respectively.
	We use~\cref{alg:shortpm} four times, to find all occurrences of $B$ and $E$ in $P$ and $T$ (in every execution the algorithm ignores the parts of the original input which are not relevant for this execution).
	By the following lemma, all the locations of $B$ and $E$ found by one node can be stored in $O(1)$ words of space (which are $O(\log n)$ bits). The lemma is derived from \cref{lem:BG_progression} by using the so-called standard trick.
	\begin{lemma}\label{lem:standard_trick}
		Let $X$ and $Y$ be two strings such that $|Y|\ge|X|$ and $|Y|=O(|X|)$.
		Then $\PM(X,Y)$ can be stored in $O(1)$ words of space.
	\end{lemma}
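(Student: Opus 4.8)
The plan is to use the classical ``standard trick'': chop $Y$ into a constant number of overlapping windows, each of length just under $2|X|$, and argue that inside each window the occurrences of $X$ are so tightly packed that \cref{lem:BG_progression} forces them into an arithmetic progression, which costs only $O(1)$ words to record.

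Concretely, write $m=|X|$ and recall that $|Y|\le cm$ for some constant $c$. First I would partition the set of candidate offsets $\{0,1,\dots,|Y|-m\}$ into $\lceil(|Y|-m+1)/m\rceil=O(1)$ consecutive blocks of at most $m$ offsets each, associating to the $k$-th block $\mathcal{O}_k=\{(k-1)m,\dots,km-1\}\cap\{0,\dots,|Y|-m\}$ the window $W_k=Y[(k-1)m+1\dd\min\{(k+1)m-1,\,|Y|\}]$. A routine index check shows $|W_k|<2m$ and that the occurrences of $X$ as a substring of $W_k$ are exactly the occurrences of $X$ in $Y$ at the offsets in $\mathcal{O}_k$; hence $\PM(X,Y)$ is the disjoint union, over the $O(1)$ blocks, of the occurrence sets of $X$ inside the windows $W_k$.

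Next I would analyze a single window. Let $t_1<t_2<\cdots<t_h$ be the positions of all occurrences of $X$ in $W_k$. Since $|W_k|<2|X|$, any two such occurrences start within distance less than $|X|$ of each other, so in particular $t_{i+2}-t_i<|X|$ for every $i\in[1\dd h-2]$. If $h\ge3$, then \cref{lem:BG_progression} (with $u=X$ and $v=W_k$) guarantees that $t_1,\dots,t_h$ is an arithmetic progression, so this set is encoded by the triple $(t_1,\,t_2-t_1,\,h)$, i.e.\ $O(1)$ words; and if $h\le2$ the set is trivially $O(1)$ words. Summing over the $O(1)$ windows then yields the claimed $O(1)$-word encoding of $\PM(X,Y)$.

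The only delicate point — and the step I expect to need the most care — is fixing the window boundaries so that simultaneously (i) only a constant number of windows is needed, which is where the hypothesis $|Y|=O(|X|)$ enters, and (ii) every occurrence of $X$ in $Y$ lies wholly inside some window while the occurrences collected in a single window stay packed within a span strictly less than $|X|$, so that the hypothesis $t_{i+2}-t_i\le|u|$ of \cref{lem:BG_progression} is met. Taking the windows of length just under $2|X|$ and stepping them by $|X|$ is precisely what makes both requirements hold; everything else is bookkeeping.
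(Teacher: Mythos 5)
Your proposal is correct and follows essentially the same route as the paper's own proof: both chop $Y$ into $O(1)$ overlapping windows of length $2|X|-1$ stepped by $|X|$, observe that occurrences within one window start within distance at most $|X|$ of each other, and invoke \cref{lem:BG_progression} to encode each window's occurrence set as an arithmetic progression (or trivially, if there are at most two occurrences) in $O(1)$ words. No meaningful difference beyond bookkeeping details of the indexing.
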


 \begin{proof}
We divide $Y$ into parts of length $2|X|-1$ characters, with overlap, such that each substring of length $|X|$ is contained in one part.
For any $i=0,1,\dots \floor{\frac {|Y|}{|X|}}-1$ let $Y_i=Y[i\cdot|X|+1..\min\{(i+2)\cdot |X|-1,|Y|\}]$ be the $i$th part of $Y$, and let $L_i=\PM(X,Y_i)+i\cdot |X|$ be the set of all the occurrences of $X$ in $Y_i$.
Notice that every occurrence $t$ of $X$ in $Y$ is an occurrence of $X$ in one part of $Y$, specifically in $Y_{\floor{{t}/{|X|}}}$.
Thus, to represent $\PM(X,Y)$ it is enough to store all the occurrences of $X$ in every part $Y_i$.
Since we consider just $\frac {|Y|}{|X|}=O(1)$ parts of $Y$, it is enough to show that $L_i$ can be stored in $O(1)$ words of space.
If $|L_i|\le 2$ then of course it could be stored in $O(1)$ words of space.
Otherwise, $|L_i|\ge 3$, and notice that for every two occurrences in $L_i$ their distance is at most $|Y_i|-|X|+1=|X|$.
Thus, by \cref{lem:BG_progression}, $L_i$ forms an arithmetic progression and therefore can be represented in $O(1)$ words of space by storing only the first and last element and the difference between elements.
\end{proof}

	Hence, every node broadcasts to the whole network all the locations of $B$ and $E$ in the node's fragments of $P$ or $T$.
	Combining all the broadcasted information, each node will have $\PM(B,P)$, $\PM(B,T)$, $\PM(E,P)$ and $\PM(E,T)$.

	\para{Second step - completing the gaps.}
	In the second step, we want to find evidence of equality for all the locations in $P$ and $T$ which do not belong to any occurrence of $B$ or $E$.
	Notice that $|B|=|E|=n$ and therefore every occurrence of $B$ or $E$ that starts at offset $i$ covers all the range $[i+1,i+n]$.
	We will use all the occurrences of $B$ and $E$ that were found in the first step, to focus on the remaining regions which are not covered yet.
	Formally, we define the sets of uncovered locations in $P$ and $T$ as follows. For a string $X$ let
	\begin{gather*}
 	R_X=[1..|X|]\setminus\bigcup_{i\in\PM(B,X)\cup\PM(E,X)}[i+1..i+n]
	\end{gather*}
	The algorithm uses $R_P$ and $R_T$ to define a (multi)set of strings which contains all the maximal regions of $P$ and $T$ which were not covered on the first step (see~\cref{fig:textpatternpm}):
	\begin{gather}
	\S_P=\{P[i..j]| [i..j]\subseteq R_P \text{ and } i-1\notin R_P \text { and }j+1\notin R_P\} \label{eq:define_SP}\\
	\S_T= \, \{T[i..j]| [i..j]\subseteq R_T \text{ and } i-1\notin R_T \text { and }j+1\notin R_T\}.\label{eq:define_ST} \\
	\S=\S_P\cup \S_T\label{eq:define_S}
	\end{gather}

 \begin{figure}[h!]
		\begin{center}
			\includegraphics[width=0.9\textwidth]{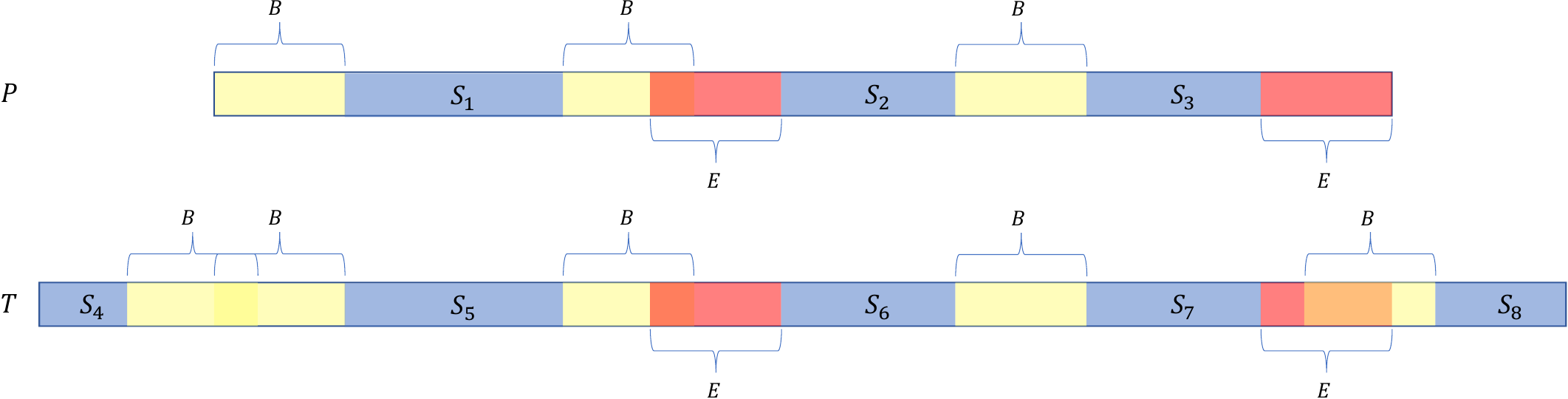}
		\end{center}
		\caption{The yellow regions and red regions represents occurrences of $B$ and $E$, respectively.
			The blue regions of the pattern and the text represent $R_P$ and $R_T$, respectively.
			Notice that $\S_P=\{S_1,S_2,S_3\}$, and $\S_T= \{S_4,S_5,S_6,S_7\}$.
			Given that the pattern is aligned to the $i$th offset, then $i\in PM(P,T)$ if and only if $S_1=S_5$, $S_2=S_6$ and $S_3=S_7$.}
		\label{fig:textpatternpm}
	\end{figure}

	Notice that the total size of all the strings in $\S$ is $O(n^2)$, since $|P|+|T|=O(n^2)$.
	The algorithm uses the string sorting algorithm of \cref{thm:string_sorting}, to sort all the strings in $\S$.
	As a result, every string is stored with its rank - which is the same for two identical strings.
	A useful property of $\S$ is that it contains $O(n)$ strings, as we prove in the next lemma.

	\begin{lemma}
		$|\S|=O(n)$.
	\end{lemma}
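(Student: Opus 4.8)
The plan is to bound the number of strings in $\S = \S_P \cup \S_T$ by separately bounding $|\S_P|$ and $|\S_T|$, and the key observation is that each maximal uncovered region in $P$ (respectively $T$) is sandwiched between two consecutive occurrences of $B$ or $E$ (or touches an endpoint of the string), so it suffices to bound the number of occurrences of $B$ and $E$ in $P$ and $T$. First I would recall from \cref{lem:standard_trick} (applied via the standard-trick decomposition into $O(1)$ windows of length $2n-1$, exactly as in its proof) that within any window of length $O(n)$, the set of occurrences of a length-$n$ string is either of size at most $2$ or forms an arithmetic progression. However, an arithmetic progression of occurrences of $B$ can be long, so I cannot naively bound the number of occurrences by $O(n)$; instead I use the fact that consecutive occurrences inside such a progression are at distance equal to the period of $B$, hence the covered intervals $[i+1..i+n]$ of two consecutive occurrences \emph{overlap} (since the period is at most $n$), so they merge into a single covered stretch and contribute no uncovered region between them.

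The main structural claim is therefore: in each of $P$ and $T$, the set $R_X$ (for $X \in \{P,T\}$) is a union of $O(n)$ maximal intervals. To see this, note that $[1..|X|] \setminus R_X$ is the union of the intervals $[i+1..i+n]$ over $i \in \PM(B,X) \cup \PM(E,X)$. Group these starting positions by the $O(n/|X|) = O(1)$ windows $Y_i$ of the standard trick; within a window, the occurrences of $B$ form at most one arithmetic progression with difference $d \le n$ equal to $\per(B)$ (and similarly for $E$), and the union of the intervals $[i+1..i+n]$ over an arithmetic progression with difference $d\le n$ is a single interval. So the occurrences of $B$ contribute $O(1)$ covered intervals in $X$, and likewise those of $E$; altogether $[1..|X|]\setminus R_X$ is a union of $O(1)$ intervals, hence $R_X$ is a union of $O(1)$ maximal intervals, and thus $|\S_X| = O(1)$.

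Wait — that would give $|\S| = O(1)$, which is stronger than claimed; I should double-check whether the occurrences of $B$ and $E$ across $P$ really decompose into only $O(1)$ arithmetic progressions. Since $|P|$ may be $\Theta(n^2)$, the standard trick gives $\Theta(n)$ windows, not $O(1)$, so the count is $O(n)$ progressions of $B$ and $O(n)$ of $E$ in $P$ (and $O(1)$ in $T$ since $|T| = O(n^2)$ as well, giving $O(n)$ again); each progression contributes one covered interval, so $[1..|P|]\setminus R_P$ is a union of $O(n)$ intervals, whence $R_P$ is a union of $O(n)$ maximal intervals and $|\S_P| = O(n)$, and symmetrically $|\S_T| = O(n)$. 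Therefore $|\S| = O(n)$, as claimed.

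The step I expect to be the main obstacle is the careful bookkeeping in the structural claim: I must argue that distinct arithmetic progressions of occurrences (coming from different standard-trick windows) each contribute at most a bounded additive number of maximal uncovered intervals — i.e., that the boundaries between covered and uncovered regions are charged to progression endpoints — and that overlapping progressions from adjacent windows do not create extra fragments. Once this is pinned down, summing the $O(n)$ window-contributions for each of $B, E$ in each of $P, T$ finishes the proof. I should also make sure the definition of $R_X$ together with the definitions \ref{eq:define_SP}, \ref{eq:define_ST}, \ref{eq:define_S} indeed gives $|\S_X|$ equal to the number of maximal intervals of $R_X$, which is immediate from the definition since each string in $\S_X$ corresponds to exactly one maximal run of positions in $R_X$.
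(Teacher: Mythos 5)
Your proof is correct, but it takes a genuinely different route from the paper's. The paper uses a short charging argument that needs no structure on the occurrences at all: if $T[i..j]\in\S_T$, then $i\in R_T$ while $i-1\notin R_T$, so the occurrence of $B$ or $E$ covering $i-1$ must end exactly at $i-1$; hence the entire block $[i-n..i-1]$ lies outside $R_T$, and these length-$n$ blocks for distinct strings of $\S_T$ are disjoint, giving $n\cdot(|\S_T|-1)\le |T|=O(n^2)$ and thus $|\S_T|=O(n)$ (and symmetrically for $\S_P$). You instead bound the number of \emph{covered} intervals via the standard-trick windowing and \cref{lem:BG_progression}: $O(n)$ windows, each contributing $O(1)$ arithmetic progressions of occurrences whose covering intervals (consecutive starts at distance $\le n$) merge into a single interval, so the covered set is a union of $O(n)$ intervals and its complement has $O(n)$ maximal pieces. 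This works, and the ``bookkeeping'' you flag as the main obstacle is actually immediate: a union of $m$ intervals inside $[1..|X|]$ has at most $m+1$ complementary maximal intervals, however the intervals from different windows overlap or interleave, so no extra fragments can appear. What your route buys is a structural statement (the covered region of $P$ or $T$ decomposes into $O(n)$ intervals), at the cost of invoking periodicity and the window decomposition; the paper's route is more elementary and uses only that every covering interval has length exactly $n$, which is why it is the shorter proof of the same $O(n)$ bound.
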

	\begin{proof}
		We start with bounding $|\S_T|$.
		By definition of $\S_T$, for $T[i..j]\in \S_T$ we have $i-1\notin R_T$.
		Moreover, by definition of $R_T$, it must be the case that $[i-n..i-1]\cap R_T=\emptyset$.
		Thus, one can associate with every string in $\S_T$ (except for the first string in $\S_T$ if it is a prefix of $T$) a set of $n$ unique locations from $[1..|T|]$ that are not in $R_P$.
		Therefore, $n\cdot (|\S_T|-1)\le|[1..|T|]|=O(n^2)$ and so $|\S_T|=O(n)$.
		Applying the same argument for $\S_P$ gives us $|\S_P|=O(n)$ and $|\S|= O(n)+O(n)=O(n)$.
	\end{proof}

	After sorting the strings of $\S$, each node $v$ broadcasts the ranks of the strings starting at $v$.
	Since there are just $O(n)$ strings in $\S$, this can be done in $O(1)$ rounds with \cref{thm:gen_routing}.
	Therefore, at the end of the second phase, every node has all the
	ranks of strings in $\S$.
	Recall that at the end of the first phase each node stores  $\PM(B,P)$, $\PM(B,T)$, $\PM(E,P)$ and $\PM(E,T)$.
	Hence, for every offset $i$, to check whether $T[i+1..i+|P|]=P$, each node first verifies that all the occurrences of $B$ and $E$ in $P$ match the corresponding occurrences of $B$ and $E$ in
 $T[i+1..i+|P|]$.
	If this is the case, then it must be that all the maximal regions in $R_P$ are in \emph{corresponding positions} to the maximal regions in $R_T$ at $[i+1..i+|P|]$.
	Thus, the node compares the rank of every string in $\S_P$ with the rank of the corresponding (due to shift $i$) string of $\S_T$ and checks whether they are equal (see also \cref{fig:textpatternpm} and \cref{alg:pm}).

	The following two lemmas give us the mathematical justification for the last part of the algorithm.
\cref{lem:verify_occurrence} states that the test made by the algorithm is sufficient to decide whether $i\in\PM(P,T)$.
\cref{lem:cor_blocks_in_s} proves that for any $i$ where the first two conditions of \cref{lem:verify_occurrence} holds, the test of the third condition can be made by comparing the ranks of strings from $\S$, just like the algorithm acts.

\begin{lemma}\label{lem:verify_occurrence}
	$T[i+1..i+|P|]=P$ if and only if all the following holds:
	\begin{enumerate}
		\item $\PM(B,P)=(\PM(B,T)-i) \cap [0..|P|-n]$
		\item $\PM(E,P)=(\PM(E,T)-i) \cap [0..|P|-n]$
		\item For every maximal range $[a..b]\sub R_P$ we have $P[a..b]=T[i+a..i+b]$.
	\end{enumerate}
\end{lemma}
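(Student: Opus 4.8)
The plan is to prove both directions of the equivalence in \cref{lem:verify_occurrence}.

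\medskip
\noindent\textbf{The easy direction ($\Rightarrow$).}
First I would assume $T[i+1..i+|P|]=P$ and derive the three conditions. For condition~1, observe that any occurrence of $B$ in $P$ at offset $j$ (so $j\in[0..|P|-n]$ and $P[j+1..j+n]=B$) corresponds, under the alignment, to $T[i+j+1..i+j+n]=B$, i.e.\ $i+j\in\PM(B,T)$; conversely any occurrence of $B$ in $T$ that lies entirely within $[i+1..i+|P|]$ — equivalently whose shifted offset is in $[0..|P|-n]$ — is an occurrence of $B$ in $P$. This gives the set equality $\PM(B,P)=(\PM(B,T)-i)\cap[0..|P|-n]$. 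Condition~2 is identical with $E$ in place of $B$ (here I should note that $E$ also has length $n$, so the same offset range $[0..|P|-n]$ applies). Condition~3 is immediate: if $[a..b]\subseteq R_P$ then $P[a..b]=T[i+a..i+b]$ since the whole of $P$ equals the aligned window of $T$.

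\medskip
\noindent\textbf{The main direction ($\Leftarrow$).}
This is where the real work is. Assume conditions 1--3 hold; I must show $P[j]=T[i+j]$ for every $j\in[1..|P|]$. Fix such a $j$. There are two cases depending on whether $j$ is covered by some occurrence of $B$ or $E$ in $P$.

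\emph{Case A: $j$ is covered, i.e.\ $j\notin R_P$.} Then by definition of $R_P$ there is some $p\in\PM(B,P)\cup\PM(E,P)$ with $j\in[p+1..p+n]$. Suppose it is an occurrence of $B$ (the $E$ case is symmetric); so $P[p+1..p+n]=B$. By condition~1, $p+i\in\PM(B,T)$ (using that $p\in[0..|P|-n]$ since $\PM(B,P)\subseteq[0..|P|-n]$), hence $T[i+p+1..i+p+n]=B=P[p+1..p+n]$. Since $j-p\in[1..n]$, looking at position $j-p$ of this block gives $T[i+j]=P[j]$.

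\emph{Case B: $j$ is uncovered, i.e.\ $j\in R_P$.} Let $[a..b]$ be the maximal range in $R_P$ containing $j$. By condition~3, $P[a..b]=T[i+a..i+b]$, and taking position $j-a+1$ of this block gives $P[j]=T[i+j]$.

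Since every $j\in[1..|P|]$ falls into Case A or Case B, we conclude $T[i+1..i+|P|]=P$. The main obstacle I anticipate is the bookkeeping around the offset range $[0..|P|-n]$ in conditions 1 and 2 — specifically checking that every element of $\PM(B,P)$ really lies in $[0..|P|-n]$ (true since $|B|=n\le|P|$) and that the intersection with this range is exactly the set of $B$-occurrences in $T$ that are fully contained in the window $[i+1..i+|P|]$, so that no spurious or missing occurrences can break Case~A. Once this alignment is carefully stated, both directions are short.
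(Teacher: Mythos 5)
Your proof is correct and follows essentially the same route as the paper: the forward direction by restricting occurrences of $B$ and $E$ to the aligned window, and the converse by splitting each position $j$ into the covered case (handled via conditions 1--2 through an occurrence of $B$ or $E$ containing $j$) and the uncovered case $j\in R_P$ (handled via condition 3 on the maximal range containing $j$). The only cosmetic difference is that you case-split on $j\in R_P$ versus $j\notin R_P$ while the paper phrases it as $[j-n..j-1]\cap\PM(B,P)$ (or $\PM(E,P)$) being nonempty, which is the same dichotomy.
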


\begin{proof}
	The first direction is simple.
	Assume $T[i+1..i+|P|]=P$, then for every $0\le j<|P|-n$ we have  $j\in\PM(B,P)$ if and only if $B=P[j+1..j+n]=T[i+j+1..i+j+n]$  if and only if $i+j\in\PM(B,T)\iff j\in\PM(B,T)-i$.
	A similar argument works for the second property.
	Lastly, $P[a..b]=T[i+a..i+b]$ is a direct consequence of the fact that $T[i+1..i+|P|]=P$.

	For the other direction, let $j\in [1..|P|]$, our goal is to prove that $T[i+j]=P[j]$.
	If $[j-n..j-1]\cap \PM(B,P)\ne \emptyset$ then there exists some $t\in [j-n..j-1]$  such that $t\in\PM(B,P)$.
	By the first property we have $t+i\in\PM(B,T)$. Thus, $P[t+1..t+n]=T[i+t+1..i+t+n]$ and in particular $P[j]=P[t+(j-t)]=T[t+i+(j-t)]=T[i+j]$.
	The case where  $[j-n..j-1]\cap \PM(E,P)\ne \emptyset$ is symmetric.
	The last case we have to consider is where  $[j-n..j-1]\cap \PM(B,P)= \emptyset$ and $[j-n..j-1]\cap \PM(E,P)= \emptyset$.
	In this case, let $P[a..b]$ be the maximal region in $R_P$ that contains $j$ (such a region must exists). By the third property we have $P[a..b]=T[i+a..i+b]$ and in particular $P[j]=P[a+j-a]=T[i+a+j-a]=T[i+j]$ as required.
\end{proof}

	\begin{lemma}\label{lem:cor_blocks_in_s}
		Let $i\in[0..|T|-|P|]$ such that
		$\PM(B,P)=(\PM(B,T)-i) \cap [0..|P|-n]$ and $\PM(E,P)=(\PM(E,T)-i) \cap [0..|P|-n]$.
		Then, for every maximal range $[a..b]\subseteq R_P$  we have $P[a..b]\in \S_P$ and $T[i+a..i+b]\in \S_T$.
	\end{lemma}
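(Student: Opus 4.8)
The plan is to verify the two membership claims separately, the first being a matter of unwinding definitions and the second being the real content. For $P[a..b]\in\S_P$: by definition $\S_P$ consists of the strings $P[p..q]$ with $[p..q]\sub R_P$, $p-1\notin R_P$, and $q+1\notin R_P$; since $[a..b]$ is a \emph{maximal} range contained in $R_P$, these three conditions hold for $[a..b]$, so $P[a..b]\in\S_P$. For $T[i+a..i+b]\in\S_T$ I must establish $[i+a..i+b]\sub R_T$, $(i+a)-1\notin R_T$, and $(i+b)+1\notin R_T$.

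Two preliminary observations set things up. First, for a string $X\in\{P,T\}$ and a position $p\in[1..|X|]$, one has $p\in R_X$ iff $[p-n..p-1]\cap(\PM(B,X)\cup\PM(E,X))=\emptyset$, because an occurrence of $B$ or $E$ at offset $k$ covers precisely the positions $[k+1..k+n]$. Second, $R_P\sub[n+1..|P|-n]$: indeed $0\in\PM(B,P)$ (as $B=P[1..n]$), so $[1..n]$ is covered, and $|P|-n\in\PM(E,P)$, so $[|P|-n+1..|P|]$ is covered. Hence a maximal range $[a..b]\sub R_P$ has $n+1\le a\le b\le|P|-n$, so $a-1$ and $b+1$ are genuine positions of $[1..|P|]$ (making maximality force $a-1,b+1\notin R_P$ rather than being vacuous), and $i+a-1\ge 1$, $i+b+1\le|T|$ are genuine positions of $T$.

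The heart of the argument translates occurrences between $P$ and $T$ using the hypotheses, which I would rewrite as $\PM(B,T)\cap[i..i+|P|-n]=\PM(B,P)+i$ and similarly for $E$. For $[i+a..i+b]\sub R_T$: take $j\in[a..b]$; since $n+1\le j\le|P|-n$ the window $[i+j-n..i+j-1]$ lies inside $[i..i+|P|-n]$, so any occurrence of $B$ or $E$ in $T$ meeting that window pulls back under the hypothesis to an occurrence of the same word in $P$ meeting $[j-n..j-1]$, contradicting $j\in R_P$; thus $i+j\in R_T$. For the boundaries: from $a-1\notin R_P$ (resp.\ $b+1\notin R_P$) pick an occurrence $k\in[a-1-n..a-2]$ (resp.\ $k\in[b+1-n..b]$) of $B$ or $E$ in $P$; the index bounds just established give $k\in[0..|P|-n]$, so by the hypothesis $k+i$ is an occurrence of the same word in $T$, and a short check shows $[k+i+1..k+i+n]$ contains $i+a-1$ (resp.\ $i+b+1$), hence that position is not in $R_T$. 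Combining the three facts yields $T[i+a..i+b]\in\S_T$.

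I do not anticipate a genuine obstacle; the proof is disciplined index chasing. The point needing the most care is establishing $R_P\sub[n+1..|P|-n]$ and using it both to legitimise the positions $a-1$, $b+1$, $i+a-1$, $i+b+1$ and to guarantee that the shifted windows $[i+j-n..i+j-1]$ stay within $[i..i+|P|-n]$, which is exactly the range on which the two hypotheses are informative.
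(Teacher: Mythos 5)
Your proof is correct and follows essentially the same route as the paper's: $P[a..b]\in\S_P$ directly from maximality, and $T[i+a..i+b]\in\S_T$ by showing $[i+a..i+b]$ is a maximal range of $R_T$, translating occurrences of $B$ and $E$ between $P$ and $T$ via the two hypotheses. If anything, your write-up is slightly more careful than the paper's, which asserts the specific occurrence $a-1-n\in\PM(B,P)$ and states the right-boundary condition as $i+b\notin R_T$ (rather than an arbitrary occurrence in the covering window and $i+b+1\notin R_T$, as you correctly do), and which leaves implicit the check that pulled-back offsets lie in $[0..|P|-n]$, which you justify via $R_P\subseteq[n+1..|P|-n]$.
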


\begin{proof}
	First, by definition of $S_P$ we have $P[a..b]\in \S_P$.
	Similarly, to prove that $T[i+a..i+b]\in \S_T$ it is sufficient to prove that $[i+a..i+b]$ is a maximal range in $R_T$ i.e. that  (1) $[i+a..i+b]\subseteq R_T$ and (2) $i+a-1\notin R_T$ and (3) $i+b\notin R_T$.

	For (1), let $j\in[i+a..i+b]$ assume by a way of contradiction that $j\notin R_T$. Then by definition there exists some $t\in[j-n..j-1]$ such that $t\in \PM(B,T),\PM(E,T)$.
	But then, it must be the case that for $t'=t-i$ we have $t'\in \PM(B,P)\cup\PM(E,P)$.
	Hence, and therefore $j-i=t-i+j-t=t'+(j-t)\notin R_P$ but $j-i\in[a..b]$ and therefore $[a..b]\not\subseteq R_P$ in contradiction.
	Therefore, $[i+a..i+b]\subseteq R_T$.

	For (2), since $[a..b]$ is a maximal range in $R_P$ we have $[a-1]\notin R_P$. Moreover, since $0\in\PM(B,P)$ it must be that $a>n$ and therefore by definition of $R_P$ we have $a-1-n\in\PM(B,P)$.
	Hence, $a-1-n+i\in\PM(B,T)$ and therefore $a-1-n+i+n=a-1+i\notin R_T$.

	Similarly for (3), since $[a..b]$ is a maximal range in $R_P$ we have $[b+1]\notin R_P$. Moreover, since $|P|-n\in\PM(E,P)$ it must be that $b<|P|-n$ and therefore by definition of $R_P$ we have $b\in\PM(B,E)$.
	Hence, $i+b\in\PM(B,T)$ and therefore $i+b\notin R_T$.
	Thus, we proved that $[i+a..i+b]$ is a maximal range in $R_T$ and therefore $T[i+a..i+b]\in\S_T$.
\end{proof}

\section{Suffix Array Construction and the Corresponding LCP Array}\label{sec:SA_and_LCP}
	In this section, we are proving \cref{thm:SA_LCP} by introducing an algorithm that computes $\SA_S$, the suffix array of a given string $S$ of length $O(n^2)$ in the \cc model in $O(\log\log n)$ rounds.
	Moreover, we show how to compute the complementary $\LCP_S$ array in the same asymptotic running time.
 We first give the formal definition  of $\SA_S$ and $\LCP_S$:
    \begin{definition}
		For a string $S$, the suffix array, denoted by $\SA_S$ is the sorted array of $S$ suffixes, i.e., $S[\SA_S[i]..]\prec S[\SA_S[i+1]..]$ for all $1\le i<|S|-1$.
		The corresponding $\LCP$ array, $\LCP_S$, stores the $\LCP$ of every two consecutive suffixes due to the lexicographical order. Formally $\LCP[i]=\LCP(S[\SA_S[i]..],S[\SA_S[i+1]..])$.
    \end{definition}

	Our algorithm follows the recursive process described by  Pace and Tiskin~\cite{PT13}, which is a speedup of  K{\"{a}}rkk{\"{a}}inen et al.~\cite{KSB06} for parallel models.
	The main idea of the recursion is that in every level, the algorithm creates a smaller string that represents a subset of the original string positions, solve recursively and use the results of the subset to compute the complete results.
	While the depth of the recursion increases, the ratio between the length of the current string and the length of the new string increases as well.
	At the beginning the ratio is constant and in $O(\log\log n)$ rounds it becomes polynomial in $n$.
	The main difference between our algorithm and \cite{PT13,KSB06} is that our algorithm is simpler due to the powerful sorting algorithm provided in \cref{thm:sorting_esp_g0}.
	Moreover, we also introduce how to compute $\LCP_S$, in addition to $\SA_S$.
	We first ignore the $\LCP_S$ computation and then in \cref{sec:LCP} we give the details needed for computing $\LCP_S$.

    	Our algorithm uses the notion of \emph{difference cover}~\cite{CL00}, and \emph{difference cover sample} as defined by K{\"{a}}rkk{\"{a}}inen et al.~\cite{KSB06}.
	For a parameter $t$, a difference cover $DC_t\subseteq [0,t-1]$ is a set of integers such that for any pair $i,j\in\mathbb Z$ the set $DC_t$ contains $i',j'\in DC_t$ with $j-i\equiv j'-i' (\modulo t)$.
	For every $t\in\mathbb{N}$ there exists a difference cover $DC_t$ of size $\Theta(\sqrt t)$, which can be computed in $O(\sqrt t)$ time (see Colbourn and Ling~\cite{CL00}).
	For a string $S$, \cite{KSB06} defined the difference cover sample $DC_t(S)=\{i\mid i\in[1..|S|]\text{ and } (i\modulo t)\in DC_t\}$, as the set of all indices in $S$ which are in $DC_t$ modulo $t$.
	The following lemma was proved in \cite{SV13}.
	\begin{lemma}[{\cite[{Lemma 2}]{SV13}}]\label{lem:difference_cover}
		For a string $S$ and an integer $t\le |S|$, there exists a difference cover $DC_t$, such that $|DC_t(S)|\in O(|S|/\sqrt t)$ and for any pair of positions $i,j\in[1..|S|]$ there is an integer $k\in[0..t-1]$ such that $(i+k)$ and $(j+k)$ are both in $DC_t(S)$.
	\end{lemma}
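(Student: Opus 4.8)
The plan is to take the classical construction of Colbourn and Ling~\cite{CL00}, which for every $t\in\mathbb{N}$ produces a difference cover $DC_t\subseteq[0..t-1]$ with $|DC_t|=\Theta(\sqrt{t})$, and to verify that this particular choice meets both requirements of the lemma. The two requirements are essentially independent: the bound $|DC_t(S)|\in O(|S|/\sqrt{t})$ depends only on $|DC_t|$, whereas the covering property holds for \emph{any} difference cover and is a purely modular statement about shifts.

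First I would establish the size bound. Partition $[1..|S|]$ into $\ceil{|S|/t}$ consecutive intervals, each of length at most $t$. Any window of $t$ consecutive integers meets every residue class modulo $t$ exactly once, hence contains exactly $|DC_t|$ integers whose residue lies in $DC_t$, and a shorter window contains at most that many. Therefore $|DC_t(S)|\le\ceil{|S|/t}\cdot|DC_t|$, and since $t\le|S|$ we have $\ceil{|S|/t}=O(|S|/t)$, so $|DC_t(S)|=O(|S|/t)\cdot\Theta(\sqrt{t})=O(|S|/\sqrt{t})$.

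Next, the covering property. Fix $i,j\in[1..|S|]$. Applying the defining property of $DC_t$ to this pair, there are $i',j'\in DC_t$ with $j'-i'\equiv j-i\pmod{t}$. Let $k\in[0..t-1]$ be the representative of $i'-i$ modulo $t$. Then $i+k\equiv i'\pmod{t}$ and $j+k\equiv i'+(j-i)\equiv i'+(j'-i')\equiv j'\pmod{t}$, so both $i+k$ and $j+k$ have residue in $DC_t$. The one point that needs care is that $i+k$ or $j+k$ might exceed $|S|$; this is handled, exactly as in the difference-cover-sample construction of K{\"{a}}rkk{\"{a}}inen et al.~\cite{KSB06}, by conceptually padding $S$ on the right with $t$ sentinel characters smaller than every symbol, so that positions in $[1..|S|+t]$ are valid and $DC_t(S)$ is taken over this extended range; this adds only $\Theta(\sqrt{t})=O(|S|/\sqrt{t})$ positions, leaving the size bound intact. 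The proof is short, and the only real content is that a \emph{single} shift $k\in[0..t-1]$ moves both $i$ and $j$ into residues of $DC_t$ simultaneously — precisely what the difference-cover axiom buys us; the boundary bookkeeping is the part I would be most careful to state cleanly.
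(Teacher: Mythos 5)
Your proof is correct, and it is worth noting that the paper itself offers no proof of this statement at all: it is imported verbatim as Lemma~2 of \cite{SV13}, so there is nothing internal to compare against. Your verification is the standard one — the Colbourn--Ling cover of size $\Theta(\sqrt t)$, a counting argument over $\lceil |S|/t\rceil$ length-$t$ windows for the bound $|DC_t(S)|=O(|S|/\sqrt t)$, and the modular shift $k\equiv i'-i \pmod t$ for the covering property — and both steps are sound. The one substantive point is the boundary issue you flag: under the paper's literal definition $DC_t(S)\subseteq[1..|S|]$, the chosen $k$ can push $i+k$ or $j+k$ beyond $|S|$, so the statement as written only holds if one either pads the range (defining the sample over $[1..|S|+t]$, as you do, which changes $|DC_t(S)|$ by only $O(\sqrt t)$) or restricts $i,j$ to $[1..|S|-t+1]$. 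Your padding fix is consistent with how the paper actually uses the lemma — it already assumes $S[j]$ is a dummy character for $j\ge|S|$ when forming the substrings $S_i$ — so your proof both establishes the lemma and makes explicit a boundary convention that the paper (and its citation) leaves implicit.
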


	At every level of the recursion let $\eps>0$ be a number such that the length of the string $S$ is $|S|\in O(n^{2-\varepsilon})$.
	Later we will describe how to choose $\eps$ exactly (see the time complexity analysis).
	At the first level $\eps=O(1/\log n)$ satisfies this requirement.
	Let $DC_t\subset[0..t-1]$ be some fixed difference cover with $t=\min\{n^{\eps},n^{1/3}\}$
	of size $|DC_t|=O(\sqrt t)$.

	For every $i\in [1..|S|]$ let  $S_i=S[i..i+t-1]$ be the substring of $S$ of length $t$ starting at position $i$ (we assume that $S[j]$ is some dummy character for every $j\ge|S|$).
	As a first step the algorithm sorts all the strings in $\mathcal S=\{S_i\mid i\in DC_t(S)\}$.
	Notice that the total length of all these strings is $|DC_t(S)|\cdot t=O(\frac{|S|}{\sqrt t})\cdot t= O(|S|\sqrt t)\subseteq O(n^{2-\eps}\cdot n^{\eps/2})=O(n^{2-\varepsilon/2})\subseteq O(n^2)$ and therefore the algorithm can sort all the strings in $O(1)$ rounds using \cref{thm:sorting_esp_g0}.
	Recall that as a result of running the sorting algorithm, the node that stores index $i$ of $S$, has now the rank of $S_i$, $\rank_{\mathcal S}(S_i)$,  among all the strings of $\mathcal S$ (copies of the same string will have the same rank).
	The algorithm uses the ranks of the strings to create a new string of length $|DC_t(S)|$.
	For every $i\in DC_t$ let $S^{(i)}=\rank_{\S}(S_i)\rank_{\S}(S_{i+t})\rank_{\S}(S_{i+2t})\rank_{\S}(S_{i+3t})\dots$ (if $0\in DC_t$ then $S^{(0)}$ starts from $\rank_{\S}(S_t)$ since $S_0$ does not exist).
	Moreover, let $S'$ be the concatenation of all $S^{(i)}$s for $i\in DC_t$ (in some arbitrary order) with a special character $'\$'$ as a delimiter between  the strings $S^{(i)}$.
	The algorithm runs recursively on $S'$.
	The result of the recursive call is the suffix array of $S'$, $\SA_{S'}$ (which stores the order of the suffixes in $S'$).
	Note that every index $i\in DC_t(S)$ has a corresponding index in $S'$ which is where $\rank_{\S}(S_i)$ appears, we denote this position as $f(i)$.
	The algorithm sends for every index $i$ the rank of $f(i)$ (which is the index in $\SA_{S'}$ where $f(i)$ appears) to the node that stores index $i$ of $S$.

	Due to the following claim, which we prove formally later in \cref{lem:LCP_and_order_sync}, the order of suffixes of $S$ from $DC_t(S)$ is the same as the order of the corresponding suffixes of $S'$.
	\begin{claim}\label{lem:recursive_SSA}
	For $a,b\in DC_t(S)$ we have
	 $S[a..]\prec S[b..]$ if and only if $S'[f(a)..]\prec S'[f(b)..]$.
	\end{claim}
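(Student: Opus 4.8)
The plan is to show that for $a,b\in DC_t(S)$, comparing the suffixes $S[a..]$ and $S[b..]$ lexicographically is the same as comparing the suffixes $S'[f(a)..]$ and $S'[f(b)..]$, because the latter are (essentially) the sequences of $t$-block ranks of the former, sampled at positions congruent to $a$ and $b$ modulo $t$, and these sequences of ranks faithfully encode the original characters. First I would make precise how $S'[f(a)..]$ looks: starting at $f(a)$, the string $S'$ reads $\rank_{\S}(S_a)\,\rank_{\S}(S_{a+t})\,\rank_{\S}(S_{a+2t})\cdots$ until it hits the delimiter $'\$'$ that ends $S^{(a\bmod t)}$, and then continues with blocks belonging to other residue classes. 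Since $'\$'$ is chosen to be smaller (or larger, consistently) than every rank and appears in $S'[f(a)..]$ exactly at the point where the suffix $S[a..]$ "runs out" of sampled positions, the comparison between $S'[f(a)..]$ and $S'[f(b)..]$ is decided entirely within the $S^{(a\bmod t)}$ and $S^{(b\bmod t)}$ portions (the parts after the first delimiter are irrelevant for deciding a strict inequality, and if one is a prefix of the other up to its delimiter, the delimiter resolves it in the direction matching the prefix relation on the original suffixes).

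Next I would use the key property of the renaming/ranking step, already established in spirit by \cref{lem:renaming_works}: for two length-$t$ substrings $S_x$ and $S_y$, we have $\rank_{\S}(S_x)<\rank_{\S}(S_y)$ iff $S_x\prec S_y$ as strings, and $\rank_{\S}(S_x)=\rank_{\S}(S_y)$ iff $S_x=S_y$. Consider $\ell=\LCP(S[a..],S[b..])$. If $\ell$ is "large", i.e.\ at least $t$ positions of agreement carry over into a full block, then the first block where $S_{a+kt}\ne S_{b+kt}$ has $k=\floor{\ell/t}$, and there $S_{a+kt}\prec S_{b+kt}$ exactly when $S[a..]\prec S[b..]$; correspondingly $\rank_{\S}(S_{a+kt})<\rank_{\S}(S_{b+kt})$ while all earlier blocks have equal ranks, so $S'[f(a)..]\prec S'[f(b)..]$. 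If $\ell$ is "small" (less than $t$), then already the very first blocks disagree, $S_a\prec S_b$ iff $S[a..]\prec S[b..]$ (they share a prefix of length $\ell<t$ and then differ), giving $\rank_{\S}(S_a)<\rank_{\S}(S_b)$ and again the right conclusion. The remaining case is when one suffix is a prefix of the other or the sampled positions run out; here I would argue that the sampled position sequences of $a$ and $b$ have the same length iff $a\equiv b\pmod t$ (and then the strings $S^{(a\bmod t)},S^{(b\bmod t)}$ have equal length), while if $a<b$ in the "suffix runs out first" sense the shorter block-rank sequence is a prefix of the longer and the trailing $'\$'$ (smaller than any rank) makes $S'[f(a)..]\prec S'[f(b)..]$, matching $S[a..]\prec S[b..]$. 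Putting the cases together yields both directions of the iff.

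The main obstacle I anticipate is the careful bookkeeping at the boundary: handling the dummy characters $S[j]$ for $j\ge |S|$, ensuring the block-rank sequence for a residue class $r$ ends precisely where $S[a..]$ with $a\equiv r$ ends (so that no spurious equal-length coincidence arises between different residue classes), and checking that the delimiter $'\$'$ behaves consistently — it must break ties in the same direction as the prefix relation on the original suffixes, for which one needs $'\$'$ to be strictly smaller than every possible rank value and to occur in $S'[f(a)..]$ strictly earlier than in $S'[f(b)..]$ exactly when $S[a..]$ is the shorter suffix. Once these edge cases are pinned down, the core equivalence is a direct consequence of the rank-preserves-order property together with the block decomposition argument, entirely analogous to \cref{lem:renaming_works}.
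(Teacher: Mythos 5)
Your proof is correct and takes essentially the same route as the paper, which establishes \cref{lem:recursive_SSA} as part of \cref{lem:LCP_and_order_sync}: block ranks preserve the order of the length-$t$ blocks, the first $\floor{\ell/t}$ sampled blocks of the two suffixes get equal ranks, and the block at index $\floor{\ell/t}$ yields a strict rank inequality in the same direction, so the order of $S'[f(a)..]$ and $S'[f(b)..]$ matches that of $S[a..]$ and $S[b..]$. Your additional bookkeeping about the delimiter and the dummy padding covers boundary cases the paper's proof glosses over (only the side remark that the sampled sequences of $a$ and $b$ have equal length iff $a\equiv b\pmod t$ is not literally true, but it is not needed for the core argument).
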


	Due to \cref{lem:recursive_SSA}, $\SA_{S'}$ represents the order of the subset of suffixes of $S$ starting at $DC_t(S)$.
	To extend the result for the complete order of all the suffixes of $S$ (hence, computing $\SA_S$), the algorithm creates for every index of $S$ a \emph{representative object} of size $O(t)$ words of space.
	These objects have the property that by comparing two objects one can determine the order of the corresponding suffixes.

	\para{The representative objects.}
	For every index $i$ the object represents the suffix $S[i..]$ is composed of two parts.
	The first part is $S_i$ - which is the substring of $S$ of length $t$ starting at position $i$.
	The second part is the ranks (due to the lexicographic order) of all the suffixes at position in $DC_t(S)\cap[i..i+t-1]$ among all suffixes of $DC_t(S)$, using $\SA^{-1}_{S'}$.
	This information is stored as an array $A_i$ of length $t$ as follows.
	For every $j\in [0..t-1]$ if $i+j\in DC_t(S)$ we set $A_i[j] = \SA^{-1}_{S'}[f(i+j)]$, which is the rank of position $i+j$ (as computed by the recursive call) and $A_i[j]=-1$ otherwise.
	The first part is used to determine the order of two suffixes which their $\LCP$ is at most $t$ and the second part is used to determine the order of two suffixes which their $\LCP$ is larger than $t$.

	The comparison of the objects represent positions $a$ and $b$, is done as follows.
	The algorithm first compares $S_a$ and $S_b$.
	If $S_a\ne S_b$ then the order of $S[a..]$ and $S[b..]$ is determined by the order of $S_a$ and $S_b$.
	Otherwise,  $S_a=S_b$ and the algorithm uses the second part of the objects.
	By \cref{lem:difference_cover} there exists some $k\in[0..t-1]$ such that $a+k$ and $b+k$ are both in $DC_t(S)$.
	In particular $A_a[k]$ and $A_b[k]$ both hold actual ranks of suffixes of $S'$ (and not $-1$s).
 	The algorithm uses $A_a[k]$ and $A_b[k]$ to determine the order of $S[a..]$ and $S[b..]$.
 	By the following lemma the order of $A_a[x]$ and $A_b[x]$ is exactly the same order of the corresponding suffixes starting at positions $a$ and $b$.

 	\begin{lemma}\label{lem:rep_objects}
 		Let $a,b\in[|S|]$ such that $S[a..]\prec S[b..]$ and $S_a=S_b$.
 		Then, for every $k$, if $A_a[k]\ne -1$ and $A_b[k]\ne-1$ then $A_a[k]<A_b[k]$.
 		Moreover, there exists some $k\in[0..t-1]$ such that  $A_a[k]\ne -1$ and $A _b[k]\ne-1$.
 	\end{lemma}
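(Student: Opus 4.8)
The plan is to prove the two claims separately. For the first claim, suppose $S_a = S_b$, $S[a..] \prec S[b..]$, and fix some $k$ with $A_a[k] \neq -1$ and $A_b[k] \neq -1$. By the definition of $A_i$, this means $a+k, b+k \in DC_t(S)$ and $A_a[k] = \SA^{-1}_{S'}[f(a+k)]$, $A_b[k] = \SA^{-1}_{S'}[f(b+k)]$. So it suffices to show $\SA^{-1}_{S'}[f(a+k)] < \SA^{-1}_{S'}[f(b+k)]$, which by definition of the inverse suffix array is equivalent to $S'[f(a+k)..] \prec S'[f(b+k)..]$. By \cref{lem:recursive_SSA}, this in turn is equivalent to $S[a+k..] \prec S[b+k..]$. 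Hence the first claim reduces to the following: if $S_a = S_b$ and $S[a..] \prec S[b..]$, then $S[a+k..] \prec S[b+k..]$ for every $k \in [0..t-1]$ with $a+k, b+k \in DC_t(S)$. To see this, note that since $S_a = S_b$ we have $S[a..a+t-1] = S[b..b+t-1]$, so in particular $S[a..a+k-1] = S[b..b+k-1]$ (this is a prefix of length $k \le t$ of a common prefix of length $t$); therefore $S[a+k..]$ and $S[b+k..]$ are obtained from $S[a..]$ and $S[b..]$ by deleting a common prefix of the same length $k$, and deleting a common prefix preserves (strict) lexicographic order. Thus $S[a+k..] \prec S[b+k..]$, completing the first claim.

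For the second claim, I invoke \cref{lem:difference_cover} (the difference cover sample property) directly: for the pair of positions $a, b \in [1..|S|]$ there exists an integer $k \in [0..t-1]$ such that both $a+k$ and $b+k$ lie in $DC_t(S)$. For this $k$, by the definition of $A_i$ we have $A_a[k] = \SA^{-1}_{S'}[f(a+k)] \neq -1$ and $A_b[k] = \SA^{-1}_{S'}[f(b+k)] \neq -1$, which is exactly what is claimed. One small point to be careful about: \cref{lem:difference_cover} is stated for positions in $[1..|S|]$, so I should note that $a+k$ and $b+k$ being in $DC_t(S)$ already forces them to be valid positions in $[1..|S|]$, and in particular $a+k, b+k \le |S|$, so the entries $A_a[k], A_b[k]$ are well-defined and $f$ is applied to genuine indices of $DC_t(S)$.

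The only genuine subtlety — and the step I expect to require the most care in the writeup — is the reduction chain in the first claim: making precise that $\SA^{-1}_{S'}[p] < \SA^{-1}_{S'}[q]$ iff $S'[p..] \prec S'[q..]$ (which is just the definition of the suffix array and its inverse, applied to $S'$, whose suffixes starting at the positions $f(\cdot)$ are pairwise distinct because they are distinct suffixes of distinct $S^{(i)}$ blocks separated by the unique delimiter $\$$), and then bridging from suffixes of $S'$ to suffixes of $S$ via \cref{lem:recursive_SSA}. Everything else is routine: the prefix-deletion monotonicity of lexicographic order and the direct application of the difference cover guarantee.
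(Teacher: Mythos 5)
Your proposal is correct and follows essentially the same route as the paper's proof: reduce the claim $A_a[k]<A_b[k]$ to $S'[f(a+k)..]\prec S'[f(b+k)..]$ via the definition of $A$ and \cref{lem:recursive_SSA}, establish $S[a+k..]\prec S[b+k..]$ from $S_a=S_b$ by observing that removing a common prefix of length $k\le t\le\LCP(S[a..],S[b..])$ preserves the strict order (the paper writes this out with explicit $\LCP$ bookkeeping), and obtain the existence of a valid $k$ directly from \cref{lem:difference_cover}. No gaps.
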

	 \begin{proof}
	 	Let $\ell=\LCP(S[a..],S[b..])$.
	 	By definition, $S[a..a+\ell-1]=S[b..b+\ell-1]$ and $S[a+\ell]<S[b+\ell]$.
	 	Notice that $\ell\ge t$.
	 	Thus, for every $0\le k\le t\le\ell$ we have $S[a+k..a+k+(\ell-k-1)]=S[b+k..b+k+(\ell-k-1)]$ and $S[a+k+(\ell-k)]<S[b+k+(\ell-k)]$.
	 	Therefore, $S[a+k..]\prec S[b+k..]$.
	 	For $k\in[0..t-1]$ such that $A_a[k]\ne -1$ and $A_b[k]\ne-1$, it must be the case that  $a+k,b+k\in DC_t(S)$.
	 	Thus, by \cref{lem:recursive_SSA} we have $S'[f(a+k)..] \prec S'[f(b+k)..]$ and therefore $A_a[k]<A_b[k]$.

 		By \cref{lem:difference_cover} there exists some $k\in[0..t-1]$ such that $a+k$ and $b+k$ are both in $DC_t(S)$.
 		For this value of $k$ it is guaranteed that  $A_a[k]\ne -1$ and $A_b[k]\ne-1$.
	 \end{proof}

For every index $i\in[1..|S|]$ the algorithm creates an object of size $O(t)$ words of space.
Thus, the total size of all the objects is $O(t|S|)\subseteq O(n^\eps\cdot n^{2-\eps})=O(n^2)$.
Moreover, by definition $t\le n^{1/3}$.
Therefore the algorithm sorts all the objects in $O(1)$ rounds using the algorithm of \cref{thm:sorting_esp_g0}.
By \cref{lem:rep_objects} the result of the object sorting algorithm is a sorting of all the suffixes of $S$.

\para{Time complexity.}
Recall that for $|S|=O(n^{2-\eps})$ we defined $t=\min\{n^{\eps}, n^{1/3}\}$ and that the length of $S'$ is $|S'|=|DC_t(S)|= O(|S|/\sqrt t)$.
Let $c$ be a constant such that $|S'|\le c|S|/\sqrt t$ (for any $n>n_0$ for some $n_0$).
Denote $S_k$, $S'_k$, $\eps_k$ and $t_k$ as the values of $S$, $S'$, $\eps$ and $t$ in the $k$th level of the recursion, respectively.
Our goal is to gurantee an exponantial growth in the value of $\eps$ from level to level.
For the first level, in order to have a growth, we have $|S'_1|\le\frac{c|S_1|}{\sqrt{t_1}}=\frac{c|S_1|}{n^{0.5\eps_1}}=\frac {c}{n^{0.1\eps_1}}\frac{|S_1|}{n^{0.4\eps_1}}=\frac {c}{n^{0.1\eps_1}}O(n^{2-1.4\eps_1}).$
We are setting $\eps_1=10\log c/\log n$ and so $|S'_1|\le O(n^{2-1.4\eps_1})$.
Then, as long as $\eps_i<1/3$ we define $\eps_{i+1}=1.4\eps_i$ and we get with similar analysis $|S'_i|\le O(n^{2-1.4\eps_i})$.
By a straightforward induction we get $|S'_i|\le O(n^{2-{1.4}^i\eps_1})$.
From the time  that $\eps_i\ge 1/3$ (i.e. $|S_i|=O(n^{5/3})$)  we have $t_{i+1}=n^{1/3}$.
Thus, $|S'_{i+1}|=O(|S_i|/\sqrt t)=O(|S'_i|/n^{1/6})= O(n^{3/2})$ and in four rounds we get $S'_{i+4}=O(n)$.
At this time the algorithm solves the problem in $O(1)$ rounds in one node.
Therefore the algorithm performs  $O(\log\log n)+4=O(\log\log n)$ levels of recursion.
Each level of recursion takes $O(1)$ rounds, and therefore the total running time of the algorithm is $O(\log\log n)$.

\begin{remark*}
	As described in \cref{sec:intro}, our algorithm can be translated to $O(\log\log n)$ rounds algorithm in the MPC model.
	However, if one consider a variant of the MPC model where the product of machines and memory size is polynomialy larger than $n$, i.e. if $M\cdot S= n^{1+\alpha}$ for some constant $\alpha>0$, then there is a faster algorithm.
	In particular one can use $t=n^{\alpha}$ in all rounds and get an $O(1/\alpha)=O(1)$ rounds algorithm.
 	\end{remark*}

\subsection{Computing the Corresponding LCP Array}\label{sec:LCP}
	The computation of $\LCP_S$ is done during the computation of $\SA_S$, by several additional operations at some steps of the computation.
	The recursive process has exactly the same structure, but now every level of the recursion can use both $\SA_{S'}$ and $\LCP_{S'}$ and has to compute $\LCP_{S}$ in addition to $\SA_S$.

	Recall that in the $\SA$ construction algorithm of the previous section, the order of two suffixes of $S$ that starts in $DC_t(S)$ is exactly the same as the order of the corresponding suffixes of $S'$ that is represented in $\SA_{S'}$.
	The issue with computing $\LCP_S$ is slightly more complicated.
	In the following lemma we prove that for two positions $a,b\in DC_t(S)$ we have
	$\LCP(S'[f(a)..],S'[f(b)..])= \floor{\LCP(S[a..],S[b..])/t}$, which means $\LCP(S[a..],S[b..]])\in t\cdot \LCP(S'[f(a)..],S'[f(b)..]) + [0..t-1]$.

	\begin{lemma}\label{lem:LCP_and_order_sync}
		For $a,b\in DC_t(S)$ we have.
		$\LCP(S'[f(a)..],S'[f(b)..])= \floor{\LCP(S[a..],S[b..])/t}$
		and if $S[a..]\prec S[b..]$ then $S'[f(a)..]\prec S'[f(b)..]$.
	\end{lemma}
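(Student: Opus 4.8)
The plan is to analyze the suffixes of $S$ and $S'$ block-by-block, using the fact that $S'$ was built by concatenating the strings $S^{(i)}$ (for $i\in DC_t$), where each symbol of $S^{(i)}$ is $\rank_{\S}(S_j)$ for $j$ running over an arithmetic progression with step $t$ inside the residue class of $i$. The key structural observation is that, for $a\in DC_t(S)$, the suffix $S'[f(a)..]$ consists precisely of the symbols $\rank_{\S}(S_a),\rank_{\S}(S_{a+t}),\rank_{\S}(S_{a+2t}),\dots$ up to the delimiter \$\ that ends $S^{(i)}$ where $i\equiv a\pmod t$; since \$\ is a special character smaller (or larger — whichever convention fixes it) than all ranks and appears nowhere else, the comparison of $S'[f(a)..]$ with $S'[f(b)..]$ for $a,b\in DC_t(S)$ reduces to the lexicographic comparison of the two rank-sequences $(\rank_{\S}(S_{a+mt}))_{m\ge 0}$ and $(\rank_{\S}(S_{b+mt}))_{m\ge 0}$, as long as these do not run off the ends of their respective blocks before disagreeing — a case I will have to handle separately and which I expect to be the main subtlety.

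Next I would relate this to $\LCP(S[a..],S[b..])$. Since equality of ranks $\rank_{\S}(S_{a+mt})=\rank_{\S}(S_{b+mt})$ is equivalent to $S_{a+mt}=S_{b+mt}$, i.e. to $S[a+mt..a+mt+t-1]=S[b+mt..b+mt+t-1]$, I would argue: if $\LCP(S[a..],S[b..])=L$ and $L=qt+r$ with $0\le r<t$, then for every $m<q$ the length-$t$ blocks starting at $a+mt$ and $b+mt$ agree, so the first $q$ symbols of the two rank-sequences agree; and the $q$-th blocks (starting at $a+qt$, $b+qt$) differ, because they contain the mismatch position, so the $q$-th symbols of the rank-sequences differ. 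Hence the rank-sequences have a common prefix of exactly length $q=\floor{L/t}$, giving $\LCP(S'[f(a)..],S'[f(b)..])=\floor{\LCP(S[a..],S[b..])/t}$, which is the first claim. One point to check here is that these blocks $S_{a+mt}$ do belong to $\S=\{S_j\mid j\in DC_t(S)\}$: this holds because $a\in DC_t(S)$ forces $a\equiv i\pmod t$ for some $i\in DC_t$, and then $a+mt\equiv i\pmod t$ as well, so $a+mt\in DC_t(S)$.

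For the order statement, suppose $S[a..]\prec S[b..]$, and let $L=\LCP(S[a..],S[b..])$, $q=\floor{L/t}$. By the analysis above the two rank-sequences agree on their first $q$ symbols and disagree at symbol $q$, with the disagreement governed by the lexicographic order of $S_{a+qt}$ versus $S_{b+qt}$; and since $S[a..]\prec S[b..]$ we have $S[a+L]<S[b+L]$, which means the block $S_{a+qt}$ is lexicographically smaller than $S_{b+qt}$ (their common prefix has length $r<t$ and then they split at the character that also splits $S[a..]$ and $S[b..]$, unless one block is a prefix of the other — but a strict prefix is impossible here since both have length exactly $t$, while the boundary subtlety arises only when a block runs past the end of $S$, which is why $S$ is padded with dummy characters). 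Since the sorting algorithm of \cref{thm:sorting_esp_g0} assigns ranks consistent with lexicographic order (identical strings get identical ranks, and strictly smaller strings get strictly smaller ranks), we get $\rank_{\S}(S_{a+qt})<\rank_{\S}(S_{b+qt})$, hence $S'[f(a)..]\prec S'[f(b)..]$, unless one rank-sequence is an initial segment of the other — i.e.\ one of the two blocks $S^{(i)}$ gets exhausted first — which can only make the shorter one lexicographically smaller, and I would check that this always corresponds to the correct direction because a shorter $S^{(i)}$ arises from a smaller residue offset $i$ whose index $f$ lands it nearer the tail of $S$. The hard part will be carefully dispatching these two boundary cases — delimiters/block exhaustion and dummy padding at the end of $S$ — to ensure the rank-sequence comparison never produces a spurious tie or a reversed inequality; once those are nailed down, both assertions of the lemma follow from the block decomposition and the order-preserving property of the rank function.
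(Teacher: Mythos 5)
Your proposal is correct and takes essentially the same route as the paper's proof: identify $S'[f(a)+i]$ with $\rank_{\S}(S_{a+ti})$, observe that the first $\floor{\ell/t}$ rank symbols agree because the corresponding length-$t$ blocks of $S$ agree, and show the block at offset $t\floor{\ell/t}$ is lexicographically smaller on the $a$-side because it contains the mismatch character, so its rank is smaller. The boundary issues you flag (the \$ delimiters, block exhaustion, and dummy padding) are likewise left implicit in the paper's proof, which only notes the identification holds ``as long as $i$ is small enough,'' so your extra caution does not constitute a divergence from its argument.
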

	\begin{proof}
		Let $\ell=\LCP(S[a..],S[b..])$.
		By definition $S[a..a+\ell-1]=S[b..b+\ell-1]$ and $S[a+\ell]\ne S[b+\ell]$.
		Our goal is to prove that for any $0\le i<\floor{\ell/t}$, $S'[f(a)+i]=S'[f(b)+i]$ and that $S'[f(a)+\floor{\ell/t}]\ne S'[f(b)+\floor{\ell/t}]$.

		Recall that by the definition of $S'$, $S'[f(a)+i]$ is exactly $\rank_{\S}(S_{a+t\cdot i})$  (as long as $i$ is small enough).
		Similarly, $S'[f(b)+i]=\rank_{\S}(S_{b+t\cdot i})$.
		Thus, for any $0\le i<\floor{\ell/t}$ we have
		\begin{align}\label{eq:SA_equality}
		S'[f(a)+i]&=\rank_{\S}(S_{a+t\cdot i})=\rank_{\S}(S[a+ti..a+ti+(t-1)])\\&=\rank_{\S}(S[b+ti..b+ti+(t-1)])=\rank_{\S}(S_{b+t\cdot i})=S'[f(b)+i] \nonumber
		\end{align}
		With similar analysis one can get  $S'[f(a)+\floor{\ell/t}]\ne S'[f(b)+\floor{\ell/t}]$.

		The assumption $S[a..]\prec S[b..]$ means that $S[a+\ell]<S[b+\ell]$.
		Let $k=t\cdot\floor{\ell/t}$, we will focus on $S_{a+k}$ and $S_{b+k}$.
		Let $r=\ell-k$ and notice that $r<t$ (and $r=\ell\modulo t$).
		We will show that $S_{a+\floor{\ell/t}}\prec S_{b+\floor{\ell/t}}$.
		For any $i<r$ since $k+i<k+r=\ell$ we have $S_{a+\floor{\ell/t}}[i]=S[a+k+i]= S[b+k+i]=S_{b+\floor{\ell/t}}[i]$.
		Since $k+r=\ell$ we have $S_{a+\floor{\ell/t}}[r]=S[a+k+r]=S[a+\ell]< S[b+\ell]=S[b+k+r]=S_{b+\floor{\ell/t}}[r]$.
		Therefore, $S_{a+\floor{\ell/t}}\prec S_{b+\floor{\ell/t}}$ and $S'[f(a)+\floor{\ell/t}]=\rank_{\S}(S_{a+\floor{\ell/t}})<\rank_{\S}(S_{b+\floor{\ell/t}})=S'[f(b)+\floor{\ell/t}]$.
		Combining with Equation \ref{eq:SA_equality} we have $S'[f(a)..]\prec S'[f(b)..]$.
	\end{proof}

	\para{First step - compute exact $\LCP$ for $DC_t(S)$.}
	Let $i_1,i_2,\dots,i_{|DC_t(S)|}$ be the indices of $DC_t(S)$ such that for any $j$ we have $S[i_j..]\prec S[i_{j+1}..]$.
	Notice that $i_j=f^{-1}(\SA_{S'}[j])$.
	The first step of the algorithm is to compute the exact value of $\LCP(S[i_j..], S[i_{j+1}..])$ for any $j$.
	For this step the algorithm uses inverse suffix array $\SA^{-1}_{S'}$.
	Recall that for any $i$, $\SA^{-1}_{S'}[i]$ is the index $j$ such that $SA_{S'}[j]=i$.
	By a simple routing, in $O(1)$ rounds, the algorithm distributes the $\SA^{-1}_{S'}$ information to the \cc nodes, such that the node $v$ that stores the $a$th character of $S$ for $a\in DC_t(S)$ will get $\SA^{-1}_{S'}[f(a)]$.
	Moreover, $v$ will get also the value $b$ such that $f(b)=\SA_{S'}[\SA^{-1}_{S'}[f(a)]+1]$ which is the index of the lexicographic successive  suffix among $DC_T(S)$ suffixes.
	In addition $v$ gets $\ell=\LCP_{S'}[\SA^{-1}_{S'}[f(a)]]$ which is exactly $\ell=\LCP(S'[f(a)..],S'[f(b)..])$.
	Now, $v$ creates $2t=O(n^{\eps})$ queries - to get all the characters of $S[a+\ell\cdot t..a+\ell\cdot t+t-1]$ and
	$S[b+\ell\cdot t..b+\ell\cdot t+t-1]$ (each query is for one character).
	Notice that every node holds at most $O(n^{1-\eps})$ indices of $DC_t(S)$, and therefore every node has at most $O(n^{1-\eps}\cdot t)=O(n)$ queries.
	Thus, using \cref{lem:queries_routing} in $O(1)$ rounds all the queries will be answered.
	Using the answers, every index $i_j$ computes the exact value of $\LCP(S[i_j..],S[i_{j+1}..])$.

	Let $\overline\LCP_{S'}$ the array of all the revised $\LCP$ values of $\LCP_{S'}$ i.e., the value of $\overline\LCP_{S'}[i]$ is the exact $\LCP$ of the suffixes $f^{-1}(\SA_{S'}[i])$  and $f^{-1}(\SA_{S'}[i+1])$ which are the lexicographically $i$th and $i+1$th suffixes among $DC_t(S)$.
	We store $\overline{\LCP}_{S'}$ in a distributed manner in the \cc network.

	\para{Second step - compute $\LCP_S$}
	In the second step, the algorithm computes the $\LCP$ of every suffix of $S$ with its lexicographic successive suffix.
	This computation is done similarly to the second step of the $\SA_S$ construction algorithm.
	In every level, the computation of $\LCP_S$ is done after the computation of $\SA_S$.
	Thus, we can use the order of the suffixes of $S$ as an input for this step.
	For every index $i$, let $\hat i$ be the index of the successive suffix to $S[i..]$.
	The node that stores $S[i]$ gets $\hat i$ in $O(1)$ rounds.
	The algorithm uses the same representative objects used to compute $\SA_S$, to compute $\LCP(S[i..],S[\hat i..])=\LCP_S[\SA^{-1}_S[i]]$.
	Recall that the representative objects of $i$ and $\hat i$ are composed of two parts.
	The first part contains $S_i$ and $S_{\hat i}$ which are the substrings of $S$ of length $t$ starting at positions $i$ and $\hat i$.
	The algorithm first compares $S_i$ and $S_{\hat i}$ and if $S_i\ne S_{\hat i}$ then $\LCP(S[i..],S[\hat i..])=\LCP(S_i,S_{\hat i})$.
	If $S_i=S_{\hat i}$, the algorithm uses the second part of the representative objects.
	Recall that in this part the object of an index $a$ stores an array of length $t$, with the ranks (amongs suffix starting at $DC_t(S)$) of suffixes from $DC_t(S)\cap[a..a+t-1]$ and $-1$s.
	Moreover, by \cref{lem:difference_cover} it is guaranteed that there is some $k\in[0..t-1]$ such that $A_i[k]\ne -1$ and $A_{\hat i}[k]\ne -1$.
	Since $S_i=S_{\hat i}$ and $k<|S_i|$, we have  $\LCP(S[i..],S[\hat i..])=k+\LCP(S[i+k..],S[\hat i+k..])$.
	Thus, we have the ranks of the suffixes $i+k$ and $\hat i+k$ among suffixes of $S$ starting at positions in $DC_t(S)$ (which are $\SA^{-1}_{S'}[f(i+k)]$ and $\SA^{-1}_{S'}[f(\hat i+k)]$).
	Due to the following fact,  $\LCP(S[i+k..],S[\hat i+k..])=\min(\overline{\LCP}_{S'}[j]\mid \SA^{-1}_{S'}[f(i+k)]\le j \le  \SA^{-1}_{S'}[f(\hat i+k)])$.
	(This is because $\overline{\LCP}_{S'}$ is an array of the $\LCP$ values of monotone sequence of strings, and $S[i+k),S[\hat i+k)$ are both elements in the sequence).

	\begin{fact}
		Let  $T_1,T_2,\dots, T_k$ be a sequence of strings such that $T_i\prec T_{i+1}$ and $T_i$ is not a prefix of $T_{i+1}$ for all $i\in[1..k-1]$.
		Then, for any $a<b$ we have $\LCP(T_a,T_b)=\min\{\LCP(T_i,T_{i+1})\mid i\in[a..b-1]\}$.
	\end{fact}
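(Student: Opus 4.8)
The plan is to prove the \textbf{Fact} by induction on $b-a$, with the crux being a three-string version of the statement.

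\emph{Step 1 (the three-string lemma).} I would first prove that for any three strings $X\prec Y\prec Z$ we have $\LCP(X,Z)=\min\{\LCP(X,Y),\LCP(Y,Z)\}$. Set $p=\LCP(X,Y)$, $q=\LCP(Y,Z)$ and $m=\min\{p,q\}$. Any LCP of value at least $m$ forces both strings to have length at least $m$ and to agree on their first $m$ characters, so $X[1..m]=Y[1..m]=Z[1..m]$ and hence $\LCP(X,Z)\ge m$. For the reverse inequality I would split according to whether $X$ is a proper prefix of $Y$: if so then $p=|X|$, and either $p=m$ (so $X$ is also a prefix of $Z$ and $\LCP(X,Z)=m$) or $p>m=q$, in which case $q<|X|<|Y|$ forces $Y$ not to be a prefix of $Z$, giving $X[q+1]=Y[q+1]<Z[q+1]$ and $X[1..q]=Z[1..q]$; if $X$ is not a prefix of $Y$ then $p<\min\{|X|,|Y|\}$ and $X[p+1]<Y[p+1]$, and a short sub-split on whether $p=m$ or $p>m$ again produces a position $\le m+1$ at which $X$ and $Z$ first differ. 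In every branch $\LCP(X,Z)\le m$, which proves the lemma. (Only the strictness of the $\prec$'s is used here; the hypothesis that $T_i$ is not a prefix of $T_{i+1}$ is not actually needed for this Fact.)

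\emph{Step 2 (induction).} The base case $b=a+1$ is trivial. For $b>a+1$, the induction hypothesis gives $\LCP(T_a,T_{b-1})=\min\{\LCP(T_i,T_{i+1})\mid i\in[a..b-2]\}$, and applying Step~1 to the strict chain $T_a\prec T_{b-1}\prec T_b$ yields $\LCP(T_a,T_b)=\min\{\LCP(T_a,T_{b-1}),\LCP(T_{b-1},T_b)\}$; substituting the induction hypothesis rewrites the right-hand side as $\min\{\LCP(T_i,T_{i+1})\mid i\in[a..b-1]\}$, as required.

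\emph{Main obstacle.} Everything is routine except the case analysis in Step~1: the lexicographic order behaves differently when one string is a prefix of another, so those sub-cases must be handled explicitly, and before comparing the characters $X[m+1],Y[m+1],Z[m+1]$ one must verify from the length inequalities that those characters exist. Once the three-string lemma is established the induction is immediate.
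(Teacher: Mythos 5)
Your proof is correct, but it takes a different route from the paper's. The paper argues directly along the whole chain: setting $\ell=\min_{i\in[a..b-1]}\LCP(T_i,T_{i+1})$, it shows that $T_a,\dots,T_b$ all agree on their first $\ell$ characters, that the characters at position $\ell+1$ form a non-decreasing sequence $T_a[\ell+1]\le\cdots\le T_b[\ell+1]$, and that the inequality is strict at the index attaining the minimum, whence $T_a[\ell+1]<T_b[\ell+1]$ and $\LCP(T_a,T_b)=\ell$; here the ``$T_i$ is not a prefix of $T_{i+1}$'' hypothesis is exactly what guarantees the character at position $\ell+1$ exists in every string, making that argument clean and short. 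You instead prove a three-string lemma, $\LCP(X,Z)=\min\{\LCP(X,Y),\LCP(Y,Z)\}$ for any $X\prec Y\prec Z$, with an explicit case analysis on prefix relationships, and then induct on $b-a$ using transitivity of $\prec$. Your case analysis is sound (the delicate branches, e.g.\ ruling out $|Z|=\LCP(Y,Z)$ via $Y\prec Z$, all work out), and it buys a genuinely stronger statement: as you note, the no-prefix hypothesis is not needed for the Fact, whereas the paper's argument uses it tacitly and would need an extra case without it. The price is a longer, more branch-heavy proof where one must repeatedly check that the compared characters exist --- the very point you flag as the main obstacle --- while the paper's single-position chain argument dispatches the stated Fact in a few lines.
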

	\begin{proof}
		Let $\ell=\min\{\LCP(T_i,T_{i+1} \mid i\in[a..b-1])\}$ and let $i'\in[a..b-1]$ be an index with $\LCP(T_{i'},T_{i'+1})=\ell$.
		For any $1\le j\le \ell$ we have $T_a[j]=T_{a+1}[j]=T_{a+2}[j]=\cdots=T_{b}[j]$ since for every $i\in[a..b-1]$ we have $\LCP(T_i,T_{i+1})\ge \ell$ and by a straightforward induction.
		On the other hand, 	$T_a[\ell]\le T_{a+1}[\ell]\le T_{a+2}[\ell]\le\cdots\le T_{b}[\ell]$, because $T_i\prec T_{i+1}$ for all $i$.
		Moreover, since $T_{i'}[\ell+1]\ne T_{i'+1}[\ell+1]$, it must be that $T_{i'}[\ell+1]< T_{i'+1}[\ell+1]$ and therefore $T_{a}[\ell+1]< T_{b}[\ell+1]$.
		Thus, $\LCP(T_a,T_b)=\ell$.
	\end{proof}
	Thus, for every $i$, if $\LCP(S[i..],S[\hat i..])$ is not determined by $S_i$ and $S_{\hat i}$, the algorithm has to perform one range minimum query (RMQ) on $\overline\LCP_{S'}$.
	Now, we will describe how to compute all these range minimum queries in $O(1)$ rounds.
	This lemma might be of independent interest.

	\begin{definition}
		Given an array $A$ and two indices $i,j$ such that $1\le i\le j \le |A|$, a Range Minimum Query $\RMQ_A(i,j)$ returns the minimum value $x$ in the range $A[i..j]$.
	\end{definition}

	\begin{lemma}\label{lem:RMQ}
		Let $A$ be an array of $O(n^2)$ numbers (each number of size $O(\log n)$ bits), distributed among $n$ nodes in the \cc model, such that each node holds a subarray of length $O(n)$.
		In addition, every node has $O(n)$ $\RMQ$ queries.
		Then, there is an algorithm that computes for each node the results of all the $\RMQ$ queries in $O(1)$ rounds.
	\end{lemma}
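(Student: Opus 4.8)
The plan is to reduce the problem to a constant number of broadcasts together with a single invocation of \cref{lem:queries_routing}. View $A$ as partitioned into $n$ consecutive \emph{blocks} $A_1,\dots,A_n$, where $A_t$ is the subarray held by node $v_t$; since every node holds $\Oh(n)$ entries, each block has length $\Oh(n)$. First, every node broadcasts the length of its block (one word, one round); after computing prefix sums locally, every node knows, for each global index, which block contains it and the offset inside that block. Second, each node $v_t$ locally computes $\beta_t\coloneqq\min A_t$ and broadcasts it; after one more round every node holds the whole array $\beta=(\beta_1,\dots,\beta_n)$ of block minima, which fits in $\Oh(n)$ words, and can therefore answer any $\RMQ_\beta(\cdot,\cdot)$ purely by local computation (local computation being free in the \cc model, no auxiliary structure is even needed, although a linear-space one exists).

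Now take an $\RMQ_A(i,j)$ query held at some node $v$, let $t_i\le t_j$ be the blocks of $i$ and $j$, and let $p_i,p_j$ be the offsets of $i$ and $j$ inside those blocks. If $t_i=t_j$, the answer equals the minimum of $A_{t_i}$ over offsets $p_i..p_j$, so $v$ creates one \emph{sub-query} with resolving node $v_{t_i}$ and content $(p_i,p_j)$. If $t_i<t_j$, the range splits into a suffix of $A_{t_i}$ (offsets $p_i..|A_{t_i}|$), the whole blocks $A_{t_i+1},\dots,A_{t_j-1}$, and a prefix of $A_{t_j}$ (offsets $1..p_j$); node $v$ evaluates the middle part itself as $\RMQ_\beta(t_i+1,t_j-1)$ (vacuous and skipped when $t_j=t_i+1$) and creates two sub-queries, one with resolving node $v_{t_i}$ and content $(p_i,|A_{t_i}|)$, and one with resolving node $v_{t_j}$ and content $(1,p_j)$. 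Thus each of the $\Oh(n)$ $\RMQ$ queries of $v$ spawns at most two sub-queries, so $v$ holds $\Oh(n)$ sub-queries; each is a pair of a resolving node and $\Oh(\log n)$ bits of content, is resolvable by that node from its own block alone (by brute-force local computation if desired), and has an $\Oh(\log n)$-bit result. Hence \cref{lem:queries_routing} applies and, in $\Oh(1)$ rounds, delivers to every node the answers to all of its sub-queries. Finally each node locally combines, for each of its $\RMQ$ queries, the at-most-one locally computed middle value with the at-most-two sub-query answers by taking their minimum. Since the only communication consists of the two preliminary broadcasts and the single call to \cref{lem:queries_routing}, the whole procedure runs in $\Oh(1)$ rounds.

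\textbf{Main obstacle.}
The subtle point — and the reason the naive idea of routing each query directly to the node holding its endpoints' blocks fails — is that the blocks referenced by the $\Oh(n^2)$ queries may be spread adversarially; in the extreme, every query could touch block $1$, so a single resolving node would have to serve $\omega(n)$ requests, which cannot be routed in a constant number of rounds by elementary means. Absorbing exactly this difficulty is what \cref{lem:queries_routing} is designed for, so invoking it as a black box is the crux; everything else is the routine decomposition of a range into a suffix, a run of whole blocks, and a prefix, together with the bookkeeping of block boundaries obtained from one round of broadcasts.
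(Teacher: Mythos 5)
Your proposal is correct and follows essentially the same route as the paper's proof: broadcast subarray lengths and per-node minima, decompose each cross-node query into a suffix part, a run of whole blocks answered locally from the broadcast minima, and a prefix part, and resolve the $O(n)$ resulting boundary sub-queries per node via \cref{lem:queries_routing}. No substantive differences to report.
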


	\begin{proof}

		First, each node broadcasts its subarray length, i.e. how many numbers it contains.
		Second, each node broadcasts the minimum number within the node.

		There are two types of $\RMQ$ queries.
		The first type is where the range of the $\RMQ$ is contained in one specific node, i.e. both $i$ and $j$ of the $\RMQ$ are on the same node.
		The seconde type is where $i$ and $j$ are not in the same node.
		For this case, we separate the $\RMQ$ into three ranges.
		The first range is $i$ to $i'$, where $i'$ is the index of the last number in the node that contains the $i$'th number.
		The third range is $j'$ to $j$, where $j'$ is the index of the first number in the node that contains the $j$'th number.
		The second range is $i'+1$ to $j'-1$, i.e. all the indices in the intermediate nodes (this range might be empty).
		To calculate $\RMQ(i,j)$, it is enough to calculate $\RMQ$ on the three ranges, since $\RMQ(i,j) = \min(\RMQ(i,i'),\RMQ(i'+1,j'-1),\RMQ(j',j))$.
		It is easy to calculate $\RMQ(i'+1,j'-1)$, since on the second step of the algorithm, each node broadcasts its minimum number.
		We are left with two $\RMQ$ queries to two specific resolving nodes.

		To conclude, after the second step, the $O(n)$ $\RMQ$ queries on each node can be calculated with $O(n)$ $\RMQ$ queries to specific resolving nodes.
		Since both an $\RMQ$ to a specific resolving node and the result can be encoded with $O(\log n)$ bits.
		Hence, using \cref{lem:queries_routing}, in $O(1)$ rounds all the $O(n)$ $\RMQ$ queries to specific resolving nodes are resolved.
	\end{proof}

	\para{Complexity.}
	The overhead of computing $\LCP_S$ from $\LCP_{S'}$ is just in constant number of rounds per level of the recursion.
	So, in total the computation of $\SA_S$ and $\LCP_S$ takes $O(\log\log n)$ rounds.

	\bibliography{refs}
\newpage
	\appendix

	\section{Sorting Objects of Size \texorpdfstring{$O(n)$}{O(n)}}\label{sec:sorting_eps_0}

	In this section we prove \cref{thm:sorting_esp_0}.
	Notice that \cref{prob_sorting} with $\eps=0$ means that every key is of size $O(n)$ words of space.

	\subsection{Upper Bound}
In this section we show an algorithm that sorts objects of size $O(n)$ words in $O(\log n)$ rounds.
Our algorithm simulates an execution of a sorting network.
A sorting network with $N$ wires (analog to cells of an input array) sorts comparable objects as follows.
The network has a fixed number of parallel levels, each level is composed of at most $N/2$ comparators.
A comparator compares two input objects and swap their positions if they are out of order.
The number of parallel levels of a sorting network is called the \emph{depth} of the sorting network.

	Ajtai, Komlós, and Szemerédi (AKS)~\cite{AKS83} described a sorting network of depth $O(\log N)$ for every $N\in\N$.
	Notice that in our setting $N\in O(n^2)$.
	We prove the following lemma.

	\begin{lemma}\label{lem:aks_simulation}
		There is an algorithm that solves \cref{prob_sorting} with $\eps=0$ in $O(\log n)$ rounds.
	\end{lemma}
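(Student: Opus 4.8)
The plan is to simulate the AKS sorting network~\cite{AKS83} on $N \in O(n^2)$ wires, where wire $w$ initially holds the object stored at the $w$th position in the conceptual global input array (padding with dummy $+\infty$ objects to round $N$ up to the next value the network is defined for). Since $\eps = 0$, each object occupies $O(n)$ words, so the $N = O(n^2)$ wires carry $O(n^2 \cdot n) = O(n^3)$ words in total; we therefore cannot keep one object per node. Instead, I would partition the $N$ wires into $n$ contiguous blocks of $O(n)$ wires each and assign block $b$ to node $v_b$. Each node then physically stores the $O(n)$ objects currently residing on its block of wires, for a total of $O(n \cdot n) = O(n^2)$ words per node --- but this exceeds the $O(n)$-words-per-node budget of the \cc model. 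To respect the memory bound I would instead use the standard trick of allocating $O(n)$ \emph{virtual} nodes per real node via \cref{lem:aux_ndes}: there are $O(n^2)$ virtual nodes, one per wire, each storing its single $O(n)$-word object, and each round of the virtual network is simulated in $O(1)$ rounds of the real $n$-node network at the cost of an $O(1)$ factor, provided the communication pattern of a network level can be routed efficiently.

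The key steps, in order, are: (1) set up the wire-to-(virtual-)node assignment and pad to $N$ wires; (2) for each of the $O(\log N) = O(\log n)$ levels of the AKS network, realize that level in $O(1)$ rounds; (3) after all levels, wire $w$ holds the object of global rank $w$ (among the non-dummy objects, after discarding the padding), so each object's holder learns its rank and routes it back to the original owner via Lenzen's routing scheme~\cite{Lenzen13}. For step (2), a single level consists of up to $N/2$ disjoint comparators, each pairing two wires $w, w'$; each comparator is executed by having the two virtual nodes holding wires $w$ and $w'$ exchange their $O(n)$-word objects, compare them (local computation, free in this model), and keep the smaller/larger appropriately. Each virtual node sends and receives exactly one $O(n)$-word object in a level, i.e. is the source and target of $O(n)$ words; by \cref{lem:aux_ndes} this virtual-network round is simulated in $O(1)$ real rounds, using Lenzen's routing for the underlying $O(n)$-messages-per-node instance (here I would note that the wiring of a fixed network level is known to all nodes since AKS is an explicit, oblivious network, so every node can locally compute which messages to send, matching the hypothesis of \cref{thm:gen_routing} / Lenzen's scheme). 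Summing over the $O(\log n)$ levels gives $O(\log n)$ rounds total; steps (1) and (3) add only $O(1)$ rounds.

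The main obstacle I anticipate is the bookkeeping around the memory bound and the auxiliary-node simulation: one must check that each virtual node indeed holds only $O(n)$ words (one object plus a little metadata such as its original owner and index, used for tie-breaking exactly as in \cref{prob_sorting}'s specification), that the total number of virtual nodes is $O(n \cdot n) = O(n^2)$ so that \cref{lem:aux_ndes} is not overloaded --- which requires grouping $n$ wires per real node and either time-multiplexing the $n$ virtual nodes within a real node or, equivalently, observing that a real node participating in a level need only handle $O(n)$ incoming and $O(n)$ outgoing objects and can pipeline the $n$ comparators it is responsible for in $O(1)$ amortized rounds via Lenzen routing. A secondary point to address carefully is that the AKS depth bound hides a large constant, which is irrelevant here since it is absorbed into the $O(\log n)$, and that $N = \Theta(n^2)$ so $\log N = \Theta(\log n)$; no $\log\log$ subtlety arises. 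Once these routing-and-memory details are discharged, correctness is immediate from the correctness of the AKS network.
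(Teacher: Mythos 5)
Your high-level idea---simulating the AKS network level by level in $O(1)$ rounds per level---matches the paper's, but your wire-to-node assignment has a genuine gap: it ignores that the objects have non-uniform sizes, and that is exactly where the difficulty of the lemma lies. First, a small accounting slip: by \cref{prob_sorting} the total size of all objects is $O(n^2)$ words, not $O(n^3)$; the number of objects $N$ can range from $\Theta(n)$ to $\Theta(n^2)$. Your plan of one virtual node per wire needs up to $\Theta(n^2)$ auxiliary nodes, whereas \cref{lem:aux_ndes} supports only $O(n)$ of them; its $O(1)$-rounds-per-simulated-round guarantee rests on each real node carrying only $O(n)$ words of virtual traffic, which fails once a real node hosts $n$ virtual nodes each holding an $O(n)$-word object. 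Your fallback---a fixed partition of the wires into contiguous blocks of $N/n$ wires per real node, with the comparators of a level ``pipelined in $O(1)$ amortized rounds''---does not work either: since object sizes vary, at some level a single node's block can hold objects totalling $\Theta(n^2)$ words (for instance, $\Theta(n)$ objects of size $\Theta(n)$ that become adjacent in the sorted order while the remaining wires carry tiny objects), and receiving them requires $\Omega(n)$ rounds for that level alone, no matter how the comparators are scheduled. So the claimed $O(1)$ rounds per level is unjustified in the worst case.

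The missing ingredient is size-adaptive load balancing, which is the heart of the paper's proof. There, node $v_i$ is responsible for $O(n)$ comparators of the current level, but the objects are not sent to $v_i$ directly: the owners first send only the sizes and indices of the relevant objects (small metadata), $v_i$ computes the total volume $S_{v_i}$ destined for its comparators, spawns $a_i=\lceil S_{v_i}/n\rceil$ auxiliary nodes---summing to $O(n)$ over all $i$ precisely because the total object volume is $O(n^2)$, so \cref{lem:aux_ndes} applies---and uses \cref{lem:dist_info} to split its comparators among these auxiliary nodes so that each receives only $O(n)$ words. Only then are the objects routed to the comparators, and Lenzen's routing gives $O(1)$ rounds per level; a further convenience is that comparators swap the objects' metadata indices rather than maintaining physical wire contents, so after the last level each object's index is its rank. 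Your argument would need to be repaired along these lines: without the sizes-first, volume-proportional assignment of comparators to $O(n)$ auxiliary nodes, the per-level routing bound does not hold.
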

		\begin{proof}

		We show how to simulate an execution of each level of AKS sorting network for $N$ input objects in the \cc model in $O(1)$ rounds.

		First, each node broadcasts the number of objects it stores, and then each node calculates $N$, the number of objects in the network, and produces the AKS sorting network with $N$ wires.
		In addition, each node attaches to every object within the node the global index of the object as a metadata.

		On each level, there are $O(n^2)$ comparators.
		Each node $v_i$ is responsible for the $[(i-1)\ceil{N/n}+1..i\ceil{N/n}]= O(n)$ comparators (if exist).
		To do so, each node with inputs to a comparator under $v_i$'s responsibility, sends for every such input the size and the index of the input to $v_i$.
		This routing takes $O(1)$ rounds.
		Denote the sum of the sizes of the objects (inputs) under $v_i$ responsibility as $S_{v_i}$.
		Then, $v_i$ creates $a_i=\ceil{S_{v_i}/n}$ auxiliary nodes.
		Notice that  the total number of auxiliary nodes is at most \[\sum_{i=1}^n a_i=\sum_{i=1}^n \ceil{S_{v_i}/n}\le \sum_{i=1}^n 1 + S_{v_i}/n\le n+O(n^2)/n=O(n).\]
		By \cref{lem:aux_ndes} the algorithm can simulate each round with the auxiliary nodes on the original network in $O(1)$ rounds.
		Then, $v_i$ calculates a partition of the comparators between $v_i$'s auxiliary nodes such that for each auxiliary node of $v_i$, the total size of objects for $v_i$'s comparators to this auxiliary node is $O(n)$ (by \cref{lem:dist_info}, there is such a partition).
		Then, let $u$ be a node that holds an object $B$ that should be sent to one of $v_i$'s comparators.
		$v_i$ sends to $u$, which auxiliary node is the target of $B$, and then $u$ sends $B$ to this auxiliary node.
		By Lenzen's routing scheme, this is done in $O(1)$ rounds.

		Finally, all the comparators execute the comparisons, and if a swap is needed, the objects swap their metdata indices.
		Since a swap of wires in a comparator is equivalent to a swap of indices in our simulation, we have that after the last level, the metadata index of each object is its rank among the objects.

		To conclude, it takes $O(1)$ rounds to simulate each level of a sorting network.
		There are at most $O(\log (n^2))=O(\log n)$ levels to AKS sorting network, and therefore the running time for sorting objects of size $O(n)$ is $O(\log n)$ rounds.
	\end{proof}

	\subsection{Lower Bound}
	In this section we prove the lower bound of \cref{thm:sorting_esp_0}.

	\begin{lemma}\label{lem:LB}
		Every comparison based algorithm that solves \cref{prob_sorting} with $\eps=0$ must take $\Omega(\log n/\log\log n)$ rounds.
	\end{lemma}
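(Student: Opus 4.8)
Here is my plan. The idea is to show that a comparison-based \cc algorithm on objects of size $\Theta(n)$ words is, in disguise, a bounded-round algorithm in the classical \emph{parallel comparison model}, in which only $O(n)$ comparisons can be resolved per round; the wanted bound then follows from the classical round lower bound for comparison sorting.

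First I would fix the hard instance: $n$ objects, each of size $\Theta(n)$ words (so there are $\Theta(n)$ of them and their total size is $\Theta(n^2)$ words, as required for $\eps=0$), and I treat each object as an opaque item on which the only primitive is $\mathrm{cmp}(\cdot,\cdot)$. Since the algorithm is comparison-based, every message a node sends in round $t$ is a function of the object indices it holds and of the outcomes of comparisons resolved in rounds $1,\dots,t-1$ — never of raw object contents beyond what comparisons revealed. Hence an $r$-round comparison-based \cc algorithm is precisely an $r$-round parallel comparison algorithm: in each round it commits to a batch of pairs to compare (the round-$t$ batch may depend on the outcomes of earlier rounds but not of round $t$, because all round-$t$ messages are simultaneous), and after round $r$ it must have determined $\rank_\B(B_j)$ for every $j$, i.e.\ the whole sorted order.

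Next I would bound the number of comparisons resolved over all $r$ rounds by $O(rn)$. A node stores $O(n)$ words and therefore holds only $O(1)$ of the $\Theta(n)$-word objects at any time; evaluating $\mathrm{cmp}(B_i,B_j)$ requires both arguments, in full, at one node. Each object that is newly delivered to a node costs $\Theta(n\log n)$ bits of communication and, once there, participates in only $O(1)$ comparisons with that node's $O(1)$ residents before it must be evicted to make room. Since the total communication in one round is $O(n^2\log n)$ bits ($n$ nodes, $n$ messages of $O(\log n)$ bits), the number of comparisons resolved in a round that are not already implied by transitivity of earlier outcomes is $O(n)$; transitive consequences are free but, as in the parallel comparison model, are not counted. (The same $O(n)$-per-round bound holds if one instead broadcasts whole objects: a broadcast uses $\Theta(n^2\log n)$ bits and enables $O(n)$ comparisons against the broadcast object.) Summing over the $r$ rounds yields at most $O(rn)$ resolved comparisons. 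Then I invoke the classical lower bound of H\"{a}ggkvist and Hell (and Alon--Azar for the randomized case): any comparison-based algorithm that sorts $n$ elements in $r$ rounds must perform $\Omega(n^{1+1/r})$ comparisons in total. Combining, $c\,rn\ge n^{1+1/r}$, i.e.\ $r\log(cr)\ge\log n$, which forces $r=\Omega(\log n/\log\log n)$ — concretely, if $r\le\tfrac12\log n/\log\log n$ then $n^{1/r}\ge(\log n)^2$, so the required $n(\log n)^2$ comparisons exceed the available $O(rn)=O(n\log n/\log\log n)$, a contradiction.

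The main obstacle is the per-round comparison bound: one must argue that resolving a comparison of two opaque $\Theta(n)$-word objects genuinely costs $\Omega(n\log n)$ bits \emph{even amortized over rounds}, ruling out schemes that transport an object piecewise over many rounds or compute comparison outcomes in fragments. This is what the ``opaque comparable + $O(n)$-word memory $\Rightarrow$ $O(1)$ resident objects, each in $O(1)$ comparisons per delivery'' argument is meant to pin down cleanly. If one prefers to avoid citing the H\"{a}ggkvist--Hell bound, the last paragraph can be replaced by a self-contained adversary argument: maintain a partial order on the objects consistent with all revealed outcomes, answer each round's $O(n)$ comparisons so as to keep the partial order of small height and large width, and show its number of linear extensions cannot be driven to $1$ in $o(\log n/\log\log n)$ rounds.
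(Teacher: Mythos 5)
Your route is genuinely different from the paper's and can be made to work, but as written it has two gaps. First, your per-round budget of $O(n)$ comparisons rests on the claim that a node keeps only $O(1)$ resident objects because it ``stores $O(n)$ words'' and must evict. Neither \cref{prob_sorting} nor \cref{thm:sorting_esp_0} imposes a bound on working memory, and in the \cc model local memory and local computation are unbounded; only the $O(n)$ words of input per node and the per-round bandwidth are constrained. Second, identifying one \cc round with one round of Valiant's parallel comparison model is not accurate: comparisons are local computations, not messages, so within a single round a node may compare its resident objects fully adaptively (e.g., sort them), which is not a non-adaptively committed batch. Both issues are repairable using bandwidth alone: a node starts with $O(1)$ keys of size $\Theta(n)$ words and can receive at most $O(rn)$ words in $r$ rounds, hence ever holds $O(r)$ keys; replacing its adaptive local comparisons in each round by all pairwise comparisons among its residents yields a legitimate $r$-round parallel-comparison algorithm with $O(nr^2)$ comparisons per round, i.e.\ $O(nr^3)$ in total, and $c\,nr^3\ge n^{1+1/r}$ still forces $r=\Omega(\log n/\log\log n)$. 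You should also be explicit that the H\"{a}ggkvist--Hell bound is applied with $r$ growing with $n$, so the constants must not degrade with the number of rounds.

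The paper's proof is more elementary and avoids round-restricted sorting bounds entirely: since each key occupies $\Theta(n)$ words, each node receives at most $O(r)$ keys during the whole execution; in a simulating algorithm the node inserts each arriving key into the sorted order of its previously received keys by binary search, spending $O(\log i)$ comparisons on the $i$th arrival, hence $O(r\log r)$ comparisons per node and $O(nr\log r)$ in total, while still being able to answer every comparison the original algorithm would make. The classical $\Omega(n\log n)$ bound on comparison-based sorting then gives $r\log r=\Omega(\log n)$, i.e.\ $r=\Omega(\log n/\log\log n)$. Your approach buys a clean reduction to the parallel comparison model at the price of heavier machinery and a more delicate per-round accounting; the paper's compression-to-binary-insertion argument needs only the information-theoretic lower bound.
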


	\begin{proof}
		Let $A$ be an algorithm that solves \cref{prob_sorting} with $n$ nodes and $n$ keys, each of size $\Theta(n)$, in $r$ rounds.
		We describe another algorithm $A'$ that also runs in $r$ rounds, solves \cref{prob_sorting} and performs $O(nr\log r)$ comparisons.
		Thus, for $r=o(\log n/\log\log n)$ we get a sorting algorithm of $n$ keys with $O(n\cdot r\log r)=o(n\log n)$ comparisons, which contradicts the celebrated comparison based sorting lower bound by Ford and Johnson~\cite{FJ59}.

		The algorithm $A'$ works as follows.
		Let us focus on one specific node $v$.
		$v$ will simulate $A$, but will ignore all the comparisons performed in $A$.
		Let $x_1,x_2,\dots$ be the keys $v$ receives during the running of the algorithm, due to their arrival order (breaking ties arbitrarily).
		In $A'$, the node $v$ maintains at any time the sorted order of all the keys that $v$ received so far.
		Whenever $v$ receives a new key $x_i$, $v$ performs a binary search on the keys $\{x_1,\dots,x_{i-1}\}$ (which are maintained in a sorted order).
		This takes $O(\log i)$ comparisons, and then $v$ updates the sorted order of all the keys $x_1,\dots,x_i$ with no additional comparisons.

		Since $v$ maintains at any time the sorted order of all the keys $v$ has received until this time, $v$ can simulate any operation that $v$ has to perform due to $A$, even if the operation requires some comparison between keys.

		We focus on the case where every key is of size $\Theta(n)$ words of space.
		In this case a node $v$ must get $\Omega(n)$ words to receive a key.
		Since $A'$ runs in $r$ rounds, $v$ gets at most $O(r)$ keys.
		The total number of comparisons $v$ performs while running $A'$ is $\sum_{i=1}^{O(r)}\log i=O(r\log r)$.
		There are $n$ nodes in the \cc, and therefore $O(n\cdot r\log r)$ comparisons in total across all the nodes.
	\end{proof}

\section{Answer Queries in the \cc Model}\label{sec:queries_routing}

		In this section we prove \cref{lem:queries_routing}.
		Recall that each node has $O(n)$ queries, such that each query is a pair of a resolving node, and the content of the query which is encoded in $O(\log n)$ bits.
		Moreover, the query can be resolved by the resolving node, and the size of the result is $O(\log n)$ bits.
		We show an algorithm that takes $O(1)$ rounds for each node to receive all the results of its queries.

	\begin{proof}
    [{Proof of \cref{lem:queries_routing}}]
		First, the algorithm sorts all the $O(n^2)$ queries due to the resolving nodes (i.e. the first element of the pair) with lenzen's sorting algorithm~\cite{Lenzen13} (in the sorted queries we assume that with each query we also have its ranking in the sorting).

		The next step goal is for all the nodes to know how many queries each node should resolve.
		To do so, the algorithm broadcasts $(r,q_r)$, where $r$ is the rank of query $q_r$, for every query $q_r$ such that the resolving node of $q_r$ is different than the resolving node of $q_{r+1}$ (notice that each node should send its first query in the sorting to the previous node for $q_r$ and $q_{r+1}$ which are not in the same node), as well as the $(r,q_r)$ of the last query in the sorting.
		By \cref{thm:gen_routing}, this is done in $O(1)$ rounds, since each node is the target of $O(n)$ messages, and each node locally computes the messages it desires to send from at most $O(n\log n)$ bits.
		Now, all the nodes know how many queries each node should resolve.

		For any resolving node $u_i$ let $s_i$ be the number of queries to node $u_i$.
		Notice that the average number of queries in the network is $\frac{1}{n}\sum_{i=1}^n s_i = O(n)$.

		\para{Creating auxiliary nodes.}
		For every original node $u_i$, the algorithm creates $a_i=\lceil{\frac{s_i}{n}}\rceil$ auxiliary nodes.
		Notice that  the total number of auxiliary nodes is at most $\sum_{i=1}^n a_i=
		\sum_{i=1}^n \lceil{\frac{s_i}{n}}\rceil \le
		\sum_{i=1}^n (1+\frac{s_i}{n})=
		n+\frac{1}{n}\sum_{i=1}^n s_i\le n+O(n)=O(n)$.
		By \cref{lem:aux_ndes} the algorithm can simulate each round with the auxiliary nodes on the original network in $O(1)$ rounds.

		Let $u_i$ be an original node with $a_i$ auxiliary nodes.
		Using \cref{thm:gen_routing}, after $O(1)$ rounds each of the $a_i$ nodes holds a copy of $u_i$'s information.

		Next, the $j$th query to node $u_i$ (which is easy to calculate since all the nodes know how many queries each node should resolve) is sent to the $\lceil{\frac{j}{n}}\rceil$ copy of $u_i$.
		Each node sends $O(n)$ queries and each copy of a resolving node receives at most $n$ queries.
		By Lenzen's routing scheme~\cite{Lenzen13} this is done in $O(1)$ rounds.

		Now, each of the $O(n^2)$ queries is in a copy its resolving node.
		The copies of the resolving nodes resolve the queries and send the results back to the original nodes, using  Lenzen's routing scheme~\cite{Lenzen13} (we assume that each query stores as a metadata the original node that holds the query).
	\end{proof}

\end{document}